\newtheorem{theorem}{Theorem}[section]
\newtheorem{lem}{Lemma}[section]
\newtheorem{rem}{Remark}[section]
\newtheorem{prop}{Proposition}[section]
\newtheorem{cor}{Corollary}[section]
\newcounter{hypA}
\date{}
\begin{document}

\begin{center}

{\Large \textbf{The Alive Particle Filter}}

\vspace{0.5cm}

BY AJAY JASR$\textrm{A}^1$, ANTHONY LEE$^{2}$, CHRIS YAU$^{3}$ \& XIAOLE ZHANG$^{3}$

{\footnotesize $^{1}$Department of Statistics \& Applied Probability,
National University of Singapore, Singapore, 117546, SG.}\\
{\footnotesize E-Mail:\,}\texttt{\emph{\footnotesize staja@nus.edu.sg}}

{\footnotesize $^{2}$Department of Statistics,
University of Warwick, Coventry, CV4 7AL, UK.}\\
{\footnotesize E-Mail:\,}\texttt{\emph{\footnotesize anthony.lee@warwick.ac.uk}}

{\footnotesize $^{3}$Department of Mathematics,
Imperial College London, London, SW7 2AZ, UK.}\\
{\footnotesize E-Mail:\,}\texttt{\emph{\footnotesize c.yau@ic.ac.uk}}, \texttt{\emph{\footnotesize x.zhang11@ic.ac.uk}}

\end{center}

\begin{abstract}
In the following article we develop a particle filter for approximating Feynman-Kac models with indicator potentials.
Examples of such models include approximate Bayesian computation (ABC) posteriors associated with hidden Markov models (HMMs)
or rare-event problems. Such models require the use of advanced particle filter or Markov chain Monte Carlo (MCMC)
algorithms e.g.~Jasra et al.~(2012), to perform estimation.
One of the drawbacks of existing particle filters, is that they may `collapse', in that the algorithm may terminate early, due to the indicator potentials.
In this article, using a special case of the locally adaptive particle filter in Lee et al.~(2013), which is closely related to Le Gland \& Oudjane (2004), we use an algorithm which can deal with this latter problem,
whilst introducing a random cost per-time step. This algorithm is investigated from a theoretical perspective and several results are given which help 
to validate the algorithms and to provide guidelines for their implementation. In addition, we show how this algorithm can be used within MCMC,
using particle MCMC (Andrieu et al.~2010). Numerical examples are presented for ABC approximations of HMMs.
\\
\textbf{Key Words:} Particle Filters, Markov Chain Monte Carlo, Feynman-Kac Formulae.
\end{abstract}

\section{Introduction}\label{sec:intro}

Let $\{(\mathcal{E}_n,\mathscr{E}_n)\}_{n\geq 1}$ be a sequence of measurable spaces, $\{G_n(x) = \mathbb{I}_{\mathsf{B}_n}(x)\}_{n\geq 1}$, $(x,\mathsf{B}_n)\in \mathsf{E}_n
\times\mathscr{E}_n$, $\mathsf{B}_n\subset \mathsf{E}_n$, be a sequence of indicator potentials and $\{M_n:\mathsf{E}_{n-1}\times\mathscr{E}_n\rightarrow[0,1]\}_{n\geq 1}$, with $x_0\in\mathsf{E}_0$ a fixed point, be a sequence
of Markov kernels. Then for the collection of bounded and measurable functions $\varphi\in\mathcal{B}_b(\mathsf{E}_n)$ the $n-$time Feynman-Kac marginal is:
$$
\eta_n(\varphi) := \frac{\gamma_n(\varphi)}{\gamma_n(1)}, \quad n\geq 1
$$
assuming that $\gamma_n(\varphi) = \mathbb{E}_{x_0}[\prod_{p=1}^{n-1} G_{p}(X_p)]$ is well-defined, where $\mathbb{E}_{x_0}[\cdot]$ is the expectation w.r.t.~the law of an inhomogeneous Markov chain with transition kernels $\{M_n\}_{n\geq 1}$. Such models appear routinely in the statistics and applied probability literature including:
\begin{itemize}
\item{ABC approximations (as in, e.g.,~Del Moral et al.~(2012))}
\item{ABC approximations of HMMs (Dean et al.~2010;Jasra et al.~2012)}
\item{Rare-Events problems (as in, e.g.,~C\'erou et al.~(2012))}
\end{itemize}
In order to perform estimation for such models, one often has to resort to numerical methods such as particle filters or MCMC; see the aforementioned references. 

The basic particle filter, at time $n$ and given a collection of samples $N\geq 1$ with non-zero potential on $\mathsf{E}_{n-1}^N$, will generate samples on $\mathsf{E}_n$ using the Markov kernels $\{M_n\}_{n\geq 1}$ and then sample with replacement amongst $\{x_{n}^i\}_{1\leq i \leq N}$ according to the normalized weights $G_{n}(x^i_{n})/\sum_{j=1}^N G_{n}(x^j_{n})$. The key issue with this basic particle filter is that, at any given time, there is no guarantee that any sample $x_n^i$ lies in $\mathsf{B}_n$, and in some challenging scenarios,
the algorithm can `die-out' (or collapse), that is, that all of the samples have zero potentials. From an inference perspective, this is clearly an undesirable property and can lead to
some poor performances; for example, many particle filters display time-uniform convergence properties, which has yet to be shown for this class of particle filters.
For some classes of examples, e.g.~C\'erou et al.~(2012) or Del Moral et al.~(2012), there are some adaptive techniques which can reduce the possibility of the algorithm collapsing,
but these are not always guaranteed to work in practice.
In this article we develop a particle filter. This algorithm uses the same sampling mechanism, but the samples are generated until there
is a prespecified number that are alive. This removes the possibility that the algorithm can collapse, but introduces a random cost per time-step.
The algorithm turns out to be an important special case of the work in Lee et al.~(2013) and is closely related to Le Gland \& Oudjane (2004). 

The particle filter is analyzed from a theoretical perspective. In particular, under assumptions, we establish the following results:
\begin{enumerate}
\item{Time uniform $\mathbb{L}_p$ bounds for the particle filter estimates of $\eta_n(\varphi)$}
\item{A central limit theorem (CLT) for suitably normalized and centered particle filter estimates of $\eta_n(\varphi)$}
\item{An unbiased property of the particle filter estimates of $\gamma_n(\varphi)$}
\item{The relative variance of the particle filter estimates of $\gamma_n(\varphi)$, assuming $N=\mathcal{O}(n)$, is shown to grow linearly in $n$.}
\end{enumerate}
Whilst all of these results are classical in the literature on particle filters (C\'erou et al.~2011; Del Moral 2004), the proof in this new context requires some modifications.
In the main, these technical adjustments are associated to $\mathbb{L}_p-$bounds and CLTs for sums of random variables with a random number
of summands (in the context of 1.-2.). The technical results in 1.-2.~not only verify the correctness of the new algorithm, but suggest a substantial improvement over the standard particle filter,
at the cost of increased computational time. The results in 3.-4.~are of particular interest when using the new particle filter within MCMC methodology (a particle
MCMC (PMCMC) algorithm, (Andrieu et al.~2010)). There are variety of applications of such PMCMC algorithms, for example, when performing static parameter estimation for ABC
approximations of HMMs. The results in 3.-4.~not only allow one to construct new PMCMC algorithms, but also provide theoretical guidelines for their implementation.
It is remarked that some of these results can also be found in Le Gland \& Oudjane (2006) (with regards to 1.~2.), except for \emph{a different estimate}; this is a critical difference between the work in this article
and that in Le Gland \& Oudjane (2006). In particular, as mentioned in result 3.~our estimates of $\gamma_n(\varphi)$, and in particular $\gamma_n(1)$ are unbiased and this allows one to develop principled MCMC methodology.
We also note that,  in Le Gland \& Oudjane (2006) the authors do not give a time-uniform bound.

The structure of this article is as follows. In Section \ref{sec:algo} we provide a motivating example, ABC approximations of HMMs, for the construction of the particle filter, as well as the new particle filter
itself. In Section \ref{sec:theory} our theoretical results are provided along with some interpretation of their meaning. In Section \ref{sec:numerics} we implement the new particle
filter for the motivating example and then develop a basic PMCMC algorithm using the guidelines in Section \ref{sec:theory} for static parameter estimation associated to ABC
approximations of HMMs. In Section \ref{sec:summ} the article is concluded, with some discussion of future work. The appendix contains technical results for the theory in Section
\ref{sec:theory} and is split into three sections.

\section{Motivating Example and Algorithm}\label{sec:algo}

\subsection{Motivating Example}\label{sec:motivating_ex}

We are given a HMM with observations $\{Y_n\}_{n\geq 1}$, $Y_n\in\mathsf{Y}\subseteq\mathbb{R}^{d_y}$, hidden states $\{Z_n\}_{n\geq 0}$, $Z_n\in\mathsf{X}\subseteq\mathbb{R}^{d_x}$, $Z_0$ given.
We assume:
$$
\mathbb{P}(Y_n\in A|\{Z_n\}_{n\geq 0}) = \int_A g_{\theta}(y|z_n)dy \quad n\geq 1
$$
and
$$
\mathbb{P}(Z_n\in A|\{Z_n\}_{n\geq 0}) = \int_A f_{\theta}(z|z_{n-1})dy\quad n\geq 1
$$
with $\theta\in\Theta$ a static parameter and $dy$ Lebesgue measure.

We assume $g_{\theta}(y|x_n)$ is unknown (even up to an unbiased estimate), but one can sample from the associated distribution. In this scenario, one cannot apply a standard particle filter (or many other numerical approximation schemes).
Dean et al.~(2010) and Jasra et al.~(2012) introduce the following ABC approximation of the joint smoothing density, for $\epsilon>0$:
\begin{equation}
\pi_{\theta}(z_{1:n}|y_{1:n}) = \frac{\prod_{k=1}^n g_{\theta}^{\epsilon}(y_k|z_k)f_{\theta}(z_k|x_{k-1})}{\int_{\mathsf{X}^n}\prod_{k=1}^n g_{\theta}^{\epsilon}(y_k|z_k)f_{\theta}(z_k|z_{k-1}) dz_{1:n}}\label{eq:abc_smoothing}
\end{equation}
where
$$
g_{\theta}^{\epsilon}(y_k|x_k) = \frac{\int_{B_{\epsilon}(y_k)}g_{\theta}(u|x_k) du}{\int_{B_{\epsilon}(y_n)} du}
$$
and $B_{\epsilon}(y_k)$ is the open ball centered at $y_k$ with radius $\epsilon$. 

We let $\theta$ be fixed and omit it from our notations; it is reintroduced later on. We introduce a Feynman-Kac representation of the ABC approximation described above.
Let $\mathsf{E}_n=\mathsf{E}=\mathsf{X}\times\mathsf{Y}$ and define $G_n:\mathsf{E}\rightarrow\{0,1\}$:
$$
G_n(x) = \mathbb{I}_{\mathsf{X}\times B_{\epsilon}(y_n)}(x).
$$
Now introduce Markov kernels $\{M_n\}_{n\geq 1}$, $M_n:\mathsf{E}\times\mathbb{B}(\mathsf{X}\times\mathsf{Y})\rightarrow[0,1]$ ($\mathbb{B}(\cdot)$ are the Borel sets), with
$$
M_n(x,dx') = f(z'|z)g(u'|z') du'dz'
$$
with $x=(z,u)$. Then the ABC predictor is for $n\geq 1$:
\begin{equation}
\eta_n(\varphi) := \frac{\gamma_n(\varphi)}{\gamma_n(1)}, \label{eq:eta_n}
\end{equation}
where $\varphi\in\mathcal{B}_b(\mathsf{E})$ and 
\begin{equation}
\gamma_n(\varphi) =  \mathbb{E}_{x_0}\left[\prod_{p=1}^{n-1} G_p(X_p) \varphi(X_n)\right] = \int_{\mathsf{E}^{n}} \prod_{p=1}^{n-1} G_p(x_p)M_p(x_{p-1},dx_p) M_n(x_{n-1},dx_n)\varphi(x_n). \label{eq:gamma_n_phi}
\end{equation}
This provides a concrete example of the Feynman-Kac model in Section \ref{sec:intro}. In light of \eqref{eq:eta_n}, we henceforth refer to $\gamma_n(1)$ as the \emph{normalizing constant}. This quantity is of fundamental importance in a wide variety of statistical applications, notably in static parameter estimation, as it is equivalent to the marginal likelihood of the observed data $Y_1,\dots,Y_{n-1}$ in contexts such as the ABC approximation presented above, as can be determined from \eqref{eq:gamma_n_phi}.

\subsection{Old Filter}\label{sec:old_algo}

Now define, for $n\geq 2$:
$$
\Phi_n(\eta_{n-1})(\varphi) = \frac{\eta_{n-1}(G_{n-1}M_n(\varphi))}{\eta_{n-1}(G_{n-1})}.
$$
The standard particle filter works by sampling $x_1^1,\dots,x_1^N$ i.i.d from $M_1(x_0,\cdot)$ and
setting
$$
\eta_n^N(\varphi) = \frac{1}{N}\sum_{i=1}^N \varphi(x_n^i)\quad n\geq 1
$$
at times $n\geq 2$ sampling $x_n^1,\dots,x_n^N$ from $\Phi_n(\eta_{n-1}^N)(\cdot)$, \emph{assuming that the system has not died out}.

\subsection{Alive Filter}\label{sec:new_smc}

We now discuss an idea which will prevent the particle filter from dying out; see also Lee et al.~(2013) and Le Gland \& Oudjane (2006). Throughout we assume that $M_n(x,\mathsf{B}_n)$ is not known for each $x,n$;
if this is known, then one can develop alternative algorithms.
At time 1, we sample $x^1_1,\dots, x_1^{T_1}$ i.i.d.~from $M_1(x_0,\cdot)$, where
$$
T_1 = \inf\{n\geq N: \sum_{i=1}^n G_1(x_1^i) \geq N\}.
$$
Then, define
$$
\eta_1^{T_1}(\varphi) = \frac{1}{T_1-1}\sum_{i=1}^{T_1-1} \varphi(x_1^i).
$$
Now, at time $2$ sample $x^1_2,\dots, x_{2}^{T_2}$, conditionally i.i.d.~from $\Phi_2(\eta_1^{T_1})(\cdot)$, where
$$
T_2 = \inf\{n\geq N: \sum_{i=1}^n G_2(x_2^i) \geq N\}.
$$
This is continued until needed (i.e.~with an obvious definition of $T_3,T_4$ etc). The idea here is that, at every time step, we retain $N-1$ particles with non-zero weight, so that the algorithm never dies out, but with the additional issue that the computational cost per time-step is a random variable. The procedure is described in Algorithm~\ref{alg:alive_pf}.
We note that the approach in Le Gland \& Oudjane (2004) retains $N$ alive particles, i.e. it differs only in step $2(a)$ of Algorithm~\ref{alg:alive_pf} by sampling instead $a_{p-1}^j$ uniformly on $\{k \in \{1,\dots,T_{p-1}\} : G_{p-1}(x_{p-1}^k) = 1\}$. This seemingly innocuous difference is, however, crucial to the unbiasedness results we develop in the sequel.

\begin{algorithm}
\caption{Alive Particle Filter}
\label{alg:alive_pf}
\begin{enumerate}
\item At time 1. For $j=1,2,\dots$ until $j=:T_1$ is reached such that $G_1(x_1^j) = 1$ and $\sum_{i=1}^{j} G_1(x_1^i) = N$:
\begin{itemize}
\item Sample $x^j_1$ from $M_1(x_0,\cdot)$.
\end{itemize}
\item At time $1<p\leq n$. For $j=1,2,\dots$ until $j=:T_p$ is reached such that $G_p(x_p^j) = 1$ and $\sum_{i=1}^{j} G_p(x_p^i) = N$:
\begin{enumerate}
\item Sample $a_{p-1}^j$ uniformly from $\{k \in \{1,\dots,T_{p-1}-1\} : G_{p-1}(x_{p-1}^k) = 1\}$.
\item Sample $x^j_p$ from $M_p(x^{a_{p-1}^j}_{p-1},\cdot)$.
\end{enumerate}
\end{enumerate}
\end{algorithm}

\subsubsection{Some Remarks}

We remark that one can show (Del Moral, 2004) that  for $n\geq 2$, the normalizing constant is given by
$$
\gamma_n(1) = \prod_{p=1}^{n-1}\eta_p(G_p).
$$
Thus, a natural estimate of the normalizing constant is
$$
\gamma_n^{T_n}(1) = \prod_{p=1}^{n-1}\eta_p^{T_p}(G_p) = \prod_{p=1}^{n-1}\frac{N-1}{T_p-1}.
$$
We note that the estimates of $\eta_n$ and $\gamma_n$ are different from those considered in Le Gland \& Oudjane (2006). This is a critical point as in Proposition \ref{prop:unbiased} we show that this estimate of the normalizing constant is unbiased which is crucial for
using this idea inside MCMC algorithms. In this direction, one uses the particle filter to help propose values and there is an accept/reject step; we discuss this approach in Section \ref{sec:PMCMC}.
Again, it is clearly undesirable in an MCMC proposal, if the particle filter will collapse and so, our approach will prove to be very useful in this context.

Other than the fact that this filter will not die out, in the context of our motivating example, there is also a natural use of this idea. This is because, one can envisage the arrival of an outlier or unusual data; in such scenarios, the alive particle
filter will assign (most likely) more computational effort for dealing with this issue, which is not something that the standard filter is designed to do.

A final remark is as follows; in our example $\mathsf{B}_n = \mathsf{X} \times B_\epsilon(y_n)$ and so, as assumed in this article in general, $M_n(x,\mathsf{B}_n)$ is not known for each $x,n$.
This removes the possibility of changing measure to $\mathbb{Q}$ (in the formula for $\gamma_n(\cdot)$), (with finite dimensional marginal $\mathbb{Q}_n$)
$$
\mathbb{Q}_n(d(x_1,\dots,x_n)) \prod_{p=1}^n \frac{M_p(x_{p-1},dx_p)\mathbb{I}_{\mathsf{B}_p}(x_p)}{M_p(x_{p-1},\mathsf{B}_p)}
$$
call the Markov kernels in the product $\hat{M}_p$.
This is because the new potential at time $n$ is exactly:
$
M_n(x,\mathsf{B}_n).
$
However, one can simulate from $\hat{M}_n$ and use an unbiased estimate
of $M_n(x,\mathsf{B}_n)$ for each particle. That is, we obtain samples $z^{(1)}, z^{(2)}$ from $\hat{M}_p(x^{i}_{p-1},\cdot)$ using $R$ samples (total) from $M_p(x^{i}_{p-1},\cdot)$ and then we set $x^{i}_p=z^{(1)}$ (say) with associated weight $1/(R-1)$.
This particular procedure would then have a fixed number of particles with no possibility of collapsing. Other than the algorithm being convoluted, some
particles $x^{i}_{p-1}$ could be such that $\mathbb{E}[R]$ is prohibitively large, even though $\mathbb{E}[T_p]$ is not very large, which provides a reasonable argument against such a scheme.


\section{Theoretical Results}\label{sec:theory}

We will now present some theoretical results for the particle filter in Section \ref{sec:new_smc}. This Section could be skipped with little loss of continuity in the article; although we do provide numerical simulations to verify the behaviour that is predicted by the forthcoming theoretical results.

\subsection{Assumptions and Notations}\label{sec:assump}


Define the following sequence of Markov kernels, for $n\geq 1$:
$$
\hat{M}_n(x,dy) = \frac{M_n(x,dy) G_n(y)}{M_n(G_n)(x)}.
$$
We will make use of the following assumptions:
\begin{itemize}
\item{($M_1$): For each there exist a $\delta\in(0,1)$ such that for each  $n\geq 1$, $(x,y)\in\mathsf{E}_n^2$
$$
M_n(x,\cdot)\geq \delta M_n(y,\cdot).
$$
In addition there exist a $0<c<1$ such that for each $n\geq 1$, $x\in \mathsf{E}_{n-1}$, $M_n(\mathsf{B}_{n})(x)\wedge M_n(\mathsf{B}_{n}^c)(x)\geq c$.}
\item{($\hat{G}$): For each $n\geq 0$
$$
\sup_{(x,y)\in \mathsf{B}_{n}^2}\frac{M_{n+1}(G_n)(x)}{M_{n+1}(G_n)(y)} =\delta_n <\infty.
$$
}
\item{($\hat{M}_m$): There exist $m\geq 1$ and $\beta_p^{(m)}\in[1,\infty)$ such that for any $p\geq 1$ and $(x,y)\in \mathsf{B}_{p}$
$$
\hat{M}_{p,p+m}(x,dz) \leq \hat{\beta}_p^{(m)} \hat{M}_{p,p+m}(x,dz)  \quad \hat{M}_{p,p+m} = \hat{M}_{p+1}\hat{M}_{p+2}\dots \hat{M}_{p+m}.
$$
}
\end{itemize}
The final two conditions are $(\hat{H}_m)$ in C\'erou et al.~(2011); we also use the notation $\hat{\delta}_p^{(m)} = \prod_{q=p}^{p+m-1}\hat{\delta}_q$. 
These assumptions are exceptionally strong, but we remark that for the scenario of interest, weaker conditions have not been used in the literature. Note that in addition, in the context of ABC, the assumptions are essentially qualitative as verifying them is very difficult (even on compact state-spaces) as the likelihood density is typically
intractable. However, we still expect the phenomena reported in the below results to hold in some practical situations. We again remark that our results are relevant
for scenarios other than ABC.

In order to understand some of the subsequent results, we introduce some notations.
For a probability measure on $\mathsf{E}$ (denoted $\mathcal{P}(\mathsf{E})$) $\mu\in\mathcal{P}(\mathsf{E})$ and bounded measurable real-valued function (denoted $\mathcal{B}_b(\mathsf{E})$) $\varphi\in\mathcal{B}_b(\mathsf{E})$, we write $\mu(\varphi) := \int_{\mathsf{E}}\varphi(x)\mu(dx)$.
For $\varphi\in\mathcal{B}_b(\mathsf{E})$, $\|\varphi\|_{\infty}:=\sup_{x\in\mathsf{E}}|\varphi(x)|$.
For $\varphi\in\mathcal{B}_b(\mathsf{E})$, $\textrm{Osc}(\varphi)=\sup_{(x,y)\in\mathsf{E}^2}|\varphi(x)-\varphi(y)|$.
For $\mu,\nu\in\mathcal{P}(\mathsf{E})$, $\|\mu-\nu\|_{tv}$ denotes the total variation distance.
For a non-negative operator on $\mathcal{B}_b(\mathsf{E})$, $R(x,\cdot)$, and $\varphi\in\mathcal{B}_b(\mathsf{E})$, $R(\varphi)(x) = \int_{\mathsf{E}}\varphi(y) R(x,dy)$.
Iterates of $R$ are written $R^n(x_0,dx_n) = \int R(x_0,dx_1)\times\cdots\times R(x_{n-1},dx_n)$.
We will use the semi-group $Q_n(x,dy) = G_{n-1}(x) M_n(x,dy)$ with $n\geq 2$, with
the convention for $p<n$, $Q_{p,n}(\varphi)(x_p) = \int Q_{p+1}(x_p,dx_{p+1}) \dots$  $Q_n(x_{n-1},dx_n) \varphi(x_n)$, where $\varphi\in\mathcal{B}_b(\mathsf{E})$; when $p=n$, $Q_{n,n}$ is the identity operator. We also adopt the notation for $\mu\in\mathcal{P}(\mathsf{E})$ $\Phi_{p,n}(\mu)(\varphi) = \Phi_n\circ\cdots\circ\Phi_{p+1}(\mu)(\varphi)$, $\varphi\in\mathcal{B}_b(\mathsf{E})$, $p<n$; when $p=n$, $\Phi_{n,n}(\mu)$ is the identity operator. $\mathbb{E}$ denotes expectation w.r.t.~the stochastic process which generates the algorithm, with corresponding probability $\mathbb{P}$. It is assumed that $\prod_{\emptyset} = 1$. Note the important formula $\gamma_n(\varphi) = [\prod_{q=1}^{n-1} \eta_q(G_q)]\eta_n(\varphi)$, $\varphi\in\mathcal{B}_b(\mathsf{E})$. $\mathcal{N}(\mu,\sigma^2)$ denotes the normal distribution with mean $\mu$ and variance $\sigma^2$.
$\mathcal{G}eo(p)$ denotes a geometric random variable (with support $\{1,2,\dots\}$) with success probability $p$.

\subsection{Predictor}

In this Section, we consider the long-time behaviour of approximation of the prediction filter
$$
\eta_n^{T_n}(\varphi) = \frac{1}{T_n-1}\sum_{i=1}^{T_n-1} \varphi(z_n^i).
$$
In particular, the study of this latter behaviour w.r.t.~the algorithm in Section \ref{sec:old_algo}, is difficult due to the fact that the algorithm can collapse.
For example, in a slightly different context, it is shown in Del Moral \& Doucet (2004) that the algorithm which can die out has an upper-bound on the $\mathbb{L}_p-$error which increases with $n$
(and under strong hypotheses as in this article).
In general, we do not know of any time-uniform result for algorithms which can die out.
Below, we restrict $p\in[1,4]$ as this is all that is needed for a strong law of large numbers.
The additional technical results associated to Theorem \ref{theo:time_uniform} can be found in Appendix \ref{app:tech_pred}.

\begin{theorem}\label{theo:time_uniform}
Assume ($M_1$). Then  for any $p\in[1,4]$ there exist a $C_p<\infty$ such that for any $n\geq 1$, $N\geq 2$, $\varphi\in\mathcal{B}_b(\mathsf{E_n})$:
$$
\mathbb{E}[|\eta_n^{T_n}(\varphi)-\eta_n(\varphi)|^p]^{1/p} \leq \frac{C_p\|\varphi\|_{\infty}}{\sqrt{N-1}}.
$$
\end{theorem}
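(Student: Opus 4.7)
The plan is to follow the standard telescoping-and-contract strategy used for classical particle filters (Del Moral 2004, C\'erou et al.~2011), with the understanding that the only new ingredient needed is an $\mathbb{L}_p$ bound for empirical averages with a \emph{random} number of summands. Write the error as the standard Dobrushin-type telescoping sum
\begin{equation*}
\eta_n^{T_n}(\varphi) - \eta_n(\varphi) = \sum_{p=1}^n \Bigl[\Phi_{p,n}(\eta_p^{T_p})(\varphi) - \Phi_{p,n}(\Phi_p(\eta_{p-1}^{T_{p-1}}))(\varphi)\Bigr],
\end{equation*}
with the convention that $\Phi_1(\eta_0^{T_0}) = M_1(x_0,\cdot)$. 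Each summand can be rewritten via the semigroup $Q_{p,n}$ as
\begin{equation*}
\frac{(\eta_p^{T_p} - \bar\eta_p)\bigl(Q_{p,n}(\varphi - \Phi_{p,n}(\bar\eta_p)(\varphi))\bigr)}{\eta_p^{T_p}(Q_{p,n}(1))},\qquad \bar\eta_p := \Phi_p(\eta_{p-1}^{T_{p-1}}),
\end{equation*}
so the one-step quantity to control is $(\eta_p^{T_p} - \bar\eta_p)(\psi_{p,n})$ for a bounded test function $\psi_{p,n}$ built from $Q_{p,n}$.

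Next I would exploit assumption ($M_1$) in two standard ways. First, the Dobrushin/strong-mixing part $M_n(x,\cdot)\ge \delta M_n(y,\cdot)$, combined with the two-sided bound $M_n(\mathsf{B}_n)(x)\wedge M_n(\mathsf{B}_n^c)(x)\ge c$, implies exponential stability of the normalised Feynman-Kac semigroup: there exist $\rho\in(0,1)$ and $C<\infty$ (depending only on $\delta$ and $c$) such that $\|\psi_{p,n}\|_\infty\le C\rho^{n-p}\|\varphi\|_\infty$, and the denominator $\eta_p^{T_p}(Q_{p,n}(1))/\bar\eta_p(Q_{p,n}(1))$ is bounded away from $0$ uniformly. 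Together with Minkowski's inequality over $p$, this reduces matters to establishing, uniformly in $p$,
\begin{equation*}
\mathbb{E}\bigl[\,|(\eta_p^{T_p} - \bar\eta_p)(\psi)|^p \,\big|\, \mathcal{F}_{p-1}\bigr]^{1/p} \leq \frac{C_p \|\psi\|_\infty}{\sqrt{N-1}},
\end{equation*}
after which a geometric sum $\sum_p \rho^{n-p}$ yields the time-uniform constant.

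The main obstacle, and the point where the analysis deviates from the classical one, is this conditional one-step bound. Given $\mathcal{F}_{p-1}$, the particles $x_p^1,x_p^2,\dots$ are i.i.d.\ from $\bar\eta_p$, so $\eta_p^{T_p}(\psi)$ is an empirical average of the first $T_p-1$ of them, where $T_p$ is the waiting time to accumulate $N$ successes in i.i.d.\ Bernoulli$(q_p)$ trials with $q_p := \bar\eta_p(G_p) \ge c$ (the lower bound here uses precisely the second part of ($M_1$)). The plan is to establish, as a technical lemma to be placed in Appendix~\ref{app:tech_pred}, a Marcinkiewicz-Zygmund/Rosenthal inequality for sums with a negative-binomial number of summands: if $Z_1,Z_2,\dots$ are i.i.d., mean zero, bounded by $K$, independent of $T$, and $T$ is a waiting time for $N$ successes with success probability at least $c$, then
\begin{equation*}
\mathbb{E}\left[\left|\frac{1}{T-1}\sum_{i=1}^{T-1}Z_i\right|^p\right]^{1/p} \leq \frac{C_p K}{\sqrt{N-1}}, \qquad p\in[1,4].
\end{equation*}
This can be proved by conditioning on $\{T-1=m\}$ and applying the standard fixed-$m$ Marcinkiewicz-Zygmund bound inside, then summing the resulting $m^{-p/2}$-type tails against the negative-binomial law of $T-1$; the concentration of $T/N$ around $1/q_p \le 1/c$ (Chernoff for negative binomials, whose moment generating function is finite since $q_p\ge c>0$) controls the low-$m$ contributions. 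The restriction $p\in[1,4]$ in the statement is comfortably handled this way, since only moderate moments of $T_p$ are required; more delicate random-sum martingale inequalities would be needed to push $p$ higher, which is why the authors restrict to this range.

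Combining the one-step bound with the geometric decay from ($M_1$) and Minkowski's inequality applied to the telescoping sum yields
\begin{equation*}
\mathbb{E}[|\eta_n^{T_n}(\varphi) - \eta_n(\varphi)|^p]^{1/p} \leq \sum_{p=1}^n C\rho^{n-p}\cdot \frac{C_p \|\varphi\|_\infty}{\sqrt{N-1}} \leq \frac{C_p'\|\varphi\|_\infty}{\sqrt{N-1}},
\end{equation*}
with $C_p'$ independent of $n$, which is the time-uniform bound claimed.
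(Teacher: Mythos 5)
Your overall architecture is exactly the paper's: the Dobrushin-type telescoping sum, the use of ($M_1$) to get exponential stability of the normalised semigroup and hence a summable $\rho^{n-q}$ factor, and the reduction to a conditional one-step $\mathbb{L}_p$ bound for an empirical average over a negative-binomial number of samples (this is the paper's Corollary \ref{cor:cond_lp} fed into the argument of pp.~245--246 of Del Moral (2004)). The difficulty is entirely in the one-step lemma, and that is where your proposal has a genuine gap: you state the random-sum Marcinkiewicz--Zygmund lemma for $Z_1,Z_2,\dots$ \emph{independent of} $T$, and then propose to prove it by conditioning on $\{T-1=m\}$ and applying the fixed-$m$ M--Z bound inside. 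But in this algorithm $T$ is defined through the indicator values of the very samples being averaged: $T=\inf\{n:\sum_{i=1}^n\mathbb{I}_B(X^i)\geq N\}$. Conditionally on $T-1=m$, the points $X^1,\dots,X^m$ are \emph{not} i.i.d.\ from $\nu$; exactly $N-1$ of them lie in $B$ and $m-N+1$ in $B^c$, so the conditional mean of $\frac{1}{m}\sum_{i=1}^m\varphi(X^i)$ is the random mixture $\frac{N-1}{m}\frac{\nu(\mathbb{I}_B\varphi)}{\nu(B)}+\frac{m-N+1}{m}\frac{\nu(\mathbb{I}_{B^c}\varphi)}{\nu(B^c)}$ rather than $\nu(\varphi)$. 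The fixed-$m$ M--Z step therefore only controls the fluctuation around this random centre, and the deviation of the centre itself from $\nu(\varphi)$ is of order $N^{-1/2}$ --- the same order as the bound you are trying to prove --- so it cannot be absorbed; it requires a separate argument. This is precisely the two-term Minkowski decomposition in the paper's Lemma \ref{lem:tech_lem_neg1}, with the second term handled in Lemma \ref{lem:tech_lem_neg2} via the representation of $T$ as a sum of $N$ geometric variables and inverse-moment identities for the negative binomial (Neuts \& Zacks).

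To repair your argument you would either have to reproduce that two-term decomposition, or replace the conditioning argument by a genuinely stopping-time-based one (e.g.\ a Burkholder--Davis--Gundy bound for the stopped martingale $\sum_{i\leq T}[\varphi(X^i)-\nu(\varphi)]$ combined with $T-1\geq N-1$ and $\mathbb{E}[T^{p/2}]\leq C(N/c)^{p/2}$), which would be a legitimately different route; as written, however, the independence assumption in your key lemma does not hold and the proof of that lemma does not go through. The remaining steps (the stability estimates from ($M_1$), the uniform bound on $\sum_q r_{q,n}\beta(P_{q,n})$, and the final Minkowski summation) are consistent with the paper.
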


\begin{proof}
Throughout $C_p$ is a finite positive constant (that does not depend upon $n$) whose value may change from line to line.
The proof follows that of Theorem 7.4.4 of Del Moral (2004). We have, using eq.~7.24 of Del Moral (2004)
$$
\mathbb{E}[|\eta_n^{T_n}(\varphi)-\eta_n(\varphi)|^p]^{1/p} \leq \sum_{q=1}^n\mathbb{E}[|[\Phi_{q,n}(\eta_q^{T_q}) - \Phi_{q,n}(\Phi_q(\eta_{q-1}^{T_{q-1}}))](\varphi)|^{p}]^{1/p}.
$$
WLOG we suppose $\textrm{Osc}(\varphi)\leq 1$.
Now for $x\in\mathsf{B}_{q}$ we define the Markov kernel $P_{q,n}(x,\cdot) := Q_{q,n}(x,\cdot)/Q_{q,n}(1)(x)$ with associated
Dobrushin coefficient $\beta(P_{q,n})=\sup_{(x,y)\in \mathsf{B}_{q}^2}\|P_{q,n}(x,\cdot)-P_{q,n}(y,\cdot)\|_{tv}$ and also set
$r_{q,n}=$  \\$\sup_{(x,y)\in \mathsf{B}_{q}^2} Q_{q,n}(1)(x)/Q_{q,n}(1)(y)$. Then following the calculations of pp.245--246 of Del Moral (2004), we have
$$
\mathbb{E}[|\eta_n^{T_n}(\varphi)-\eta_n(\varphi)|^p]^{1/p} \leq \sum_{q=1}^n r_{q,n}\beta(P_{q,n})
\mathbb{E}[|[\eta_q^{T_q} - \Phi_q(\eta_{q-1}^{T_{q-1}}) ](\bar{Q}_{q,n}^N(\varphi))|^p]^{1/p}.
$$
where $\bar{Q}_{q,n}^N(\varphi)$ is defined in pp.~246 of Del Moral (2004) and note that $\|\bar{Q}_{q,n}^N(\varphi)\|_{\infty}\leq 1$.
Application of Corollary \ref{cor:cond_lp} gives:
$$
\mathbb{E}[|\eta_n^{T_n}(\varphi)-\eta_n(\varphi)|^p]^{1/p} \leq \frac{C_p}{\sqrt{N-1}} \sum_{q=1}^n r_{q,n}\beta(P_{q,n}).
$$
The sum on the R.H.S.~can be bounded uniformly in $n$ by using standard arguments in C\'erou et al.~(2011) or Del Moral (2004) and are hence omitted. This concludes the proof.
\end{proof}

\subsection{Central Limit Theorem}

In this Section we consider the asymptotic properties of a suitably normalized and centered estimate of the predictor; a central limit theorem. We note that such a result is not a direct corollary of existing CLTs for particle filters in the literature (e.g.~Del Moral (2004)). The additional technical results associated to  Theorem \ref{theo:clt} can be found in Appendix \ref{app:tech_clt}.

Here we write:
$$
\eta_{n}^{N-1}(\varphi) = \frac{1}{N-1} \sum_{i=1}^N \varphi(X_n^i)
$$
the empirical measure of the first $N-1$ sampled particles at time $n$. The convergence in probability (written $\rightarrow_{\mathbb{P}}$) weak convergence 
(written $\Rightarrow$)
results are as $N\rightarrow\infty$.

\begin{theorem}\label{theo:clt}
Assume ($M_1$). Then for any $n\geq 1$, $\varphi\in\mathcal{B}_b(\mathsf{E}_n)$ we have:
$$
\sqrt{T_n-1} [\eta_{n}^{T_n} - \eta_n](\varphi) \Rightarrow \mathcal{N}(0,\sigma^2_n(\varphi))
$$
where, setting $\varphi_n = \varphi-\eta_n(\varphi)$ 
\begin{eqnarray*}
\sigma^2_n(\varphi) & = & \eta_n(G_n) \eta_n(\varphi_n^2) + [\sigma^2_{n-1}(Q_n(\varphi_n))]/[\eta_n(G_n)\eta_{n-1}(G_{n-1})] \quad n\geq 2\\
\sigma^2_1(\varphi) & = & \eta_1(G_1) \eta_1((\varphi-\eta_1(\varphi))^2)
\end{eqnarray*}
or equivalently for any $n\geq 1$
\begin{equation}
\sigma^2_n(\varphi) = \eta_n(G_n)\sum_{q=1}^n \frac{\gamma_{q}(G_q)^2}{\gamma_{n}(G_n)^2}\eta_q([Q_{q,n}(\varphi_n)-\eta_q(Q_{q,n}(\varphi_n))]^2).
\label{eq:clt_asymp_var}
\end{equation}
\end{theorem}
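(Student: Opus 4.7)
The plan is to proceed by induction on $n$, using the decomposition
$$\sqrt{T_n-1}\bigl[\eta_n^{T_n} - \eta_n\bigr](\varphi) = \underbrace{\sqrt{T_n-1}\bigl[\eta_n^{T_n} - \Phi_n(\eta_{n-1}^{T_{n-1}})\bigr](\varphi)}_{=:I_n} + \underbrace{\sqrt{T_n-1}\bigl[\Phi_n(\eta_{n-1}^{T_{n-1}}) - \eta_n\bigr](\varphi)}_{=:II_n},$$
which separates the fresh sampling noise at time $n$ from the fluctuation propagated through the nonlinear map $\Phi_n$ from time $n-1$. A preliminary fact needed at every step is that $(T_n-1)/(N-1) \to 1/\eta_n(G_n)$ in probability: conditional on $\mathcal{F}_{n-1} := \sigma(\{x_p^i\}_{p<n,\,i\ge 1})$, $T_n$ is a negative-binomial waiting time for the $N$-th success with success probability $\mu_n(G_n) := \Phi_n(\eta_{n-1}^{T_{n-1}})(G_n)$, and Theorem \ref{theo:time_uniform} applied to $G_n$ ensures $\mu_n(G_n) \to \eta_n(G_n)$ in probability; combining these ingredients yields the claim and, as a consequence, $\sqrt{T_n-1}/\sqrt{T_{n-1}-1}$ converges in probability to $\sqrt{\eta_{n-1}(G_{n-1})/\eta_n(G_n)}$.

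The base case $n=1$ contains only $I_1$, since $x_0$ is fixed and $\eta_1 = M_1(x_0,\cdot)$. A classical random-sum CLT (Anscombe--D\"oblin), applied to the i.i.d.\ samples $\varphi(x_1^i)$ stopped at the $\sigma(\{x_1^i\})$-stopping time $T_1-1$, delivers the result. For the inductive step, I would treat $I_n$ by conditioning on $\mathcal{F}_{n-1}$: the particles $x_n^1, x_n^2, \ldots$ are then i.i.d.\ from $\mu_n := \Phi_n(\eta_{n-1}^{T_{n-1}})$ and $T_n-1$ is an $\mathcal{F}_{n-1}$-conditional stopping time, so a conditional form of Anscombe's theorem yields weak convergence of $I_n$ to a centered Gaussian whose variance is the limit of the conditional variance $\mu_n((\varphi - \mu_n(\varphi))^2)$; continuity and Theorem \ref{theo:time_uniform} show that this limit is deterministic and yields the first term of the recursive formula.

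For $II_n$, the key step is the linearization
$$\Phi_n(\mu)(\varphi) - \Phi_n(\nu)(\varphi) = \frac{[\mu - \nu]\bigl(Q_n(\varphi_n)\bigr)}{\mu(G_{n-1})} \qquad (\nu := \eta_{n-1}),$$
which rewrites
$$II_n = \frac{\sqrt{T_n-1}}{\sqrt{T_{n-1}-1}}\cdot\frac{1}{\eta_{n-1}^{T_{n-1}}(G_{n-1})}\cdot\sqrt{T_{n-1}-1}\,[\eta_{n-1}^{T_{n-1}} - \eta_{n-1}](Q_n(\varphi_n)).$$
Using $\eta_{n-1}^{T_{n-1}}(G_{n-1}) = (N-1)/(T_{n-1}-1)$ together with the preliminary step, the product of the first two factors converges in probability to $1/\sqrt{\eta_n(G_n)\eta_{n-1}(G_{n-1})}$, while the third factor, by the inductive hypothesis applied to $Q_n(\varphi_n)\in\mathcal{B}_b(\mathsf{E}_{n-1})$, converges weakly to $\mathcal{N}(0, \sigma_{n-1}^2(Q_n(\varphi_n)))$. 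A Slutsky argument then identifies the limit of $II_n$ with asymptotic variance $\sigma_{n-1}^2(Q_n(\varphi_n))/[\eta_n(G_n)\eta_{n-1}(G_{n-1})]$, matching the second term of the recursion. Joint convergence of $(I_n, II_n)$ to independent Gaussians follows from a conditional characteristic-function argument: $II_n$ is $\mathcal{F}_{n-1}$-measurable, while conditionally on $\mathcal{F}_{n-1}$ the limit law of $I_n$ does not depend on $\mathcal{F}_{n-1}$ except through quantities converging to deterministic constants. Summing variances yields the stated recursion; the equivalent non-recursive expression \eqref{eq:clt_asymp_var} is then obtained by unrolling the recursion and using the semigroup identity $Q_{q,n} = Q_{q,n-1} Q_n$ together with $\gamma_n(G_n) = \gamma_q(Q_{q,n}(1))$.

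The principal obstacle is the conditional random-sum CLT needed for $I_n$: both the stopping time $T_n-1$ and the law $\mu_n$ of the summands are random and converge only in probability to deterministic limits, so a version of Anscombe's theorem that simultaneously accommodates these two sources of randomness is required. This is the role of the technical lemmas in Appendix \ref{app:tech_clt}, together with the tightness estimate $(T_n-1)/(N-1)\to 1/\eta_n(G_n)$ in probability noted above.
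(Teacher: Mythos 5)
Your overall architecture --- induction on $n$, the split into fresh sampling noise $I_n$ and propagated fluctuation $II_n$, the linearization of $\Phi_n$ plus Slutsky for $II_n$, and a conditional characteristic-function argument for joint convergence --- is the same as the paper's, and your treatment of $II_n$ matches the paper's manipulation around \eqref{eq:clt_prf3} and correctly recovers the second term of the recursion. The gap is in your treatment of $I_n$, and it is not presentational: the variance your argument produces is not the one in the statement. Writing $\mu_n:=\Phi_n(\eta_{n-1}^{T_{n-1}})$, a (conditional) Anscombe random-index CLT compares $S_{T_n-1}/\sqrt{T_n-1}$ with $S_{m_N}/\sqrt{m_N}$ for the deterministic surrogate index $m_N=\lfloor (N-1)/\eta_n(G_n)\rfloor$, i.e.\ the in-probability limit of $T_n-1$, and therefore delivers the limiting conditional variance $\lim_N \mu_n\bigl((\varphi-\mu_n(\varphi))^2\bigr)=\eta_n(\varphi_n^2)$ --- exactly as you write. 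But the first term of the stated recursion is $\eta_n(G_n)\,\eta_n(\varphi_n^2)$, which carries an extra factor $\eta_n(G_n)$. The paper does not obtain this term from an Anscombe comparison with $m_N$: its Lemma \ref{lem:clt_main_random_to_deterministic} instead compares $\sqrt{T_n-1}\,[\eta_n^{T_n}-\mu_n](\varphi)$ with $\sqrt{(N-1)\eta_n(G_n)}\,[\eta_n^{N-1}-\mu_n](\varphi)$, a sum over exactly the first $N-1$ particles, and the mismatch between the two index scales ($N-1$ versus $(N-1)/\eta_n(G_n)$) is precisely where the factor $\eta_n(G_n)$ enters. So your plan, as written, cannot terminate in the stated formula: either you prove the specific index-$(N-1)$ comparison of Lemma \ref{lem:clt_main_random_to_deterministic} (which is not an instance of the classical random-index CLT, since the surrogate index is not the limit of $T_n-1$), or your first term comes out as $\eta_n(\varphi_n^2)$ rather than $\eta_n(G_n)\eta_n(\varphi_n^2)$. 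The sentence ``this limit \dots yields the first term of the recursive formula'' is therefore a non sequitur as it stands.

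To see that the two comparisons are genuinely inequivalent --- and that you must commit to one of them --- take $\varphi=G_1$ at $n=1$. Then $\eta_1^{T_1}(G_1)=(N-1)/(T_1-1)$ exactly, $T_1$ is negative binomial with success probability $p=\eta_1(G_1)$, and the delta method gives
$$
\sqrt{T_1-1}\,\bigl[\eta_1^{T_1}(G_1)-p\bigr]\Rightarrow\mathcal{N}\bigl(0,p(1-p)\bigr)=\mathcal{N}\bigl(0,\eta_1(\varphi_1^2)\bigr),
$$
which is the Anscombe answer, whereas the index-$(N-1)$ comparison yields $\mathcal{N}\bigl(0,p^2(1-p)\bigr)=\mathcal{N}\bigl(0,\eta_1(G_1)\eta_1(\varphi_1^2)\bigr)$. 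The two candidate variances really do differ on admissible test functions, so the discrepancy cannot be waved away by continuity or by a further Slutsky step; you need to work through this example, decide which randomly-stopped sum your argument is actually controlling, and reconcile the result with the claimed $\sigma_n^2(\varphi)$ before the induction can close.
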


\begin{proof}
Our proof proceeds via induction. For the case $n=1$, by Lemma \ref{lem:clt_main_random_to_deterministic} we
need only deal with the term
$$
\sqrt{(N-1)\eta_1(G_1)} [\eta_{n}^{N-1} -\eta_1](\varphi).
$$
Then one need only apply the CLT for i.i.d.~bounded random variables; this yields the result with
$$
\sigma_1^2(\varphi) = \eta_1(G_1) \eta_1((\varphi-\eta_1(\varphi))^2).
$$

We assume the result for $n-1$ and consider $n$. Let $\varphi_n = \varphi - \eta_n(\varphi)$ then we have
\begin{equation}
\sqrt{T_n-1} [\eta_{n}^{T_n} - \eta_n](\varphi) = 
\sqrt{T_n-1} [\eta_{n}^{T_n} -\Phi_n(\eta_{n-1}^{T_{n-1}}) ](\varphi_n) 
+ \sqrt{T_n-1}\Phi_n(\eta_{n-1}^{T_{n-1}}) ](\varphi_n).
\label{eq:clt_prf1}
\end{equation}
Now the first term on the R.H.S.~of \eqref{eq:clt_prf1} can be written as 
\begin{eqnarray}
\sqrt{T_n-1} [\eta_{n}^{T_n} -\Phi_n(\eta_{n-1}^{T_{n-1}}) ](\varphi_n) 
& = & \sqrt{T_n-1} [\eta_{n}^{T_n} -\Phi_n(\eta_{n-1}^{T_{n-1}}) ](\varphi_n) -  
\sqrt{(N-1)\eta_n(G_n)} [\eta_{n}^{N-1} -\Phi_n(\eta_{n-1}^{T_{n-1}}) ](\varphi_n) \nonumber \\ & &
+ \sqrt{(N-1)\eta_n(G_n)} [\eta_{n}^{N-1} -\Phi_n(\eta_{n-1}^{T_{n-1}}) ](\varphi_n).
\label{eq:clt_prf2}
\end{eqnarray}
In addition, the second term on the R.H.S.~of \eqref{eq:clt_prf1} can be written as 
\begin{eqnarray}
\sqrt{T_n-1}\Phi_n(\eta_{n-1}^{T_{n-1}}) ](\varphi_n) & = &
\bigg[\sqrt{\frac{T_n-1}{T_{n-1}-1}}- \sqrt{\frac{\eta_{n-1}(G_{n-1})}{\eta_{n}(G_{n})}}\bigg]\sqrt{T_{n-1}-1}\Phi_n(\eta_{n-1}^{T_{n-1}}) ](\varphi_n) \nonumber \\ & & 
+ \sqrt{T_{n-1}-1}\sqrt{\frac{\eta_{n-1}(G_{n-1})}{\eta_{n}(G_{n})}}\Phi_n(\eta_{n-1}^{T_{n-1}}) ](\varphi_n)
\label{eq:clt_prf3}
\end{eqnarray}
By Lemma \ref{lem:clt_main_random_to_deterministic} the first term on the R.H.S.~of \eqref{eq:clt_prf2} converges in probability to zero.
Also, by Lemma \ref{lem:conv_t_n}, Theorem \ref{theo:time_uniform} (which provides a strong law of large numbers) and the induction hypothesis ($\eta_{n-1}(Q_n(\varphi_n))=0$), the first term on the R.H.S.~of \eqref{eq:clt_prf3} converges in probability to zero. 
Thus, by a corollary to Slutsky's theorem, we can consider the weak convergence of 
$$
\sqrt{(N-1)\eta_n(G_n)} [\eta_{n}^{N-1} -\Phi_n(\eta_{n-1}^{T_{n-1}}) ](\varphi_n) + 
\sqrt{\frac{\eta_{n-1}(G_{n-1})}{\eta_{n}(G_{n})}}\sqrt{T_{n-1}-1}\Phi_n(\eta_{n-1}^{T_{n-1}}) ](\varphi_n) := A(N) + B(N).
$$
We now consider the characteristic function:
\begin{equation}
\mathbb{E}[\exp\{it(A(N) + B(N))\}] = \mathbb{E}[\{\mathbb{E}[\exp\{itA(N)\}|\mathscr{F}_{n-1}]-e^{-\tilde{\sigma}_n^2(\varphi_n) t^2/2}\}\exp\{itB(N)\}] + 
e^{-\tilde{\sigma}_n^2(\varphi_n) t^2/2} \mathbb{E}[\exp\{itB(N)\}]
\label{eq:clt_prf4}
\end{equation}
where
$$
\tilde{\sigma}_n^2(\varphi_n) = \eta_n(G_n) \eta_n(\varphi_n^2)
$$
and $\mathscr{F}_{n-1}$ is the filtration generated by the particle system up-to time $n-1$.
We deal with the limit of the expectations on the R.H.S.~of \eqref{eq:clt_prf4} independently. We will show that 
$\mathbb{E}[\exp\{itA(N)\}|\mathscr{F}_{n-1}]-e^{-\tilde{\sigma}_n^2(\varphi_n) t^2/2}$ will converge in probability to zero, by using
Theorem A3 of Douc \& Moulines (2008). To that end, we note that for any $\varphi\in\mathcal{B}_b(\mathsf{E}_n)$, we have that
$$
\frac{1}{N-1}\sum_{i=1}^{N-1}[\varphi(X_n^i) - \Phi(\eta_{n-1}^{T_{n-1}})(\varphi)]
$$
will converge in probability to zero (for example, by controlling the second moment with the Marcinkiewicz-Zygmund (M-Z) inequality). Then by Theorem \ref{theo:time_uniform} as
$\Phi(\eta_{n-1}^{T_{n-1}})(\varphi)$ converges almost surely to $\eta_n(\varphi)$ that $\eta_n^{N-1}(\varphi)$ will converge in probability
to $\eta_n(\varphi)$. Using this result it follows easily that
$$
\eta_n(G_n) \frac{1}{N-1}\sum_{i=1}^{N-1} [\varphi(X_n^i) - \Phi(\eta_{n-1}^{T_{n-1}})(\varphi)]^2
$$
converges in probability to $\tilde{\sigma}_n^2(\varphi_n)$. This verifies the first condition of Theorem A3 of Douc \& Moulines (2008) (eq.~(31) of that paper).
As $\varphi$ is bounded, it is straightforward to verify the second (Lindeberg-type) condition of Theorem 13 of Douc \& Moulines (2008) (eq.~(32) of that paper). Thus,
application of this latter theorem shows that $\mathbb{E}[\exp\{itA(N)\}|\mathscr{F}_{n-1}]-e^{-\tilde{\sigma}_n^2(\varphi_n) t^2/2}$
converges in probability to zero. Then by the induction hypothesis 
\begin{equation}
B(N) \Rightarrow \mathcal{N}(0,[\sigma^2_{n-1}(Q_n(\varphi_n))]/[\eta_n(G_n)\eta_{n-1}(G_{n-1})])\label{eq:clt_prf5}.
\end{equation}
Thus 
$$
\{\mathbb{E}[\exp\{itA(N)\}|\mathscr{F}_{n-1}]-e^{-\tilde{\sigma}_n^2(\varphi_n) t^2/2}\}\exp\{itB(N)\} \rightarrow_{\mathbb{P}} 0.
$$
Application of Theorem 25.12 of Billingsley (1995), shows that 
$$
\lim_{N\rightarrow \infty}  \mathbb{E}[\{\mathbb{E}[\exp\{itA(N)\}|\mathscr{F}_{n-1}]-e^{-\tilde{\sigma}_n^2(\varphi_n) t^2/2}\}\exp\{itB(N)\}] = 0.
$$
Thus, returning to \eqref{eq:clt_prf4}, we consider $e^{-\tilde{\sigma}_n^2(\varphi_n) t^2/2} \mathbb{E}[\exp\{itB(N)\}]$. Noting \eqref{eq:clt_prf5}
and again applying Theorem 25.12 of Billingsley (1995) we yield
$$
\lim_{N\rightarrow \infty} \mathbb{E}[\exp\{itB(N)\}] = e^{-[\sigma^2_{n-1}(Q_n(\varphi_n))]/[\eta_n(G_n)\eta_{n-1}(G_{n-1})] t^2/2}.
$$
Thus, we have proved that
$$
\sqrt{T_n-1} [\eta_{n}^{T_n} - \eta_n](\varphi) \Rightarrow \mathcal{N}(0,\sigma^2_n(\varphi))
$$
where
$$
\sigma^2_n(\varphi) = \eta_n(G_n) \eta_n(\varphi_n^2) + [\sigma^2_{n-1}(Q_n(\varphi_n))]/[\eta_n(G_n)\eta_{n-1}(G_{n-1})].
$$
The verification of the formula \eqref{eq:clt_asymp_var} for the asymptotic variance follows standard calculations and is omitted.
\end{proof}

\begin{rem}\label{rem:clt}
The formula for the asymptotic variance of the particle filter in Section \ref{sec:old_algo}, 
pp.~304 of Del Moral (2004) is
\begin{equation}
\sum_{q=1}^n \frac{\gamma_{q}(1)^2}{\gamma_n(1)^2}\eta_q([Q_{q,n}(\varphi_n)-\eta_q(Q_{q,n}(\varphi_n))]^2).
\label{eq:old_asymp_var}
\end{equation}
Comparing to the asymptotic variance formula \eqref{eq:clt_asymp_var}, this latter formula is certainly smaller if for each $1\leq q< n$
\begin{equation}
\eta_q(G_q)^2\leq \eta_n(G_n).
\label{eq:var_ineq}
\end{equation}
An alternative interpretation of \eqref{eq:var_ineq} is if $\{V_n\}_{n\geq 1}$ is a Markov chain with transition kernels $\{M_n\}_{n\geq 1}$ then \eqref{eq:var_ineq} is
$$
\mathbb{P}(V_q\in\mathsf{B}_{q}| v_1\in\mathsf{B}_{1},\dots, v_{q-1}\in\mathsf{B}_{q-1})^2 \leq
\mathbb{P}(V_n\in\mathsf{B}_{n}| v_1\in\mathsf{B}_{1},\dots, v_{n-1}\in\mathsf{B}_{n-1}).
$$
Whilst this can be difficult to verify in general, if the spaces are $\mathsf{E}_n=\mathsf{E}$, $n\geq 1$, potentials $G_n=G$ for each $n\geq 1$ and the Markov kernels are such that for each $n\geq 1$, $x\in\mathsf{E}$, $M_n(x,\cdot) = \nu(\cdot)$ for some $\nu\in\mathcal{P}(\mathsf{E})$, then the L.H.S.~of \eqref{eq:var_ineq} is $\nu(G)^2$ and the R.H.S.~is $\nu(G)$; so in this ideal scenario, the new algorithm asymptotically outperforms the old one with regards to variance. In general, one might believe that \eqref{eq:clt_asymp_var} is smaller that the formula \eqref{eq:old_asymp_var}, as its leading term (when $q=n$) is smaller and the condition \eqref{eq:var_ineq} can certainly hold in many examples.
\end{rem}

\begin{rem}
One can also prove a CLT for the estimate of the filter; a direct corollary is, under ($M_1$), using Theorem \ref{theo:clt}, we have
$$
\sqrt{T_n-1} \Big[\frac{\eta_{n}^{T_n}(G_n\varphi)}{\eta_{n}^{T_n}(G_n)} - \frac{\eta_n(G_n\varphi)}{\eta_n(G_n)}\Big]
\Rightarrow\mathcal{N}(0,\tilde{\sigma}^2_n(\varphi))
$$
where
$$
\tilde{\sigma}^2_n(\varphi) = \sum_{q=1}^n \frac{\gamma_{q}(G_q)^2}{\gamma_{n}(G_n)^2\eta_n(G_n)}\eta_q([Q_{q,n}(G_n[\varphi_n-\eta_n(G_n\varphi)])-\eta_q(Q_{q,n}(G_n[\varphi_n-\eta_n(G_n\varphi)]))]^2).
$$
In comparison, the asymptotic variance of the estimate in Theorem 4 of Le Gland \& Oudjane (2006)(which differs to the one in this article) has asymptotic variance 
$$
\hat{\sigma}^2_n(\varphi) = \sum_{q=1}^n \frac{\gamma_{q}(1)\gamma_{q}(G_q)}{\gamma_{n}(G_n)^2}\eta_q([Q_{q,n}(G_n[\varphi_n-\eta_n(G_n\varphi)])-\eta_q(Q_{q,n}(G_n[\varphi_n-\eta_n(G_n\varphi)]))]^2).
$$
Thus there is an asymptotic difference between the two procedures. In general, our approach is better with regards to asymptotic variance if
$$
\frac{\gamma_{q}(G_q)^2}{\gamma_{n}(G_n)^2\eta_n(G_n)} \leq \frac{\gamma_{q}(1)\gamma_{q}(G_q)}{\gamma_{n}(G_n)^2}
$$
or using the Markov chain interpretation in Remark \ref{rem:clt}:
$$
\frac{\mathbb{P}(V_1\in\mathsf{B}_{1},\dots, V_{q-1}\in\mathsf{B}_{q-1},V_q\in\mathsf{B}_{q})}
{\mathbb{P}(V_n\in\mathsf{B}_n|v_1\in\mathsf{B}_{1},\dots, v_{n-1}\in\mathsf{B}_{n-1})} \leq
\mathbb{P}(V_1\in\mathsf{B}_{1},\dots, V_{q-1}\in\mathsf{B}_{q-1}).
$$
In general, one cannot say which is preferable, but in the case: the spaces are $\mathsf{E}_n=\mathsf{E}$, $n\geq 1$, potentials $G_n=G$ for each $n\geq 1$ and the Markov kernels are such that for each $n\geq 1$, $x\in\mathsf{E}$, $M_n(x,\cdot) = \nu(\cdot)$ for some $\nu\in\mathcal{P}(\mathsf{E})$,
both the L.H.S.~and R.H.S.~of the inequality are equal.
\end{rem}

\subsection{Normalizing Constant}

Define the estimate of the normalizing constant:
$$
\gamma_n^{T_n}(\varphi) := \bigg\{\prod_{p=1}^{n-1}\frac{N-1}{T_{p}-1}\bigg\}\eta_n^{T_n}(\varphi).
$$
The technical results used in this  Section can be found in Appendix \ref{app:tech_nc}.

\subsubsection{Unbiasedness}

\begin{prop}\label{prop:unbiased}
We have for any $n\geq 1$, $N\geq 2$ and $\varphi\in\mathcal{B}_b(\mathsf{E}_n)$, that
$$
\mathbb{E}[\gamma_n^{T_{n}}(\varphi)] = \gamma_n(\varphi).
$$
\end{prop}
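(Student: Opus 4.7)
The plan is to proceed by induction on $n$, and the whole argument rests on a single elementary moment identity for negative binomial random variables: if $T$ counts the number of i.i.d.~Bernoulli$(q)$ trials needed to collect $N$ successes, so that $\mathbb{P}(T=t)=\binom{t-1}{N-1}q^N(1-q)^{t-N}$ for $t\geq N$, then
$$\mathbb{E}\bigl[(N-1)/(T-1)\bigr]=q, \qquad \mathbb{E}\bigl[(T-N)/(T-1)\bigr]=1-q.$$
The first identity follows from the algebraic rearrangement $\frac{N-1}{t-1}\binom{t-1}{N-1}=\binom{t-2}{N-2}$ and a shift of summation index; the second then follows from $\frac{N-1}{T-1}+\frac{T-N}{T-1}=1$. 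This is precisely the point at which retaining $N-1$ alive particles (rather than $N$, as in Le Gland \& Oudjane (2004)) becomes essential: keeping all $N$ alive particles would replace $(N-1)/(T-1)$ by $(N-1)/T$, whose expectation is not $q$, and unbiasedness would be lost.

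For the inductive step at $n\geq 2$, I condition on the $\sigma$-field $\mathscr{F}_{n-1}$ generated by the particle system up to time $n-1$. By construction of Algorithm~\ref{alg:alive_pf}, the samples $x_n^1,x_n^2,\ldots$ are conditionally i.i.d.~from the mixture kernel
$$\tilde{M}_n(\cdot) := \frac{1}{N-1}\sum_{\{k\leq T_{n-1}-1:\,G_{n-1}(x_{n-1}^k)=1\}} M_n(x_{n-1}^k,\cdot),$$
since the index set in step 2(a) has exactly $N-1$ elements. Splitting $\varphi=G_n\varphi+(1-G_n)\varphi$, observing that $x_n^1,\ldots,x_n^{T_n-1}$ contain exactly $N-1$ successes and $T_n-N$ failures, and using that conditionally on $T_n$ the two groups are independent and i.i.d.~from the corresponding $G_n=1$ and $G_n=0$ restrictions of $\tilde{M}_n$, I obtain
\begin{equation*}
\mathbb{E}\bigl[(T_n-1)\eta_n^{T_n}(\varphi)\,\big|\,\mathscr{F}_{n-1},T_n\bigr] = (N-1)\frac{\tilde{M}_n(G_n\varphi)}{\tilde{M}_n(G_n)} + (T_n-N)\frac{\tilde{M}_n((1-G_n)\varphi)}{\tilde{M}_n(1-G_n)}.
\end{equation*}
Dividing by $T_n-1$ and taking the conditional expectation over $T_n\mid\mathscr{F}_{n-1}$ (which is negative binomial with parameter $q=\tilde{M}_n(G_n)$) via the two identities above collapses the right-hand side to $\tilde{M}_n(\varphi)$, giving
$$\mathbb{E}[\eta_n^{T_n}(\varphi)\,|\,\mathscr{F}_{n-1}] = \tilde{M}_n(\varphi) = \frac{T_{n-1}-1}{N-1}\,\eta_{n-1}^{T_{n-1}}(Q_n(\varphi)),$$
using $\eta_{n-1}^{T_{n-1}}(G_{n-1})=(N-1)/(T_{n-1}-1)$ and $Q_n(x,dy)=G_{n-1}(x)M_n(x,dy)$ in the last equality.

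Multiplying by the weight $\prod_{p=1}^{n-1}(N-1)/(T_p-1)$ induces a telescoping cancellation between the newly created factor $(T_{n-1}-1)/(N-1)$ and the $p=n-1$ term, yielding
$$\mathbb{E}[\gamma_n^{T_n}(\varphi)\,|\,\mathscr{F}_{n-1}] = \gamma_{n-1}^{T_{n-1}}(Q_n(\varphi)).$$
Taking the outer expectation, applying the induction hypothesis to the bounded function $Q_n(\varphi)$, and invoking the standard semigroup identity $\gamma_{n-1}(Q_n(\varphi))=\gamma_n(\varphi)$ completes the step. The base case $n=1$ is the same computation without the conditioning (with $\tilde{M}_1=M_1(x_0,\cdot)$ and an empty product). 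The main conceptual point, rather than a technical obstacle, is identifying the right conditional decomposition in terms of successes and failures and recognising that the moment identity $\mathbb{E}[(N-1)/(T-1)]=q$ is exactly the structural reason this particular estimator is unbiased.
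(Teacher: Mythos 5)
Your proof is correct and rests on exactly the same ingredients as the paper's: the conditional identity $\mathbb{E}[\eta_n^{T_n}(\varphi)\,|\,\mathscr{F}_{n-1}]=\Phi_n(\eta_{n-1}^{T_{n-1}})(\varphi)$ established via the success/failure decomposition of the first $T_n-1$ samples together with the negative binomial inverse-moment identity $\mathbb{E}[(N-1)/(T-1)]=q$ (this is the paper's Lemma \ref{lem:tech_res} and eq.~\eqref{eq:neg_binom_ineq}, cited there from Neuts \& Zacks). The only difference is cosmetic: you organise the tower-property computation as a backward induction with a one-step telescoping cancellation, whereas the paper unrolls the same cancellation into the martingale-difference sum $\gamma_n^{T_n}(\varphi)-\gamma_n(\varphi)=\sum_{p=1}^n\gamma_p^{T_p}(1)[\eta_p^{T_p}-\Phi_p(\eta_{p-1}^{T_{p-1}})](Q_{p,n}(\varphi))$.
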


\begin{proof}
The proof uses the standard Martingale difference decomposition in Del Moral (2004), with some additional expectation properties that need to be proved.
The case $n=1$ follows from Lemma \ref{lem:tech_res}, so we assume $n\geq 2$.
We remark that for $p\in\{2,\dots,n\}$:
$$
\gamma_p^{T_p}(1)\Phi_p(\eta_{p-1}^{T_{p-1}})(Q_{p,n}(\varphi)) = 
\gamma_{p-1}^{T_{p-1}}(1)\eta_{p-1}^{T_{p-1}}(Q_{p-1,n}(\varphi))
$$
and hence that
$$
\gamma_n^{T_{n}}(\varphi) - \gamma_n(\varphi)
= \sum_{p=1}^n \gamma_p^{T_p}(1)[\eta_p^{T_p} - \Phi_p(\eta_{p-1}^{T_{p-1}})](Q_{p,n}(\varphi)).
$$
Then by Lemma \ref{lem:tech_res}, it follows that 
$$
\mathbb{E}[\gamma_p^{T_p}(1)[\eta_p^{T_p} - \Phi_p(\eta_{p-1}^{T_{p-1}})](Q_{p,n}(\varphi))|\mathscr{F}_{p-1}] = 0
$$
and hence that 
$$
\mathbb{E}[\gamma_n^{T_{n}}(\varphi) - \gamma_n(\varphi)] = 0
$$
from which we easily conclude the result.
\end{proof}


\subsubsection{Non-Asymptotic Variance Theorem}

Below the term $ \sum_{s=1}^n \frac{\hat{\delta}_s^{(m)}\hat{\beta}_s^{(m)}}{\eta_s(G_s)}$ is as in C\'erou et al.~(2011).
The expressions and interpretations  for $\hat{\delta}_s^{(m)}\hat{\beta}_s^{(m)}$ can be found in Section \ref{sec:assump}.
In addition, $(\eta_n^{T_n})^{\odot 2}$ is the $U-$statistic that is formed from our empirical measure $\eta_n^{T_n}$ and
$(\eta_n^{T_n})^{\otimes 2}$  is the corresponding $V-$statistic. In addition $(\gamma_n^{T_n})^{\otimes 2}(F) = \gamma_{n-1}^{T_{n-1}}(1)^2
(\eta_n^{T_n})^{\otimes 2}(F)$ for $F\in\mathcal{B}_b(\mathsf{E}^2)$.

\begin{prop}\label{prop:non_asymp}
Assume ($\hat{H}_m$). Then for any $n\geq 2$, $N\geq 3$
$$
N > \sum_{s=1}^n \frac{\hat{\delta}_s^{(m)}\hat{\beta}_s^{(m)}}{\eta_s(G_s)} \Rightarrow\quad\mathbb{E}\Big[\Big(\frac{\gamma_n^{T_n}(1)}{\gamma_n(1)} - 1\Big)^2\Big] \leq \frac{4}{N} \sum_{s=1}^n \frac{\tilde{\delta}_s\hat{\delta}_s^{(m)}\hat{\beta}_s^{(m)}}{\eta_s(G_s)}.
$$
\end{prop}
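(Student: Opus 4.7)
The plan is to adapt the non-asymptotic variance argument of C\'erou et al.~(2011) to the alive particle filter, the main new ingredient being the control of empirical averages over a \emph{random} number $T_p-1$ of particles at each time step.

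The starting point is the martingale decomposition already established in the proof of Proposition \ref{prop:unbiased}, specialized to $\varphi=1$:
\[
\frac{\gamma_n^{T_n}(1)}{\gamma_n(1)}-1 \;=\; \sum_{p=1}^n \frac{\gamma_p^{T_p}(1)}{\gamma_n(1)}\bigl[\eta_p^{T_p}-\Phi_p(\eta_{p-1}^{T_{p-1}})\bigr](Q_{p,n}(1)).
\]
By Lemma \ref{lem:tech_res} these summands are martingale differences with respect to the natural filtration $\{\mathscr{F}_p\}$, so the second moment of the left-hand side decomposes as a sum of conditional second moments. For each $p$ I would then bound the inner object $\mathbb{E}[([\eta_p^{T_p}-\pi](Q_{p,n}(1)))^2\mid\mathscr{F}_{p-1}]$, where $\pi := \Phi_p(\eta_{p-1}^{T_{p-1}})$, by writing the square of the centred empirical measure in terms of the $V$-statistic $(\eta_p^{T_p})^{\otimes 2}$, decomposing it into its $U$-statistic part $(\eta_p^{T_p})^{\odot 2}$ plus a diagonal remainder of order $1/(T_p-1)$, and invoking a Wald-type pair identity (to be supplied in Appendix \ref{app:tech_nc}) of the form $\mathbb{E}[(\eta_p^{T_p})^{\odot 2}(F)\mid\mathscr{F}_{p-1}] = \pi^{\otimes 2}(F)$ for symmetric $F$. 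That identity kills the $U$-part and leaves only the diagonal contribution, which is of the order $\|Q_{p,n}(1)\|_\infty^{2}/(N\,\pi(G_p))$ after using that, conditionally on $\mathscr{F}_{p-1}$, $T_p$ is negative-binomial with mean $N/\pi(G_p)$; the correction factor $\tilde\delta_p$ in the conclusion is precisely the upper bound on $\mathbb{E}[(T_p-1)^{-1}\mid\mathscr{F}_{p-1}]\cdot(N/\pi(G_p))$ arising from concentration of the negative binomial around its mean.

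Next I would control the semigroup factors $Q_{p,n}(1)$ and the weight ratios $\gamma_p(1)/\gamma_n(1)$ by means of $(\hat M_m)$, exactly as in the proof of Theorem~5.1 of C\'erou et al.~(2011): after $m$ steps the normalized kernel $\hat M_{p,p+m}$ mixes at rate $\hat\beta_p^{(m)}$, and combined with the oscillation control supplied by $\hat\delta_p^{(m)}$ the product of the diagonal bound from the previous paragraph with $\gamma_p(1)^2/\gamma_n(1)^2$ telescopes to the per-step contribution $\hat\delta_p^{(m)}\hat\beta_p^{(m)}/\eta_p(G_p)$.

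Finally, the hypothesis $N>\sum_{s=1}^n \hat\delta_s^{(m)}\hat\beta_s^{(m)}/\eta_s(G_s)$ is used to close a short induction: the per-step conditional bound still contains the factor $\mathbb{E}[(\gamma_p^{T_p}(1)/\gamma_p(1))^2]$, and the summability constraint guarantees that this factor stays below $2$ uniformly in $p$, which in turn produces the universal constant $4$ in front of the sum. The main obstacle I expect is the $U/V$-statistic step: establishing the Wald-type pair identity together with a non-asymptotic bound on the negative moment $\mathbb{E}[(T_p-1)^{-1}\mid\mathscr{F}_{p-1}]$ is what genuinely distinguishes the alive filter from the fixed-population analysis and what forces the extra factor $\tilde\delta_p$ into the statement; once these two ingredients are in hand the rest is a mechanical adaptation of the C\'erou et al.~argument.
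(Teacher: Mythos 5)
Your overall strategy is workable and shares the paper's two essential new ingredients (the exact first-moment identity of Lemma \ref{lem:tech_res} and the $U$-statistic pair identity of Lemma \ref{lem:tech_lem}, both resting on inverse-moment formulae for the negative binomial), but it is organised differently from the paper's proof and has two concrete gaps. The paper does not exploit martingale orthogonality of the second moment directly; instead it proves the single inequality
$$
\mathbb{E}[(\gamma_n^{T_n})^{\otimes 2}(F)] \leq \Big(\tfrac{N-1}{N-2}\Big)^n \mathbb{E}_{\xi}[\eta_1^{\otimes 2} C_{\xi_1}Q_2^{\otimes 2} C_{\xi_2}\cdots Q_n^{\otimes 2}C_{\xi_n}(F)]
$$
for \emph{nonnegative} $F$, with $\xi_p$ i.i.d.~Bernoulli coalescence indicators of success probability $1/(N-1)$, by conditioning backwards and using $(T_n-2)/(T_n-1)\le 1$ and $1/(T_n-1)\le 1/(N-1)$; this verifies eq.~(3.3) of Lemma 3.2 of C\'erou et al.~(2011), after which Proposition 3.4, Theorem 5.1 and Corollary 5.2 of that paper are imported verbatim and deliver the constant $4$, the factor $\tilde{\delta}_s$ and the threshold $N>\sum_s\hat{\delta}_s^{(m)}\hat{\beta}_s^{(m)}/\eta_s(G_s)$ essentially as a black box.

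First gap: in your per-step bound you apply the pair identity to $F=\bar{\varphi}\otimes\bar{\varphi}$ to ``kill the $U$-part'', but the $U$-statistic enters multiplied by the random weight $(T_p-2)/(T_p-1)$, which is not $\mathscr{F}_{p-1}$-measurable and is correlated with $(\eta_p^{T_p})^{\odot 2}(F)$; since your $F$ is not sign-definite you cannot simply bound this weight by $1$ as the paper does for nonnegative $F$. This is repairable (write $(T_p-2)/(T_p-1)=1-1/(T_p-1)$ and bound the correction by $\|F\|_\infty/(N-1)$), but it must actually be carried out and it changes the constants. Second, and more seriously, your mechanism for the constant $4$ --- an induction showing $\mathbb{E}[(\gamma_p^{T_p}(1)/\gamma_p(1))^2]\le 2$ --- would require something like $N\ge 4\sum_s\hat{\delta}_s^{(m)}\hat{\beta}_s^{(m)}/\eta_s(G_s)$ to close without circularity, not the stated $N>\sum_s(\cdot)$; in C\'erou et al.~the threshold and the constant arise from a combinatorial expansion over coalescence patterns, so your route yields a bound of the same form but not with the stated constants under the stated hypothesis. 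A minor point: $\tilde{\delta}_s$ is a quantity inherited from C\'erou et al., not the negative-moment correction you describe --- by \eqref{eq:neg_binom_ineq} that correction is exact, since $\mathbb{E}[(N-1)/(T_p-1)\mid\mathscr{F}_{p-1}]=\Phi_p(\eta_{p-1}^{T_{p-1}})(G_p)$ with no error term.
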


\begin{proof}
The result follows essentially from C\'erou et al.~(2011). To modify the proof to our set-up, we will prove that for $F:\mathsf{E}^2\rightarrow\mathbb{R}_+$ (where the expectation on the L.H.S.~is 
w.r.t.~the stochastic process that generates the SMC algorithm)
\begin{equation}
\mathbb{E}[(\gamma_n^{T_n})^{\otimes 2}(F)] \leq \Big(\frac{N-1}{N-2}\Big)^n \mathbb{E}_{\xi}[\eta_1^{\otimes 2} C_{\xi_1}Q_2^{\otimes 2} C_{\xi_2}\dots Q_n^{\otimes 2}C_{\xi_n}(F)]\label{eq:fund_ineq}
\end{equation}
where for each $n\geq 1$, independently
$$
\mathbb{P}_{\xi}(\xi_n = 1) = 1 - \mathbb{P}_{\xi}(\xi_n = 0) = \frac{1}{N-1}
$$
with corresponding joint expectation $\mathbb{E}_{\xi}$ and $C_1(F)(x,y) = F(x,x)$, $C_0(F)(x,y) = F(x,y)$. Once \eqref {eq:fund_ineq} is proved
this gives a verification of Lemma 3.2, eq.~(3.3)
of C\'erou et al.~(2011), given this, the rest of the argument then follows Proposition 3.4 of C\'erou et al.~(2011) and Theorem 5.1 and Corollary 5.2 in C\'erou et al.~(2011) (note that the fact that we have an upper-bound with $\alpha=0$ (as in C\'erou et al.~(2011)) does not modify the result). We will write expectations w.r.t.~the probability space associated to the particle system enlarged with the
(independent) $\{\xi_n\}_{n\geq 1}$ as $\overline{\mathbb{E}}_{\xi}$.

Thus, we consider the proof of \eqref{eq:fund_ineq}. We have 
$$
\mathbb{E}[(\gamma_n^{T_n})^{\otimes 2}(F)|\mathscr{F}_{n-1}]  =  \gamma_{n}^{T_n}(1)^2\mathbb{E}[(\eta_n^{T_n})^{\otimes 2}(F)|\mathscr{F}_{n-1}].
$$
Now
\begin{eqnarray*}
\mathbb{E}[(\eta_n^{T_n})^{\otimes 2}(F)|\mathscr{F}_{n-1}] & = & \mathbb{E}\Big[\frac{T_n-2}{T_n-1}(\eta_n^{T_n})^{\odot 2}(F) + \frac{1}{T_n-1} \eta_n^{T_n}(C(F))\Big|\mathscr{F}_{n-1}\Big]\\
& \leq & \Phi_n(\eta_{n-1}^{T_{n-1}})^{\otimes 2}(F) + \frac{1}{N-1} \Phi_n(\eta_{n-1}^{T_{n-1}})(C(F)) \\
& \leq & \Big(\frac{N-1}{N-2}\Big)\overline{\mathbb{E}}_{\xi}[\Phi_n(\eta_{n-1}^{T_{n-1}})^{\otimes 2}(C_{\xi_n}(F))|\mathscr{F}_{n-1}]
\end{eqnarray*}
where we have used $(T_n-2)/(T_n-1) \leq 1$, $1/(T_n-1) \leq 1/(N-1)$ and Lemmas \ref{lem:tech_lem} and \ref{lem:tech_res} to obtain the second line.
Thus we have that
\begin{eqnarray*}
\mathbb{E}[(\gamma_n^{T_n})^{\otimes 2}(F)|\mathscr{F}_{n-1}] & \leq &  \gamma_{n}^{T_n}(1)^2 \Big(\frac{N-1}{N-2}\Big) \overline{\mathbb{E}}_{\xi}[\Phi_n(\eta_{n-1}^{T_{n-1}})^{\otimes 2}(C_{\xi_n}(F))|\mathscr{F}_{n-1}] \\
&\leq &  \gamma_{n-1}^{T_{n-1}}(1)^2 \Big(\frac{N-1}{N-2}\Big) \overline{\mathbb{E}}_{\xi}[(\eta_{n-1}^{T_{n-1}})^{\otimes 2}(Q_n C_{\xi_n}(F))|\mathscr{F}_{n-1}].
\end{eqnarray*}
Using the above inequality, one can repeat the argument inductively to deduce \eqref{eq:fund_ineq}. This completes the proof of the Proposition.
\end{proof}

\begin{rem}
The significance of the result is simply that if 
$$
\sup_s \frac{\hat{\delta}_s^{(m)}\hat{\beta}_s^{(m)}}{\eta_s(B_{\epsilon}(y_s))} < c
$$
then if $N>cn$ the relative variance will be constant in $n$. This will be useful for the PMCMC algorithm in Section \ref{sec:PMCMC}.
\end{rem}

\section{Numerical Implementation}\label{sec:numerics}

\subsection{ABC Filtering}\label{sec:linear_gaussian}

\subsubsection{Linear Gaussian model}

To investigate the alive particle filter, we consider the following linear Gaussian state space model (with all quantities one-dimensional):
\begin{align*}
Z_n=&Z_{n-1}+V_n,\\
Y_n=&2Z_n+W_n,\;\;\;\;\;\;\; t\geq 1
\end{align*}
where $V_n\sim{}{\mathcal{N}}(0, \sigma_v^2)$, and independently $W_n\sim{}{\mathcal{N}}(0, \sigma_w^2)$. Our objective is to fit an ABC approximation of this HMM; this is simply to investigate the algorithm constructed in this article.

\subsubsection{Set up}

 Data are simulated from the (true) model for $T = 5000$ time steps and $\sigma_v^2\in\{0.1, 1, 5\}$ and $\sigma_w^2\in{}\{0.1, 1, 5\}$. For $n\in \{1,\dots,T\}$, if $p_n\geq \frac{1}{500}$, where $p_t\stackrel{\textrm{i.i.d.}}{\sim}\mathcal{U}_{[0,1]}$ (the uniform distribution on $[0,1]$), we have $Y_n = c$, where $c\in\{80, 90, \dots\, 140, 150\}$. Recall $B_{\epsilon}(y)=\{u:|u-y|<\epsilon\}$  and we consider a fixed sequence of $\epsilon$ which values belong to set \{5, 10, 15\}, i.e. $\epsilon\in\{5, 10, 15\}$. We compare the alive particle filter to the approach in Jasra et al.~(2012).

%

The proposal dynamics are as described in Section \ref{sec:new_smc}.
For the approach in Jasra et al.~(2012), $N=2000$ and we resample every time. For the alive particle filter, we used $N=1500$ particles; this is to keep the computation time approximately equal.
We also estimate the normalizing constant via the alive filter at each time step and compare it with `exact' values obtained via the Kalman filter in the limiting case $\epsilon = 0$. To assess the performance in normalizing constant estimation, the relative variance is estimated via independent runs of the procedure.

Our results are constructed as two parts. In the first part, we compare the performance of two particle filters under different scenarios. In the second part, we focus on examples where the approach in Jasra et al.~(2012) collapses.
All results were averaged over $50$ runs. We note that, with regards to the results in this Section and the approach in Le Gland \& Oudjane (2004); generally similar conclusions can be drawn with regards to comparison to the approach in Jasra et al.~(2012). 


\subsubsection{Part \uppercase\expandafter{\romannumeral1}}

In this part, the analyses of the alive particle filter were completed in approximately 115 seconds and approximately 103 seconds were taken for the approach in Jasra et al.~(2012) (which we just term the particle filter). Our results are shown in Figures \ref{fig:fig1}-\ref{fig:fig6}.

Figure \ref{fig:fig1} displays the log relative error for the alive filter to the particle filter. We present the time evolution of the $\mathbb{L}_1$ log relative error between the `exact' and estimated first moment. From our results, the mean log relative error for each panel is $\{0.06, 0.04, 0.07\}$. 
Figure \ref{fig:fig2} plots the absolute $\mathbb{L}_1$ error of the alive particle filter error across time. These results indicate, in the scenarios under study, that both filters are performing about the same time with regards to estimating the filter. This is unsurprising as both methods use essentially the same information,
and the outlying values do not lead to a collapse of the particle filter. In addition, the behaviour in Figure \ref{fig:fig2}, which is predicted in Theorem \ref{theo:time_uniform} under strong assumptions, appears to hold in a situation where the state-space is non-compact.

In Figure \ref{fig:fig3}, we show the time evolution of the log of the normalizing constant estimate for three approaches, i.e. Kalman filter (black `--' line), new ABC filter (red `-$\cdot$-' line) and SMC method (blue `$\cdot \cdot$' line). Figure \ref{fig:fig4} displays the (log) relative variance of
the estimate of the normalizing constant via the alive particle filter, when using the Kalman filter as the ground truth. In Figure \ref{fig:fig3}, there is unsurprisingly a bias in estimation of the normalizing constant, as the ABC approximation is not exact, i.e. $\epsilon \neq 0$. In Figure \ref{fig:fig4}
the linear decay in variance proven in Proposition \ref{prop:non_asymp} is demonstrated (although under a log transformation).


In Figure \ref{fig:fig5} and \ref{fig:fig6}, we show the number of particles used at each time step (that is to achieve $N$ alive particles) of the alive filter (Figure \ref{fig:fig5}) and the number of alive particles for the standard particle filter (Figure \ref{fig:fig6}).
Both Figures illustrate the effect of outlying data, where the alive filter has to work `harder' (i.e.~assigns more computational effort), whereas the standard filter just loses particles.

%

\vspace{1cm}
\begin{figure}[h]
\begin{center}
\scalebox{0.2}{\includegraphics{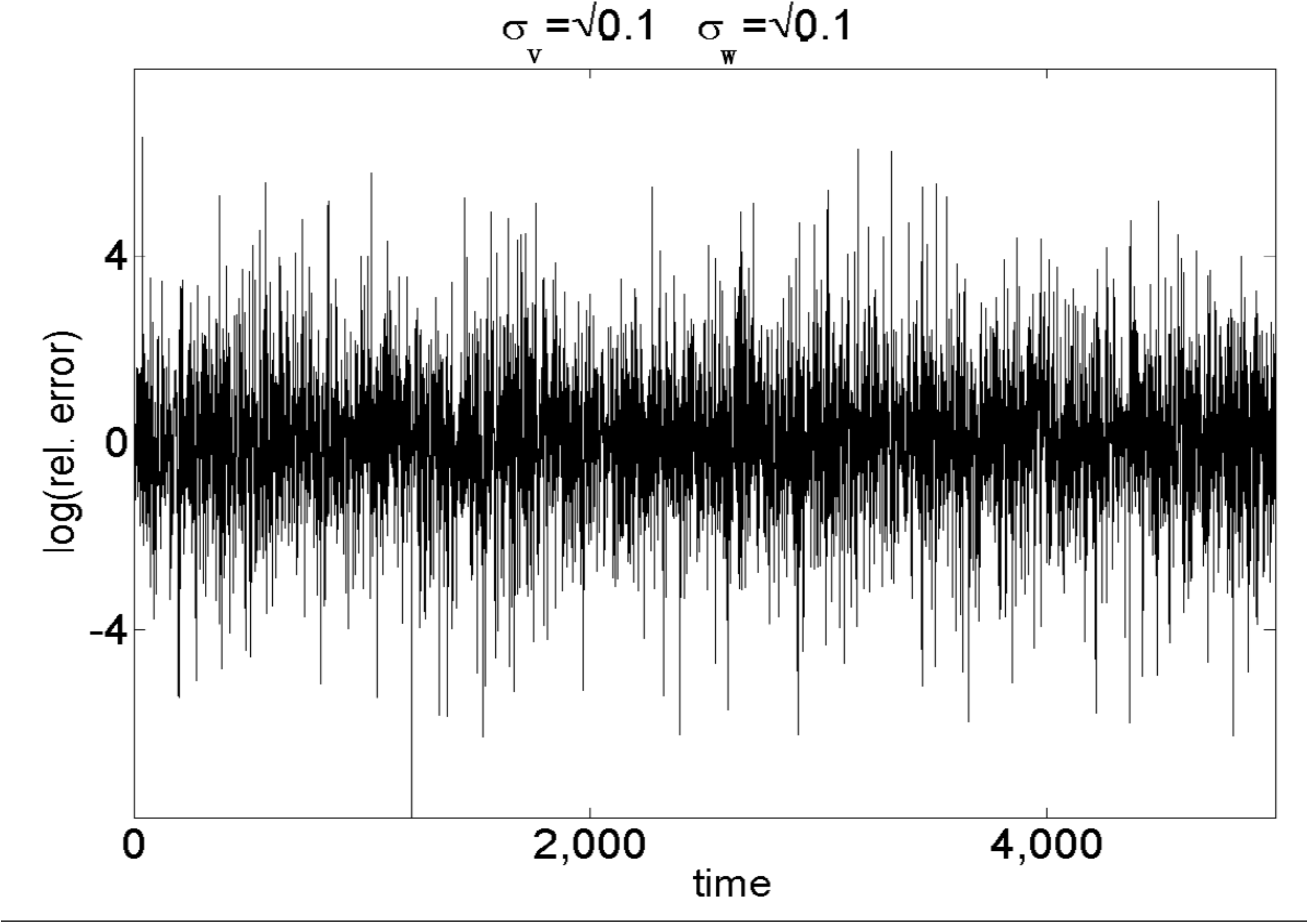}}\scalebox{0.2}{\includegraphics{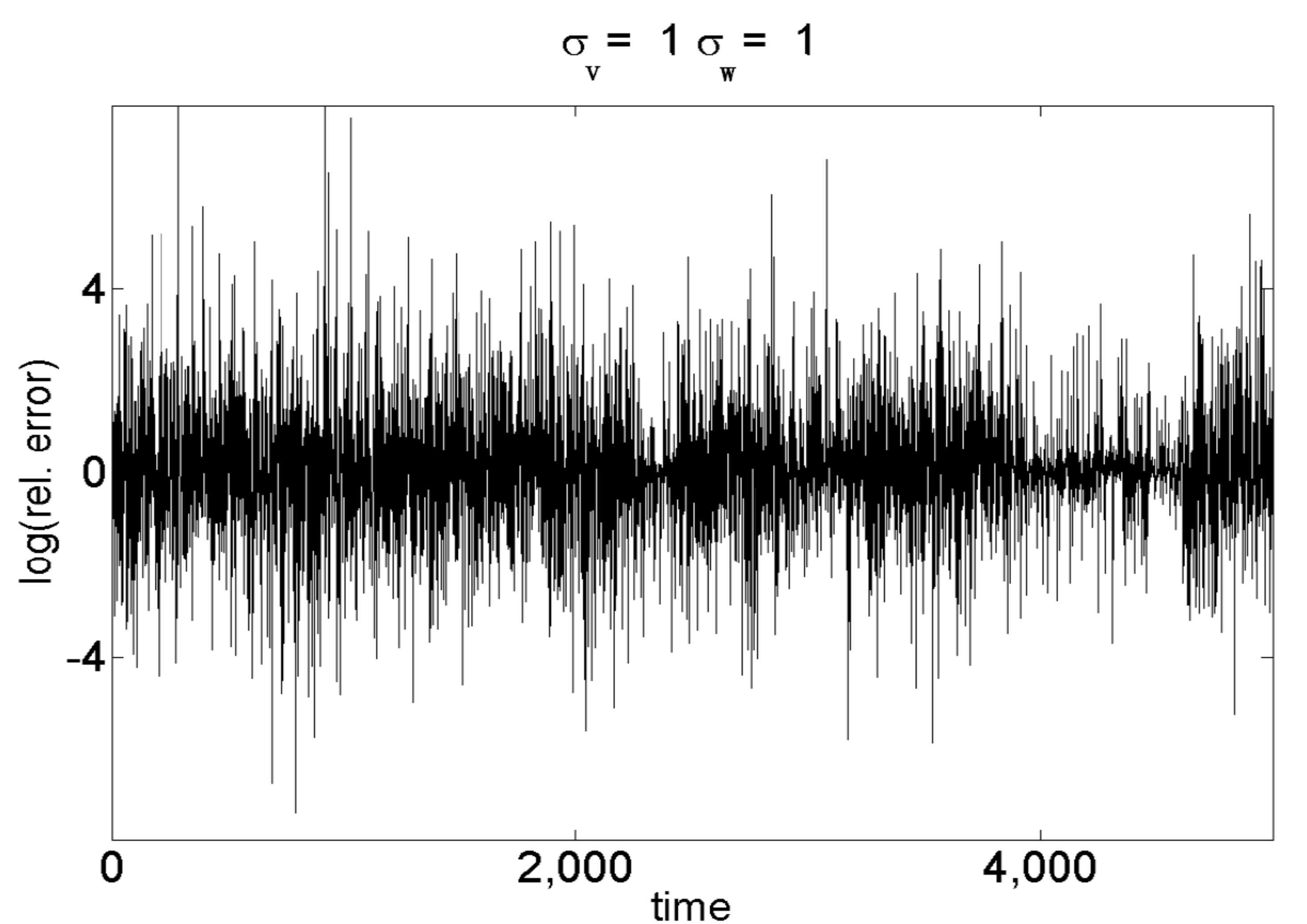}}\scalebox{0.2}{\includegraphics{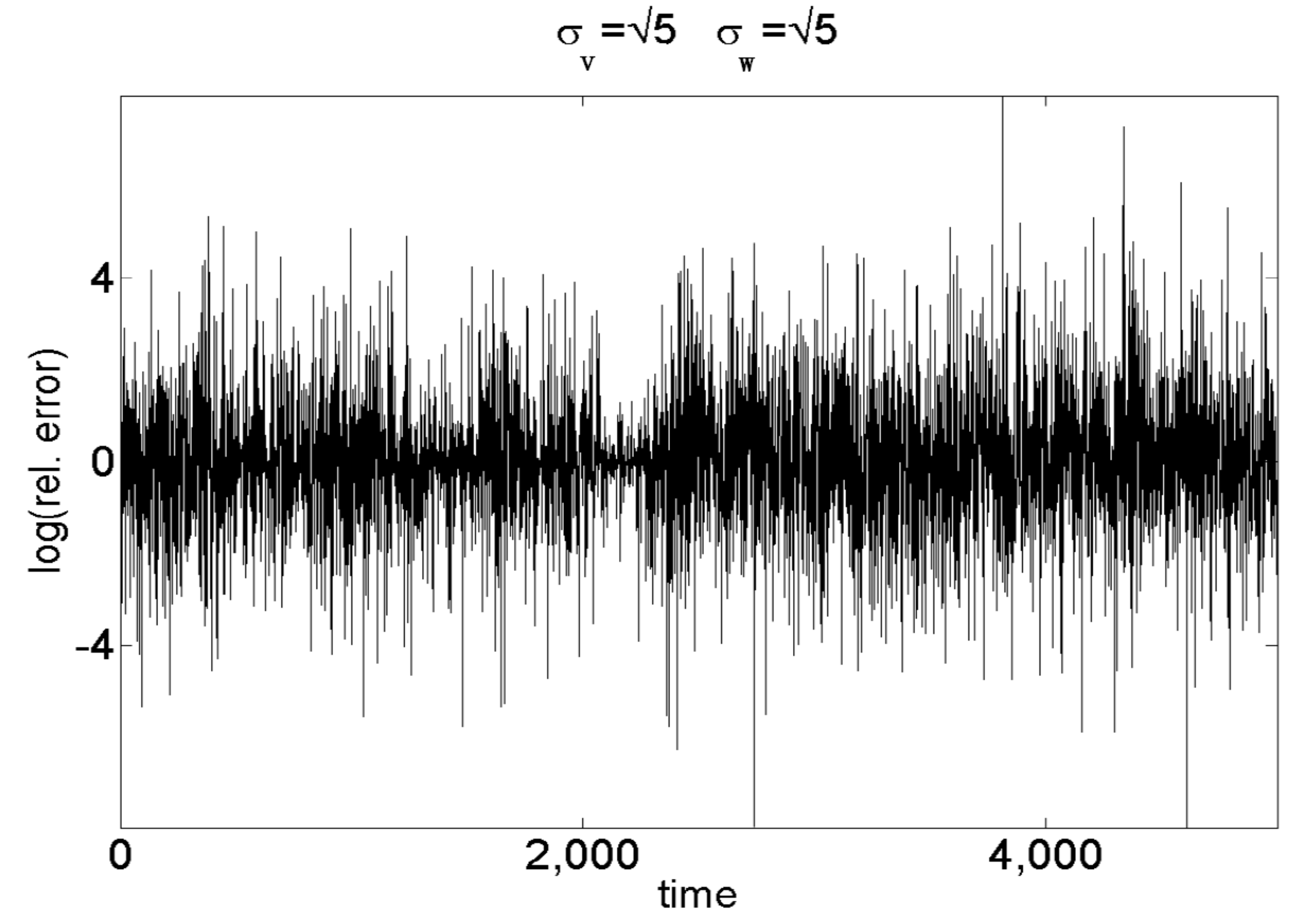}}
\caption{{\footnotesize Estimation error of the first moment for the linear state space model. Each panel displays (Log) the ratio of $\mathbb{L}_1$ error of the alive filter to old filter.}\label{fig:fig1}}
\end{center}
\end{figure}

\begin{figure}[h]
\begin{center}
\scalebox{0.2}{\includegraphics{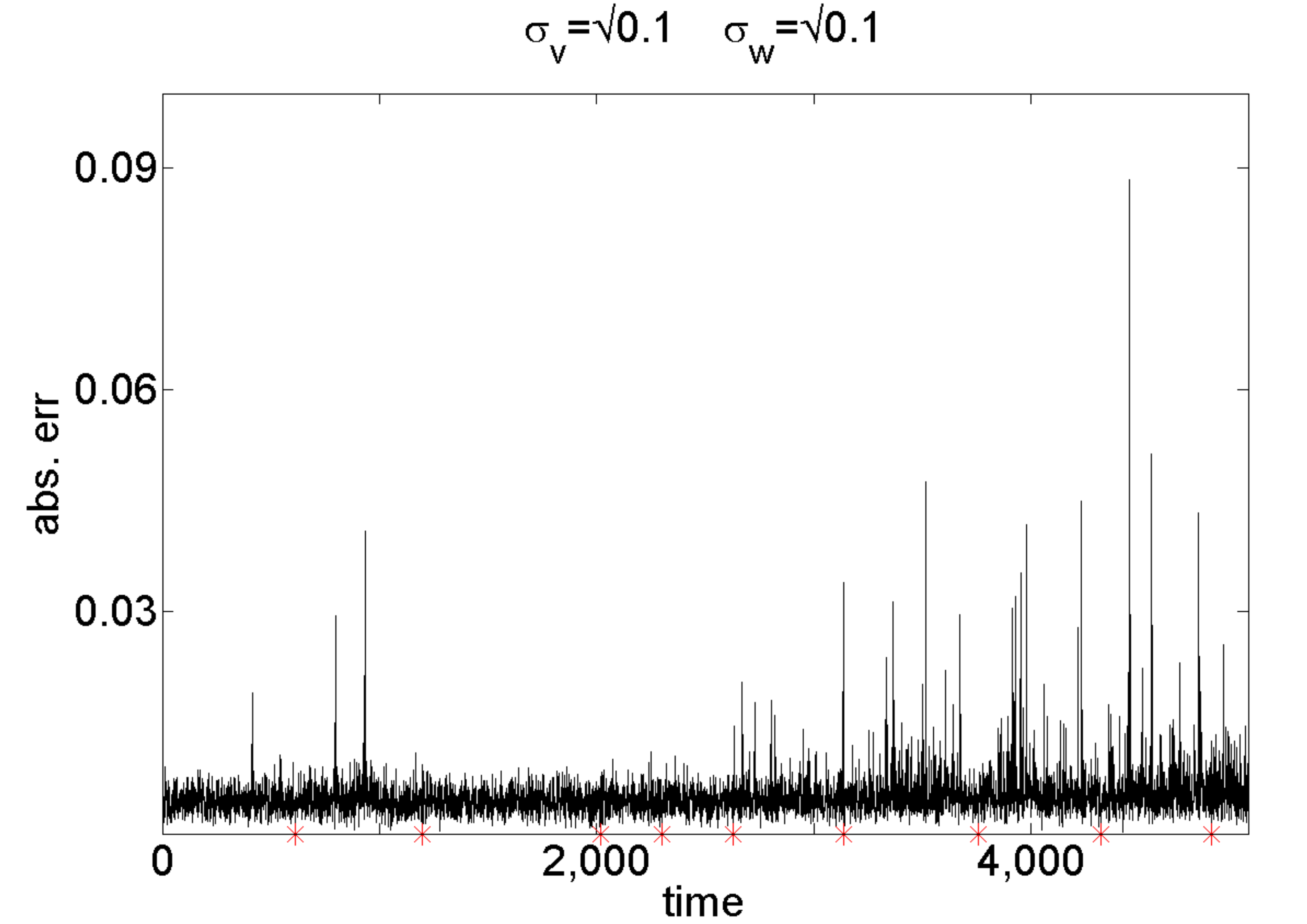}}\scalebox{0.2}{\includegraphics{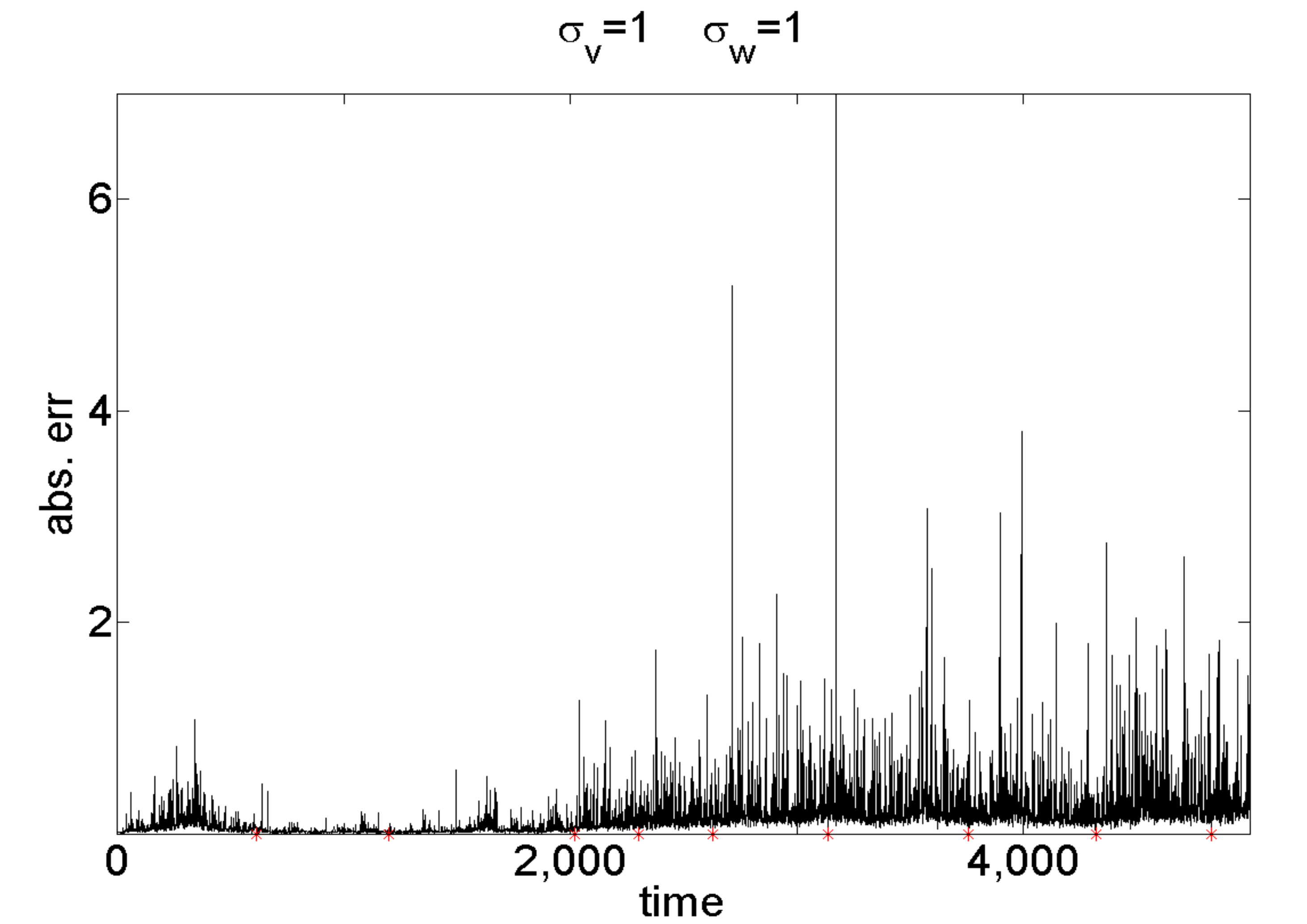}}\scalebox{0.2}{\includegraphics{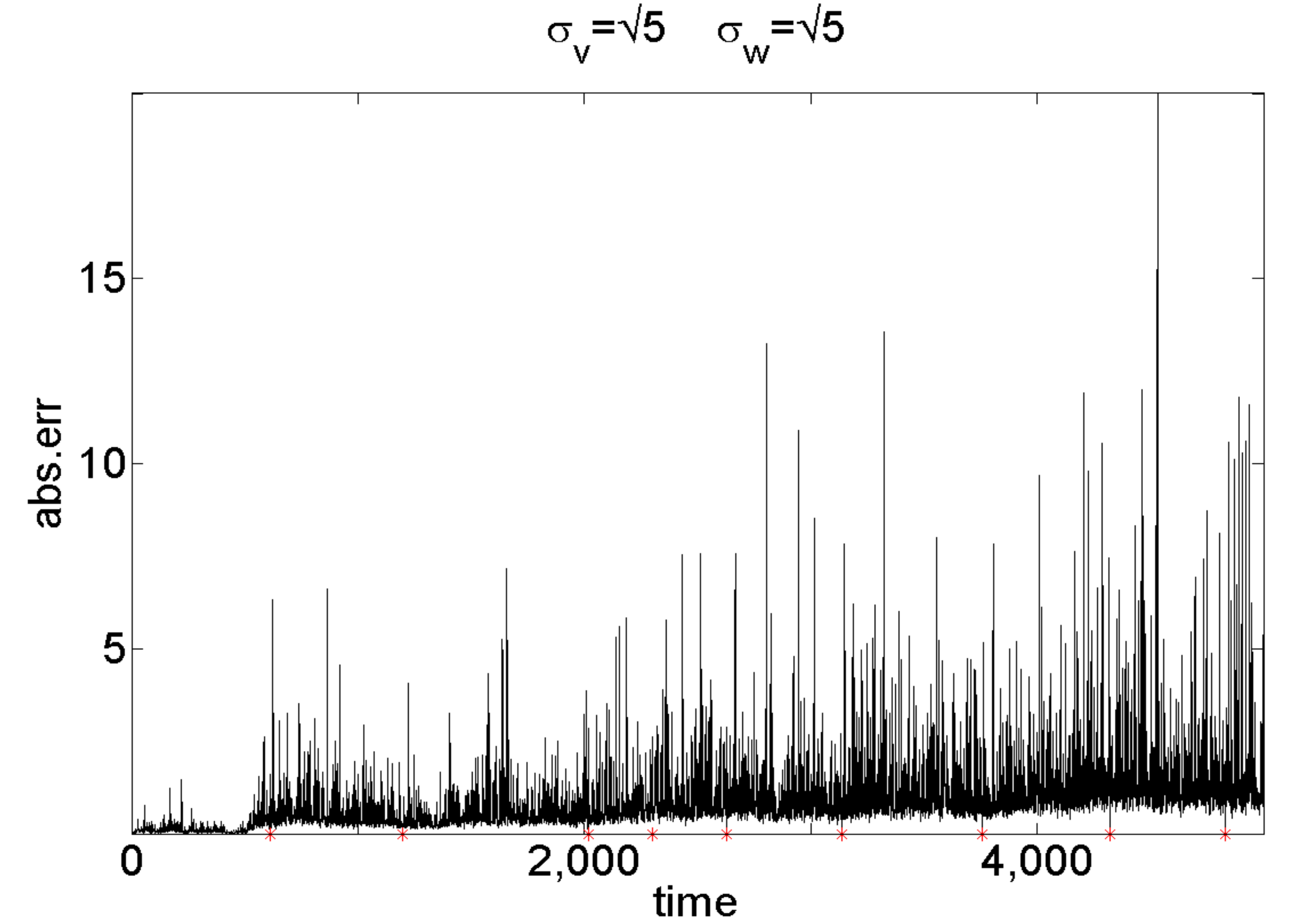}}
\caption{{\footnotesize Estimation error of the first moment for the linear state space model using the alive particle filter (red `$\star$' indicates the x-axis position of outlier)}\label{fig:fig2}}
\end{center}
\end{figure}

\begin{figure}[h]
\begin{center}
\scalebox{0.2}{\includegraphics{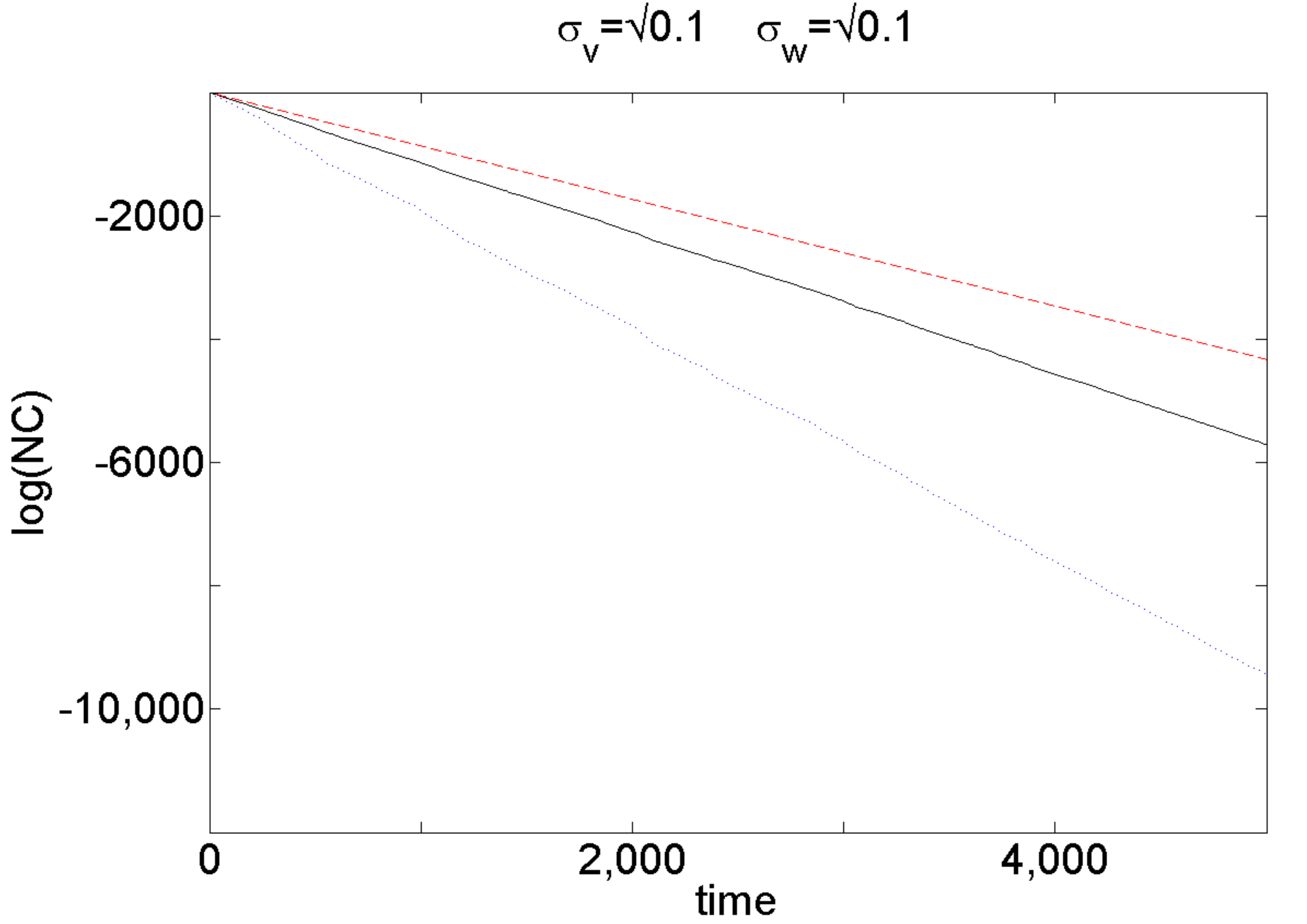}}\scalebox{0.2}{\includegraphics{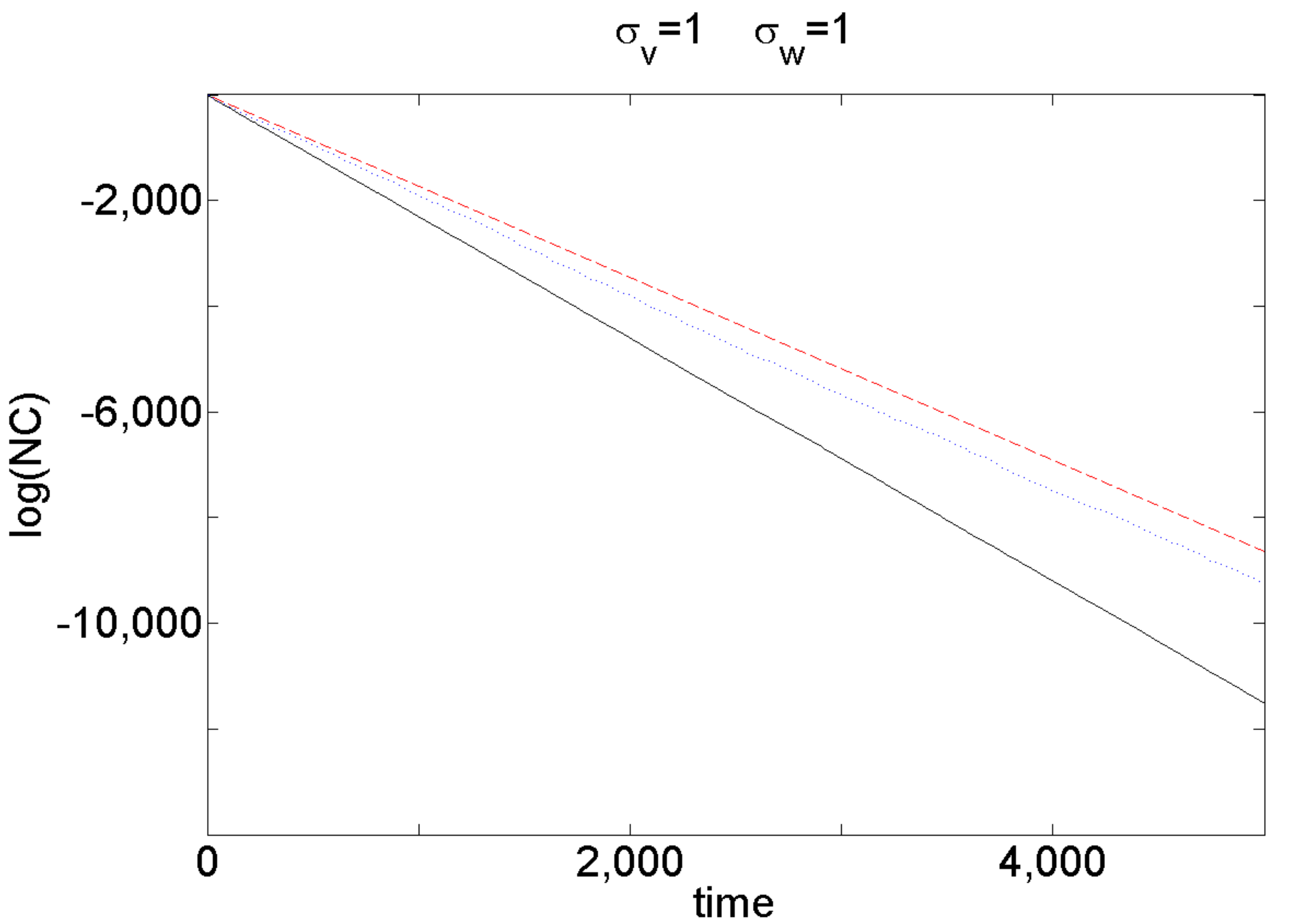}}\scalebox{0.2}{\includegraphics{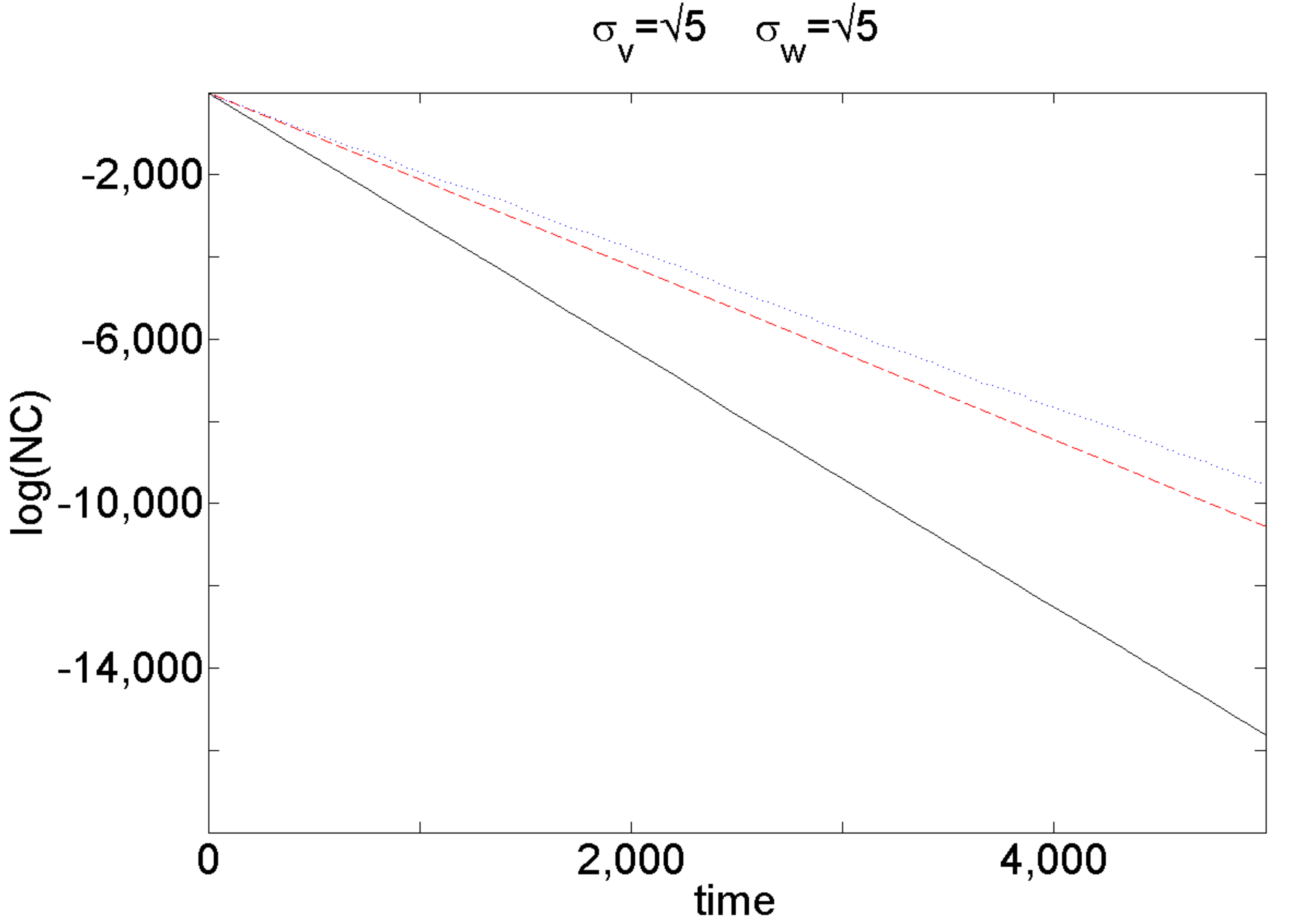}}
\caption{{\footnotesize Estimated normalizing constant for the linear state space model: Kalman filter (black '--'), alive filter (red '-$\cdot$-') and old filter (blue '$\cdot \cdot$'). Each panel displays the estimated normalizing constant across time.}\label{fig:fig3}}
\end{center}
\end{figure}

\begin{figure}[h]
\begin{center}
\scalebox{0.2}{\includegraphics{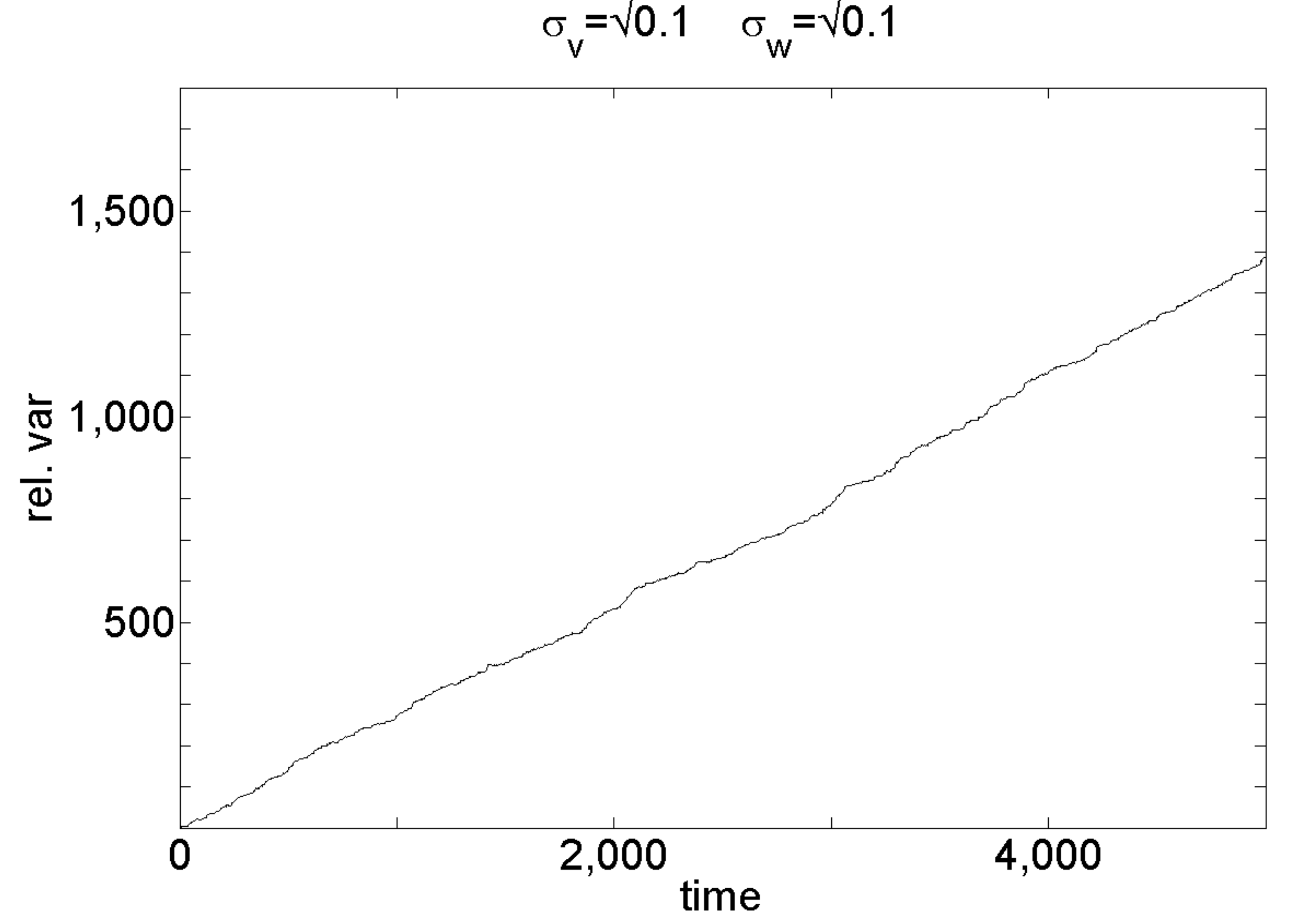}}\scalebox{0.2}{\includegraphics{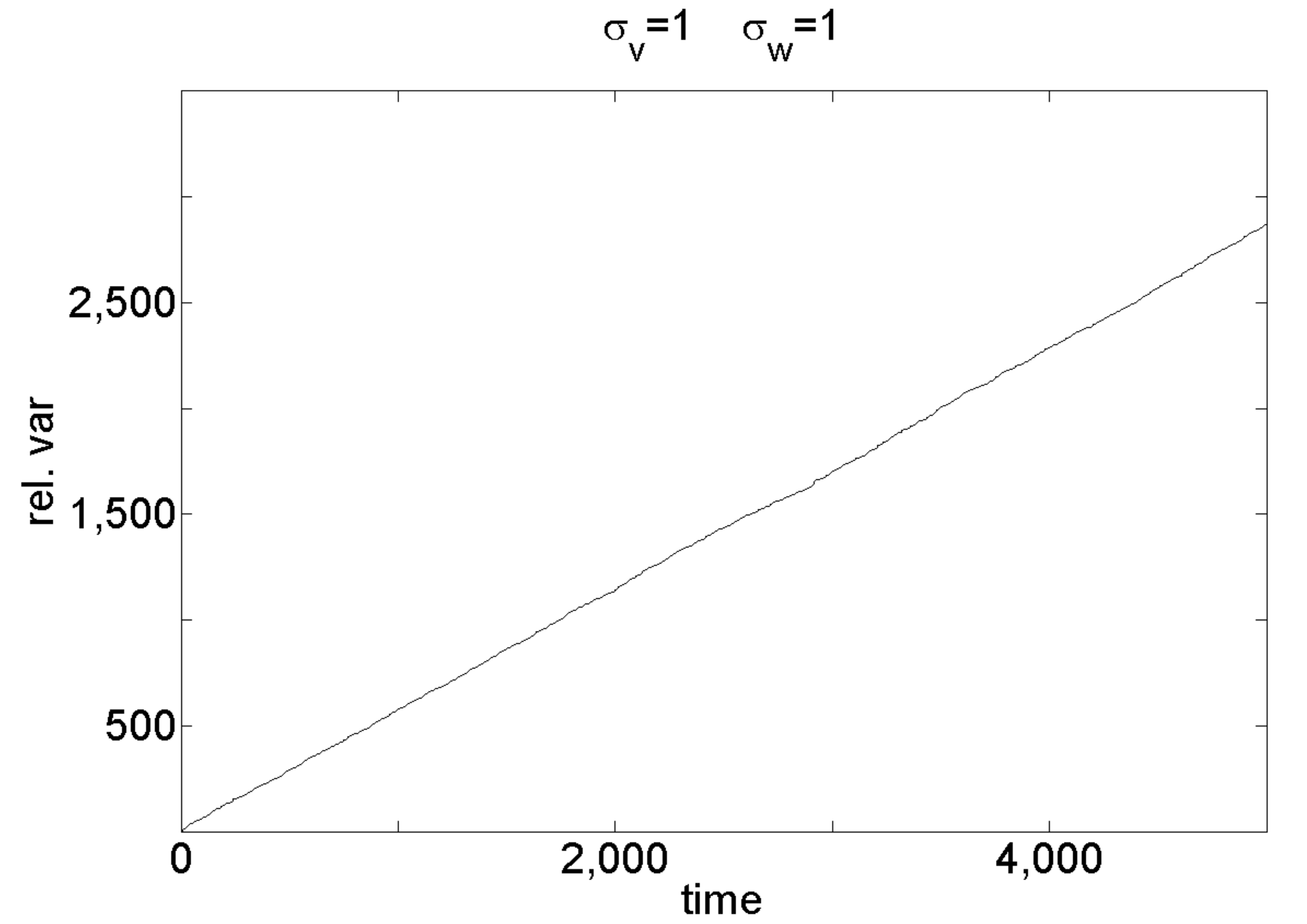}}\scalebox{0.2}{\includegraphics{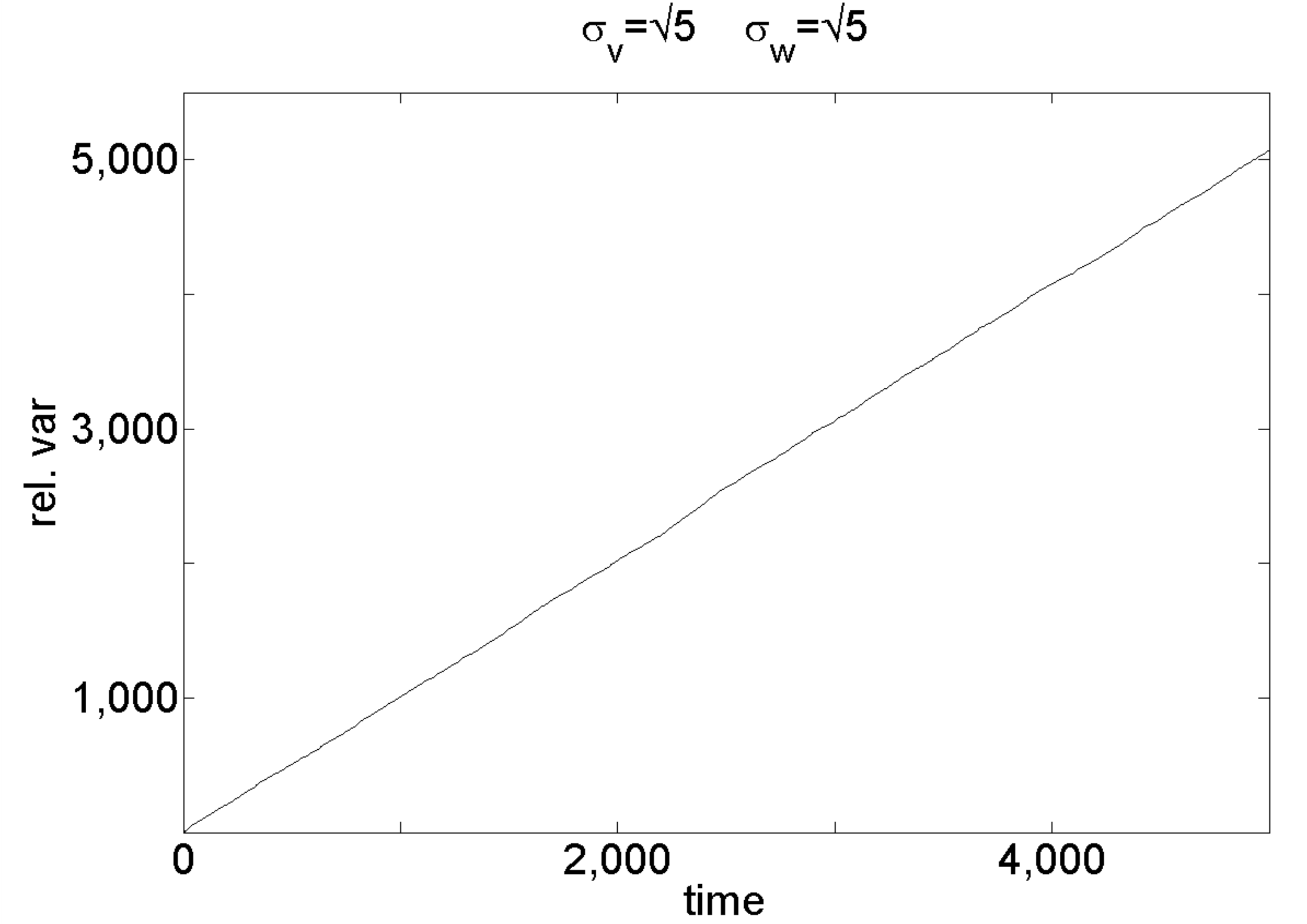}}
\caption{{\footnotesize (Log) Relative variance of normalizing constant of alive filter to Kalman filter for the linear state space model. Each panel displays the relative variance across time.}\label{fig:fig4}}
\end{center}
\end{figure}

\begin{figure}[h]
\begin{center}
\scalebox{0.2}{\includegraphics{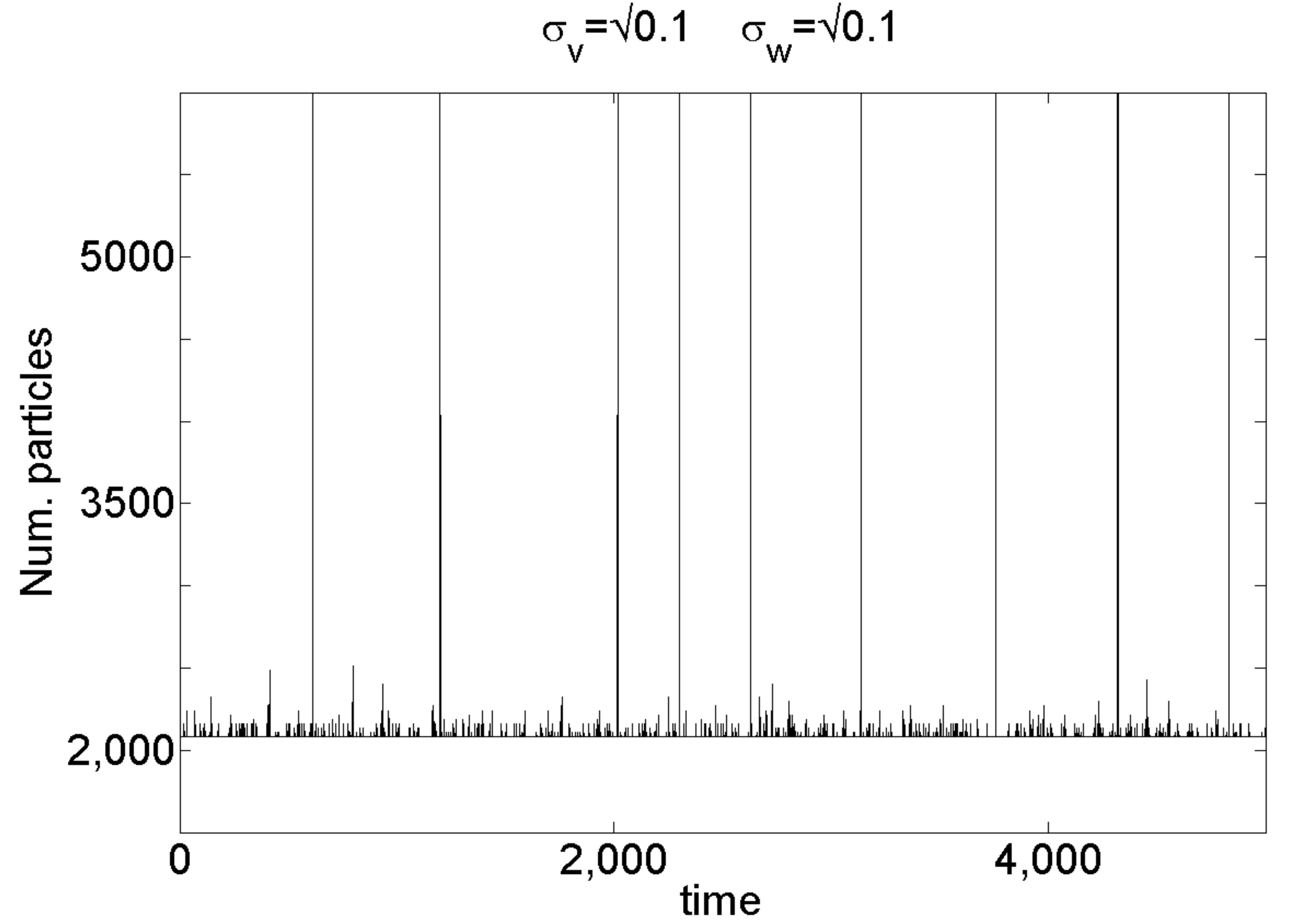}}\scalebox{0.2}{\includegraphics{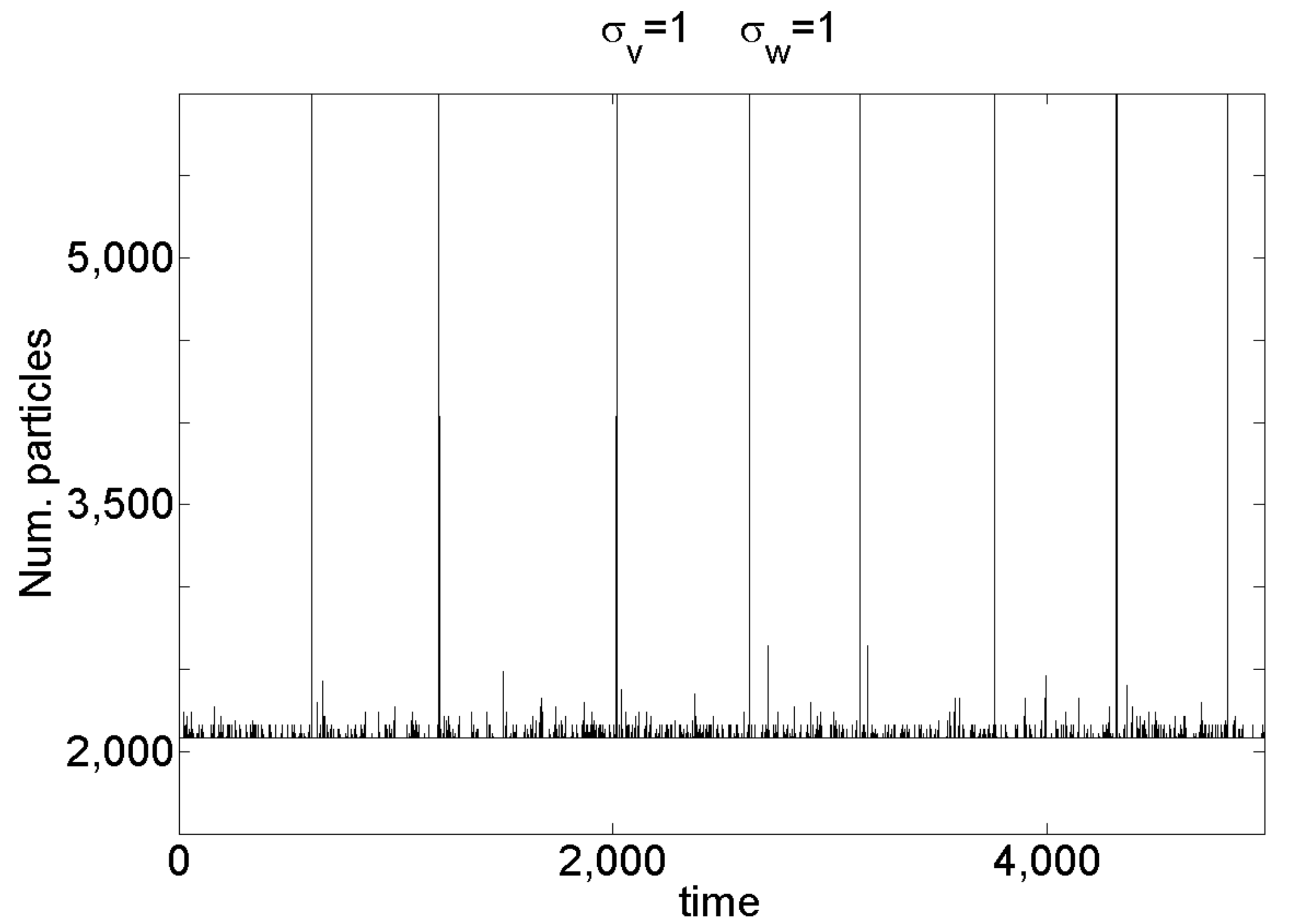}}\scalebox{0.2}{\includegraphics{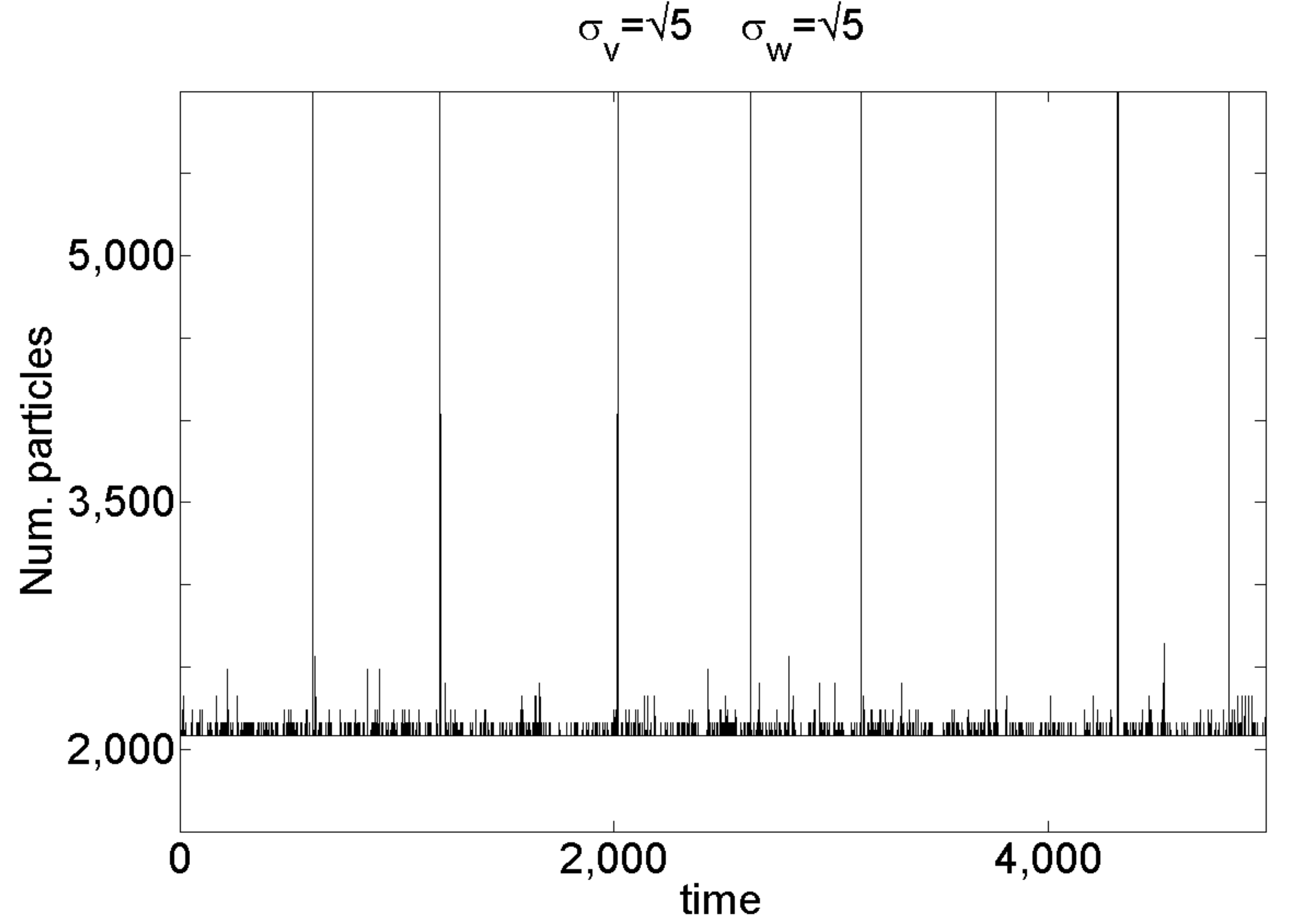}}
\caption{{\footnotesize Number of particles used for alive filter for the linear state space model. Each panel displays the number of particles across time.}\label{fig:fig5}}
\end{center}
\end{figure}

\begin{figure}[h]
\begin{center}
\scalebox{0.2}{\includegraphics{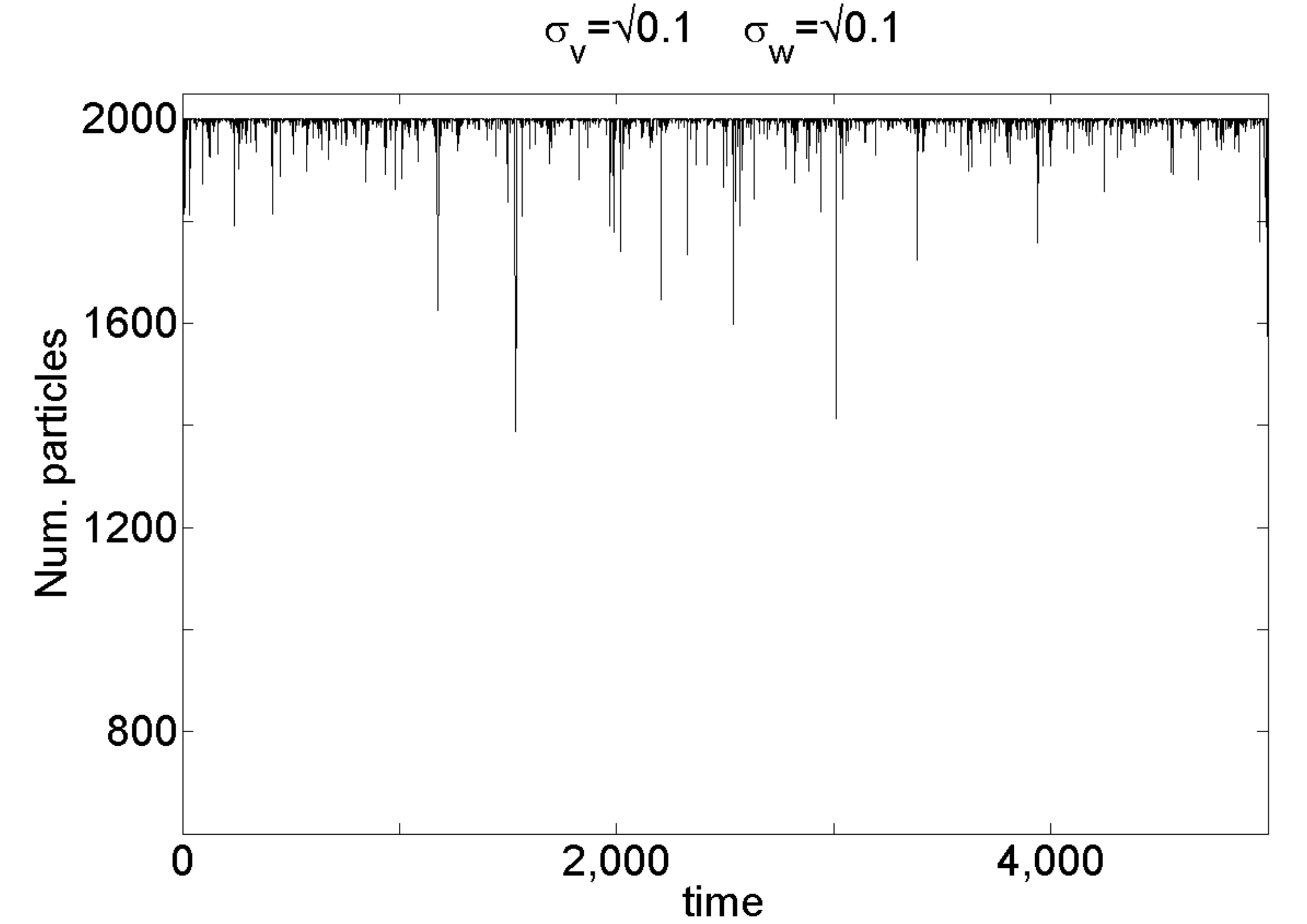}}\scalebox{0.2}{\includegraphics{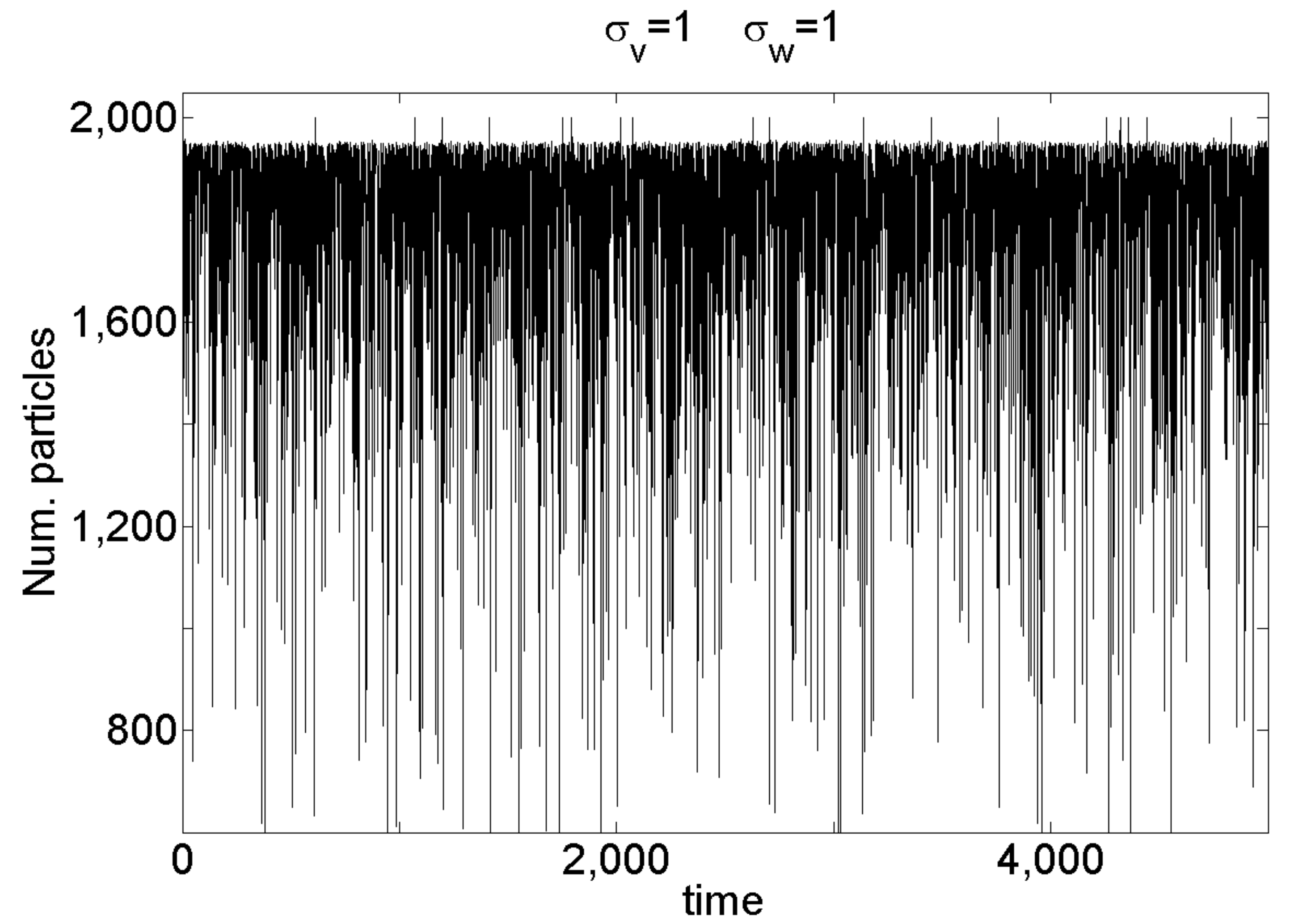}}\scalebox{0.2}{\includegraphics{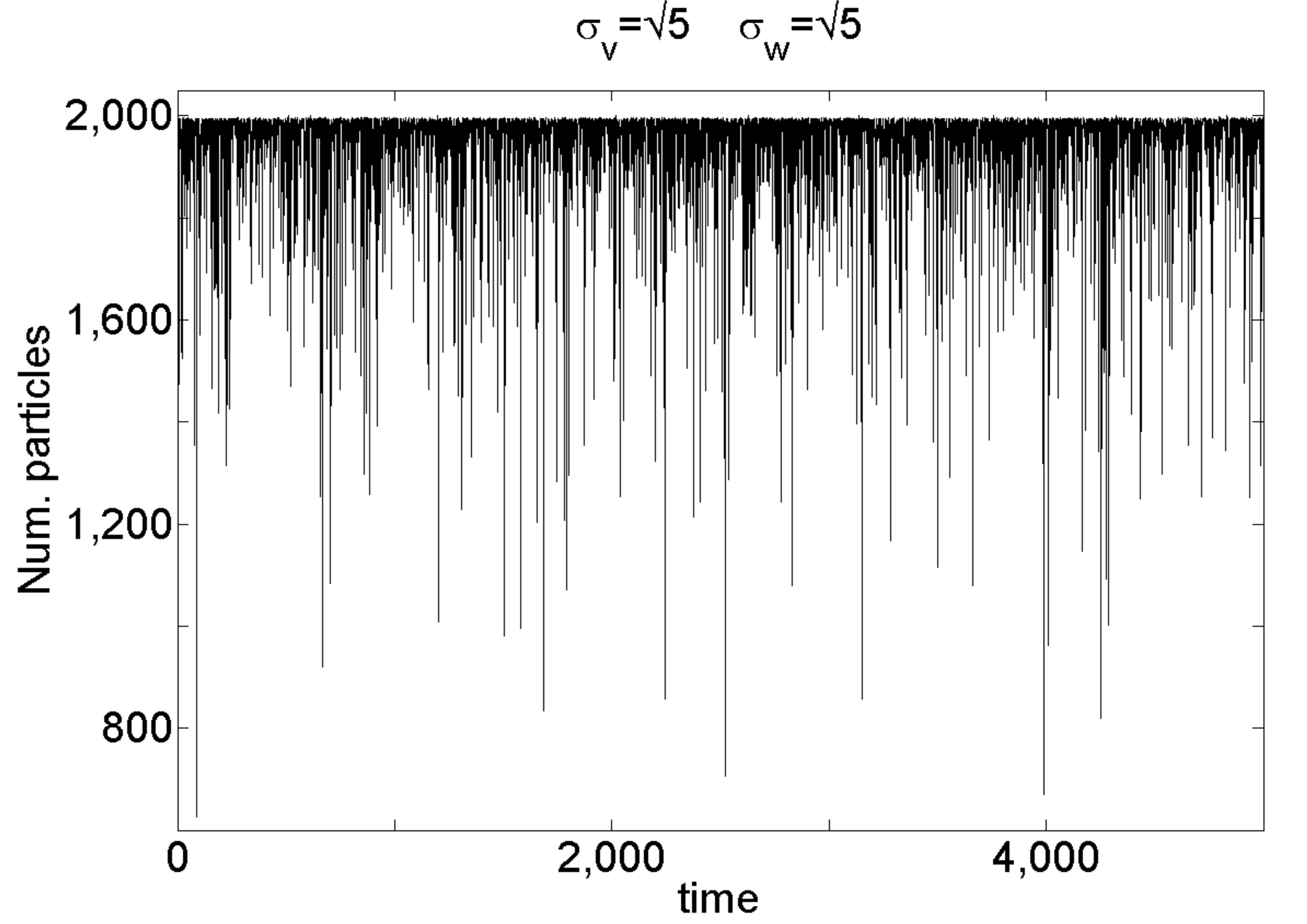}}
\caption{{\footnotesize Number of particles used for the particle filter of Jasra et al.~(2012) for the linear state space model. Each panel displays the number of particles across time.}\label{fig:fig6}}
\end{center}
\end{figure}

\subsubsection{Part \uppercase\expandafter{\romannumeral2}}

In this part, we keep the initial conditions the same as in the previous Section but change the value of $\epsilon$. Instead of using $\epsilon\in\{5, 10, 15\}$, we set smaller values to $\epsilon$, i.e.~$\epsilon\in\{3, 6, 12\}$ (recall the smaller $\epsilon$, the closer the ABC approximation is to the true HMM (Theorem 1 of Jasra et al.~(2012)). This change makes the standard particle filter collapse whereas the alive filter does not have this problem. 
All results were averaged over $50$ runs and our results are shown in Figures \ref{fig:fig7}-\ref{fig:fig8}.

In Figure \ref{fig:fig7}, we present the true simulated hidden trajectory along
with a  plot of the estimated $X_t$ given by the two particle filters across time when $(\sigma_v, \sigma_w)=(\sqrt{5}, \sqrt{5})$. As shown in Figure \ref{fig:fig7}, the alive filter can provide better estimation versus the old particle filter.
Figure \ref{fig:fig8} displays the log relative error of the alive filter to old particle filter, which supports the previous point made, with regards to estimation of the hidden state.
Based upon the results displayed, the alive filter can provide good estimation results under the same conditions when the old particle filter collapses.


\begin{figure}[h]
\begin{center}
\scalebox{0.25}{\includegraphics{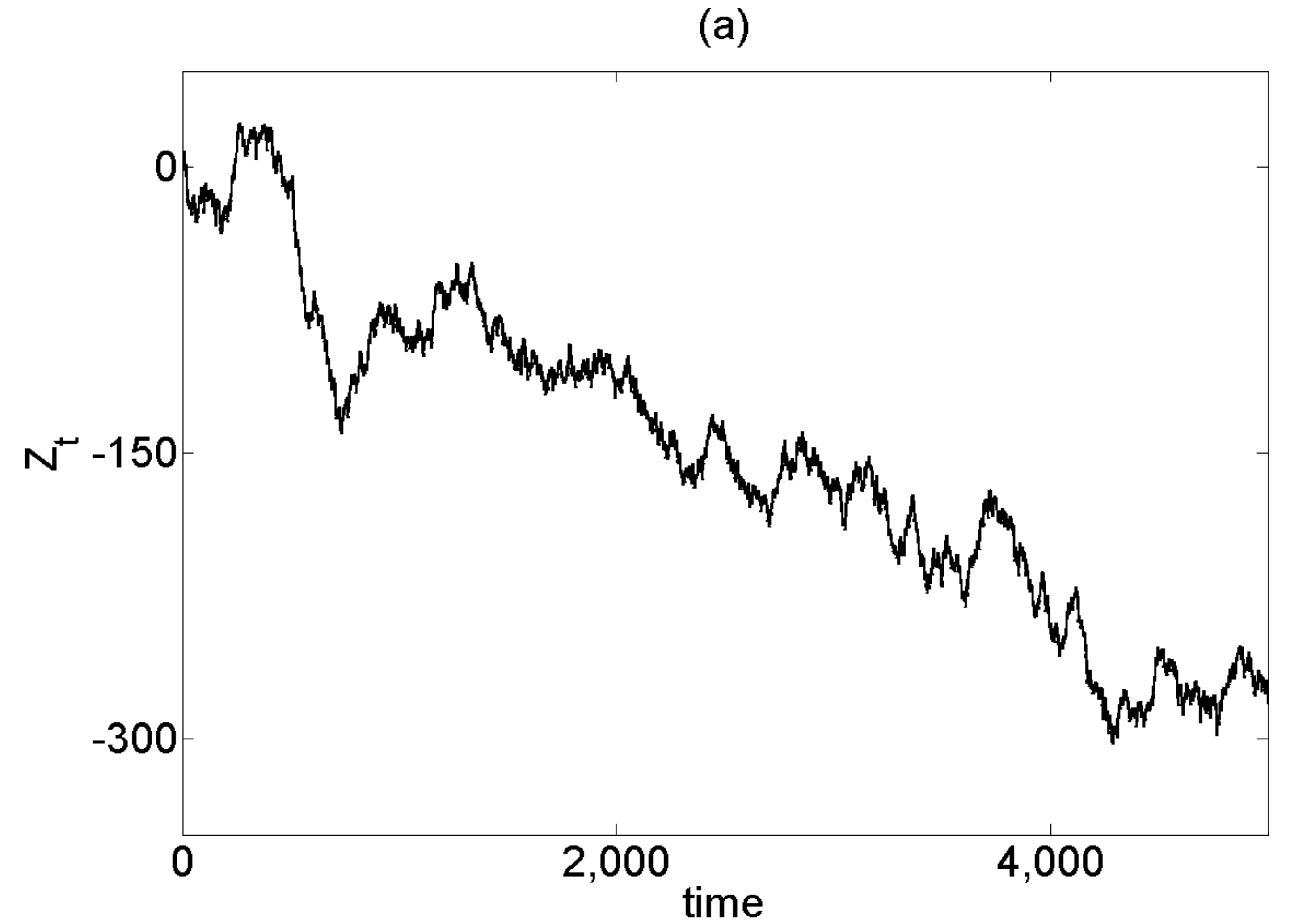}}\scalebox{0.25}{\includegraphics{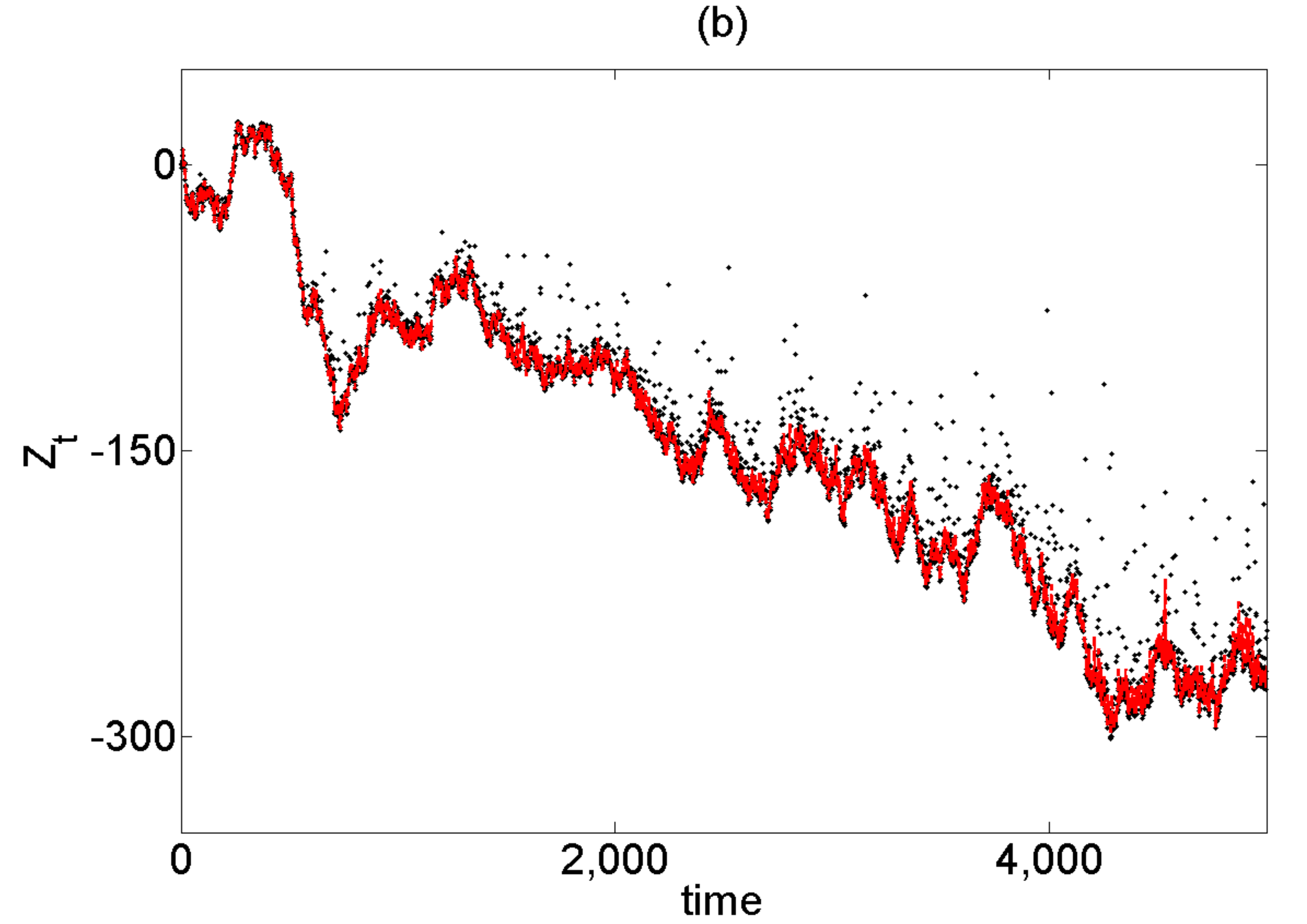}}
\caption{{\footnotesize (a) `True' $Z_t$ and (b) estimated $Z_t$ across time for the linear state space model, where red ('--') indicates the alive particle filter and black '$\cdot \cdot$' indicates the particle filter.}\label{fig:fig7}}
\end{center}
\end{figure}

\begin{figure}[h]
\begin{center}
\scalebox{0.2}{\includegraphics{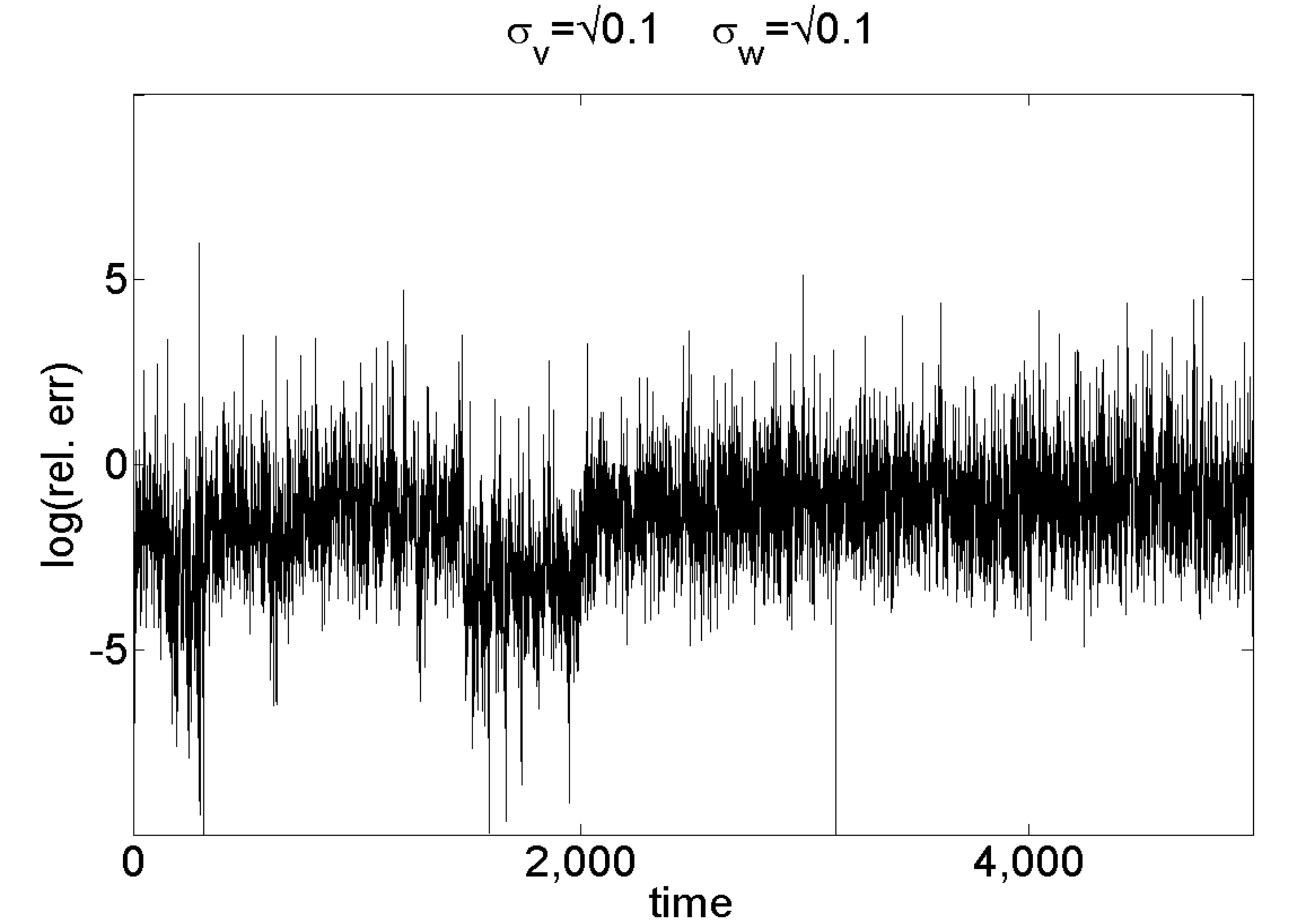}}\scalebox{0.2}{\includegraphics{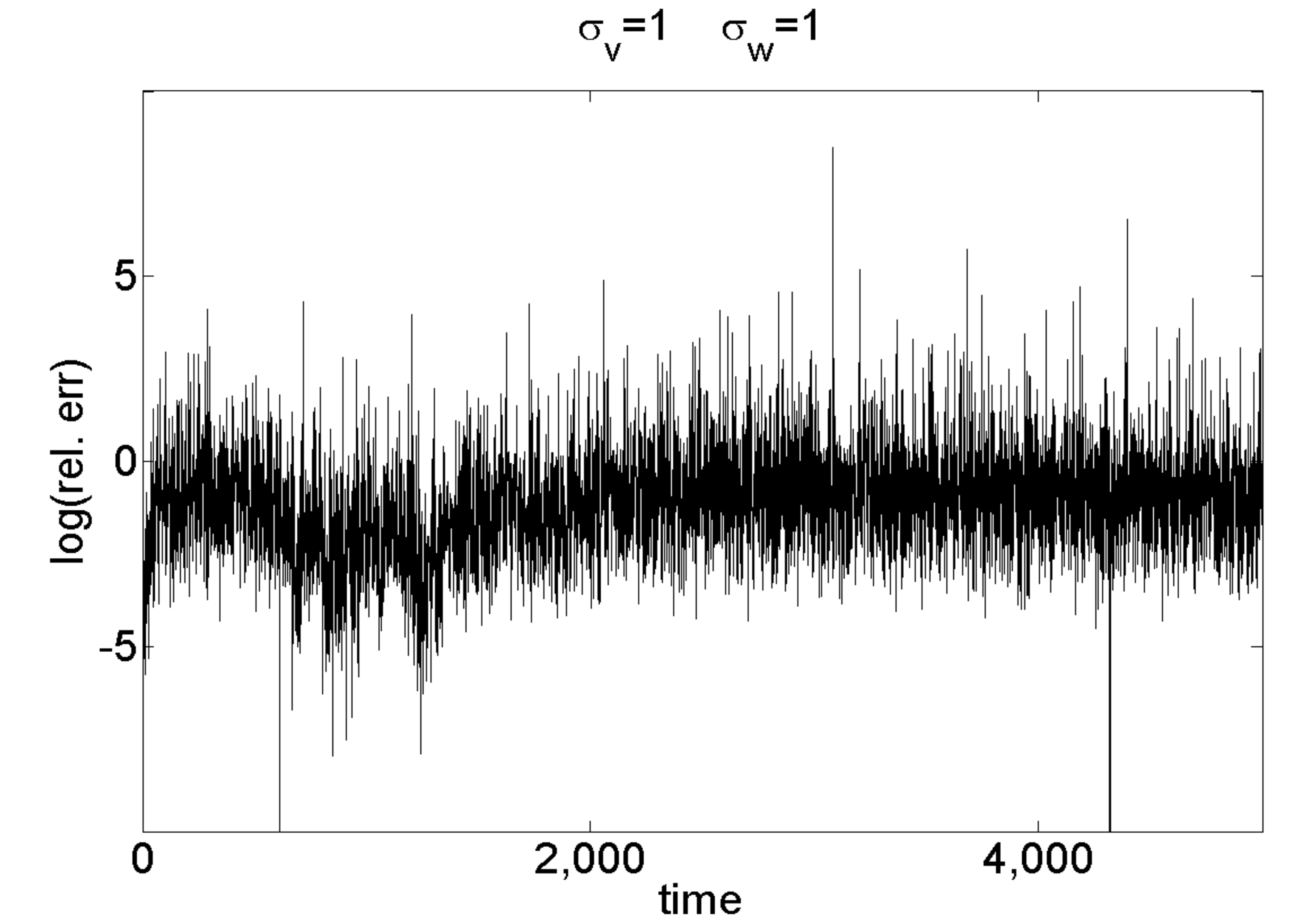}}\scalebox{0.2}{\includegraphics{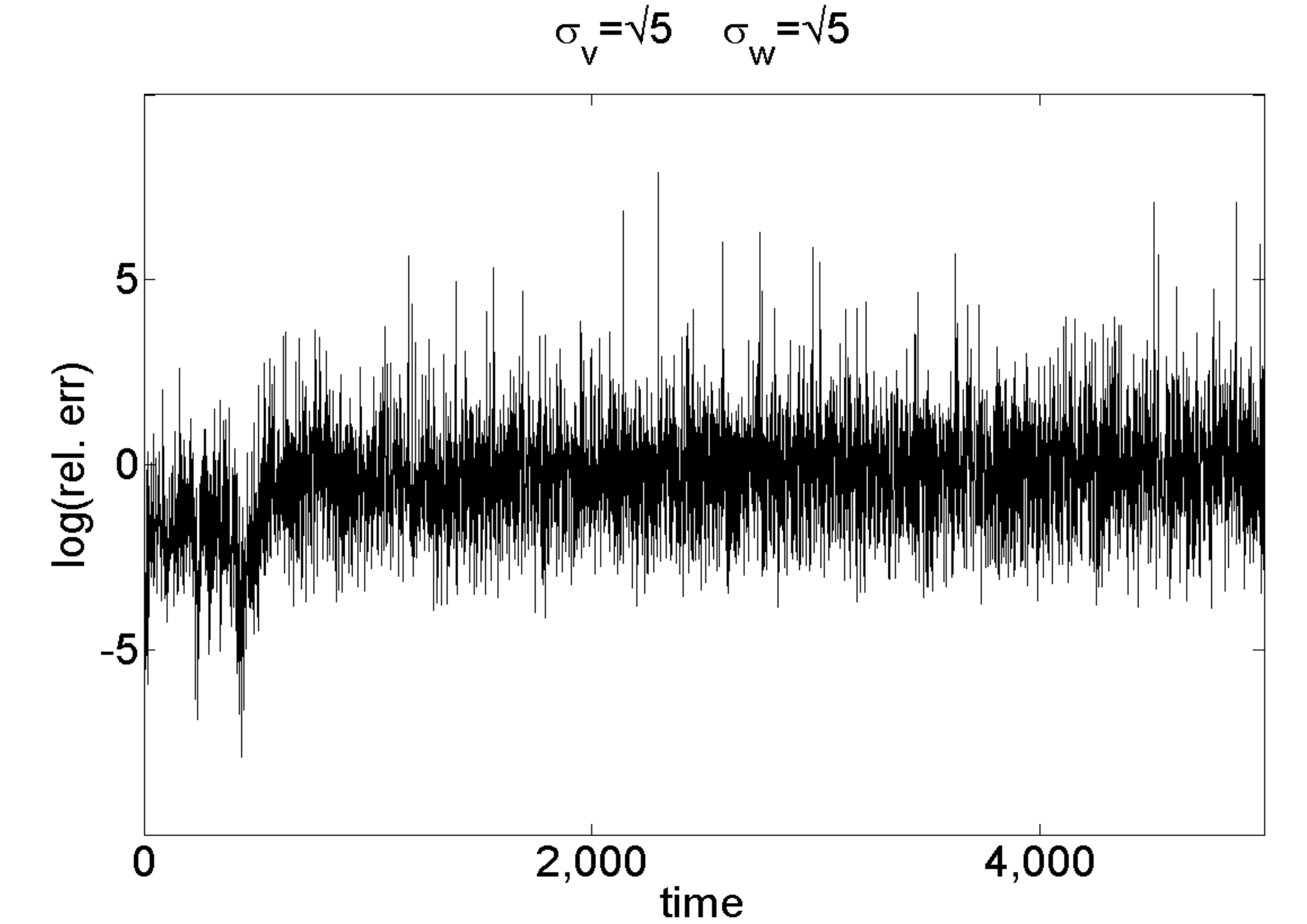}}
\caption{{\footnotesize Estimation error of the first moment for the linear state space model. Each panel displays (Log) the ratio of $\mathbb{L}_1$ error of alive filter to the particle filter.}\label{fig:fig8}}
\end{center}
\end{figure}


\subsection{Particle MCMC}\label{sec:PMCMC}

We now utilize the results in Propositions \ref{prop:unbiased}-\ref{prop:non_asymp}.
In particular, Proposition \ref{prop:unbiased} allows us to construct an MCMC method
for performing static parameter inference in the context of ABC approximations of HMMs.

Recall Section \ref{sec:motivating_ex}. Our objective is to sample from the posterior density:
\begin{equation}
\pi(\theta|y_{1:n}) = \frac{\int_{\mathsf{X}^n}\prod_{k=1}^n g_{\theta}^{\epsilon}(y_k|z_k)f_{\theta}(z_{k}|z_{k-1})dz_{1:n}\pi(\theta)}{\int_{\mathsf{X}^n\times\Theta}\prod_{k=1}^n g_{\theta}^{\epsilon}(y_k|z_k)f_{\theta}(z_{k}|z_{k-1})dz_{1:n}\pi(\theta)d\theta}
\label{eq:abc_static}
\end{equation}
where $g_{\theta}^\epsilon$, $f_{\theta}$ is as \eqref{eq:abc_smoothing} and $\pi(\theta)$
is a prior probability density on $\Theta$.
Throughout the Section, we set $N\geq 2$, $\epsilon >0$, but in general omit dependencies on these quantities.
 In practice, one often seeks to sample from an associated
probability on the extended state-space $\mathsf{E}^n\times\Theta$
$$
\tilde{\pi}(\theta,z_{1:n},u_{1:n}|y_{1:n}) \propto \prod_{k=1}^n \mathbb{I}_{B_{\epsilon}(y_k)}(u_k) g_{\theta}(u_k|z_k)f_{\theta}(z_{k}|z_{k-1})\pi(\theta).
$$
It is then easily verified that for any fixed $\theta\in\Theta$
$$
\pi(\theta|y_{1:n}) = \int_{\mathsf{E}^n} \tilde{\pi}(\theta,z_{1:n},u_{1:n}|y_{1:n}) dz_{1:n}du_{1:n}.
$$

A typical way to sample from $\tilde{\pi}(\theta,z_{1:n},u_{1:n}|y_{1:n})$ is via the Metropolis-Hastings method, with proposing to move from $(\theta,z_{1:n},u_{1:n})$ to $(\theta',z_{1:n}',u_{1:n}')$ via the probability density:
$$
q(\theta'|\theta) \prod_{k=1}^n g_{\theta'}(u_k'|z_k')f_{\theta'}(z_{k}'|z_{k-1}')
$$
such a proposal removes the need to evaluate $g_{\theta}$ which is not available in this context.
As is well known e.g.~Andrieu et al.~(2010), such procedures typically do not work very well and lead to
slow mixing on the parameter space $\Theta$. This proposal can be greatly improved by
running a particle-filter (the particle marginal Metropolis-Hastings (PMMH) algorithm) as in Andrieu et al.~(2010); that is a Metropolis-Hastings move that will
first move $\theta$, via $q(\theta'|\theta)$ and then run the algorithm in Section \ref{sec:old_algo} picking a whole path, $l$, $x_{1:n}^l\in\mathsf{E}^n$
the sample used with a probability proportional to
$G_n(x_n^i)$. Remarkably, this procedure
yields samples from \eqref{eq:abc_static} via an auxiliary probability density; the details can be found in Andrieu et al.~(2010), but the apparently \emph{fundamental} property is that the estimate of the normalizing constant is unbiased. Note also that the sample from the Markov chain $(\theta,x_{1:n}^l)$ also provides a sample from $\tilde{\pi}(\theta,z_{1:n},u_{1:n}|y_{1:n})$.

As we have seen in the context of both theory and applications, it appears that the alive
filter in Section \ref{sec:new_smc} out-performs the standard one, for a given computational complexity. In addition,
as seen in Proposition \ref{prop:unbiased}, the estimate of the normalizing constant is unbiased.
It is therefore a reasonable conjecture that one can construct a new PMMH algorithm, with the alive particle filter investigated previously in this article and that this might perform better (in some sense) than the standard PMMH just described. We remark that the justification of this new PMMH follows from the statements in Andrieu \& Vihola (2012) (see also Andrieu \& Roberts (2009)) and Proposition \ref{prop:unbiased}, but we provide details for completeness.


\subsubsection{New PMMH Kernel}

We will define an appropriate target probability to produce samples from
\eqref{eq:abc_static}, but we first give the algorithm:
\begin{enumerate}
\item{Sample $\theta(0)$ from any absolutely continuous distribution. Then run the particle filter (with parameter value $\theta(0)$) in Section \ref{sec:new_smc} up-to time $n$, storing $\gamma_{n+1}^N(1)$ (now denoted $\gamma_{n+1,\theta(0)}^N(1)$). Pick a trajectory $x_{1:n}^i(0)$, $i\in\{1,\dots,T_n(0)-1\}$, with probability
$$
\frac{G_n(x_n^i(0))}{\sum_{i=1}^{T_n(0)-1} G_n(x_n^i(0))}.
$$
Set $i=1$.}
\item{Propose $\theta'|\theta(i-1)$ from a proposal with positive density on $\Theta$ (write
it $q(\theta'|\theta)$).Then run the particle filter (with parameter value $\theta'$) in Section \ref{sec:new_smc} up-to time $n$, storing $\gamma_{n+1,\theta'}^N(1)$. Pick a trajectory $(x_{1:n}^i)'$ with probability
$$
\frac{G_n((x_n^i)')}{\sum_{i=1}^{T_n(0)-1} G_n((x_n^i)')}.
$$
Set $\theta(i) =\theta'$, $\gamma_{n+1,\theta(i)}^N(1)=\gamma_{n+1,\theta'}^N(1)$ with probability:
$$
1\wedge \frac{\gamma_{n+1,\theta'}^N(1)}{\gamma_{n+1,\theta(i-1)}^N(1)}
\frac{\pi(\theta')q(\theta(i-1)|\theta')}{\pi(\theta(i-1))q(\theta'|\theta(i-1))}.
$$
Otherwise set $\theta(i) =\theta(i-1)$, $\gamma_{n+1,\theta(i)}^N(1)=\gamma_{n+1,\theta(i-1)}^N(1)$,
$i=i+1$ and return to the start of 2.}
\end{enumerate}
For readers interested in the numerical implementation, they can skip to the next Section,
noting that the $\theta$ samples will come from the posterior \eqref{eq:abc_static}; this is now justified in the rest of the section.

We construct the following auxiliary target probability on the state-space:
\begin{eqnarray*}
\bar{\mathsf{E}} & = & 
\Theta\times \Big(\bigcup_{T_1=N}^{\infty}\Big(\mathsf{E}^{T_1}\times\{T_1\}
\times \Big(\bigcup_{T_2=N}^{\infty}\Big(\mathsf{E}^{T_2}\times\{1,\dots,T_1-1\}^{T_2} \times\{T_2\}\times\cdots\times\bigcup_{T_n=N}^{\infty}\Big(
\mathsf{E}^{T_n}\times \\ & & \{1,\dots,T_{n-1}-1\}^{T_n}\times\{T_n\}\times
\{1,\dots,T_n-1\}\Big)\cdots\Big)\Big)\Big).
\end{eqnarray*}
Whilst the state-space looks complicated it corresponds to the static parameter and
all the variables (the states and the resampled indices) sampled by the alive particle filter up-to time-step $n$ and then just the picking of one of the final paths.

For $n\geq 2$ (we omit $\theta$ from our notation) define
$$
\Psi_n\Big(d(x_n^1,\dots,x_n^{T_n}),a_{n-1}^{1},\dots,a_{n-1}^{T_n},T_n|x_{n-1}^{1:T_{n-1}},T_{n-1}\Big)  :=
$$
$$
 \frac{
\mathbb{I}_{\mathsf{S}_n}
(x_n^1,\dots,x_n^{T_n},T_n)\binom{T_n-1}{N-1}
\prod_{i=1}^{T_n} \frac{G_{n-1}(x_{n-1}^{a_{n-1}^i})}{\sum_{i=1}^{T_{n-1}-1}G_{n-1}(x_{n-1}^i)}
M_n(x_{n-1}^{a_{n-1}^i}, dx_n^i)
}
{\sum_{T_n=N}^{\infty} \sum_{a_{n-1}^{1:T_{n}}\in\{1,\dots,T_{n-1}-1\}}\binom{T_n-1}{N-1}
\int_{\mathsf{E}^{T_n}} \mathbb{I}_{\mathsf{S}_n}
(x_n^1,\dots,x_n^{T_n},T_n)
\Big[\prod_{i=1}^{T_n} \frac{G_{n-1}(x_{n-1}^{a_{n-1}^i})}{\sum_{i=1}^{T_{n-1}-1}G_{n-1}(x_{n-1}^i)} M_n(x_{n-1}^{a_{n-1}^i}, dx_n^i)\Big]
}
$$
where for $n\geq 1$
$$
\mathsf{S}_n = \{(u_n^1,\dots,u_n^{T_n},T_n)\in\mathsf{Y}^n\times\{N,N+1,\dots\}:
\sum_{i=1}^{T_n-1}\mathbb{I}_{B_{\epsilon}(y_n)}(u_n^i) = N-1\cap u_n^{T_{n}}\in B_{\epsilon}(y_n)
\}.
$$
In addition, set
$$
\Psi_1\Big(d(x_1^1,\dots,x_1^{T_1}),T_1\Big)  :=
 \frac{
\mathbb{I}_{\mathsf{S}_1}
(x_1^1,\dots,x_n^{T_1},T_1)\binom{T_1-1}{N-1}
\prod_{i=1}^{T_1} M_n(x_{0}, dx_1^i)
}
{\sum_{T_1=N}^{\infty} \binom{T_1-1}{N-1}
\int_{\mathsf{E}^{T_1}} \mathbb{I}_{\mathsf{S}_1}
(x_1^1,\dots,x_1^{T_1},T_1)
\Big[\prod_{i=1}^{T_1} M_1(x_{0}, dx_n^i)\Big].
}
$$

Then the PMMH algorithm just defined samples from the target
\begin{eqnarray*}
\bar{\pi}(\theta,d(\mathbf{x}_1,\dots,\mathbf{x}_n),\mathbf{a}_{1:n-1},l,T_{1:n}|y_{1:n})
& \propto & G_n(x_n^l) \gamma_{n+1,\theta}^N(1) \prod_{k=2}^n \Psi_k\Big(d(x_k^1,\dots,x_k^{T_k}),a_{k-1}^{1},\dots,a_{k-1}^{T_k},T_k|x_{k-1}^{1:T_{k-1}},T_{k-1}\Big) \times \\
& & 
\Psi_1\Big(d(x_1^1,\dots,x_1^{T_1}),T_1\Big) \pi(\theta).
\end{eqnarray*}
where $\mathbf{a}_k = (a_k^1,\dots,a_k^{T_k})$, $\mathbf{x}_k = (x_k^1,\dots,x_k^{T_k})$ and
$l\in\{1,\dots,T_n-1\}$. Using Proposition \ref{prop:unbiased}, one can easily verify that
for any fixed $\theta\in\Theta$
$$
\pi(\theta|y_{1:n}) = \int_{\bar{\mathsf{E}}\setminus\Theta}\bar{\pi}(\theta,d(\mathbf{x}_1,\dots,\mathbf{x}_n),\mathbf{a}_{1:n-1},k,t_{1:n}|y_{1:n}).
$$
Note also that the samples $(\theta,x_{1:n}^l)$ from $\bar{\pi}$ are marginally distributed according to $\tilde{\pi}(\theta,z_{1:n},u_{1:n}|y_{1:n})$.
The associated ergodicity of the new PMMH algorithm follows the construction in Andrieu et al.~(2010)
and we omit details for brevity.

\subsubsection{Implementation on Real Data}

We consider the following state-space model, for $n\geq 1$
\begin{align*}
Y_n =& \varepsilon_n \beta \exp (Z_n) \\
Z_n =& \phi Z_{n-1} + \sigma V_n
\end{align*}
where $\varepsilon_n\sim\mathcal{S}t(0,\xi_1,\xi_2,\xi_3)$ (a stable distribution with location
parameter 0, scale $\xi_1$, skewness parameter $\xi_2$ and stability parameter $\xi_3$) and $V_n \sim \mathcal{N}(0,c)$. 
We set $\theta=(\beta,c,\phi)$, with priors 
$c\sim \mathcal{IG}(2,1/100)$, $\phi\sim \mathcal{IG}(2,1/50)$
($\mathcal{IG}(a,b)$ is an inverse Gamma distribution with mode $b/(a+1)$)
and $\beta\sim \mathcal{N}(0,10)$. Note that the inverse Gamma distributions have infinite variance.

We consider the daily (adjust closing) index of the S \& P 500 index between 03/01/2011 $-$ 14/02/2013 (533 data points). Our data are the log-returns, that is, if $I_n$ is the index value at time $n$, $Y_n=\log(I_n/I_{n-1})$. The data are displayed in Figure \ref{fig:ex3sp500}.
The stable distribution may help us to more realistically capture heavy tails prevalent in financial data, than perhaps a standard Gaussian. In most scenarios, the probability density function of a stable distribution is intractable, which suggests that an ABC approximation might be a sensitive way to approximate the true model.

\begin{figure}[H]
\begin{center}
\scalebox{0.3}{\includegraphics{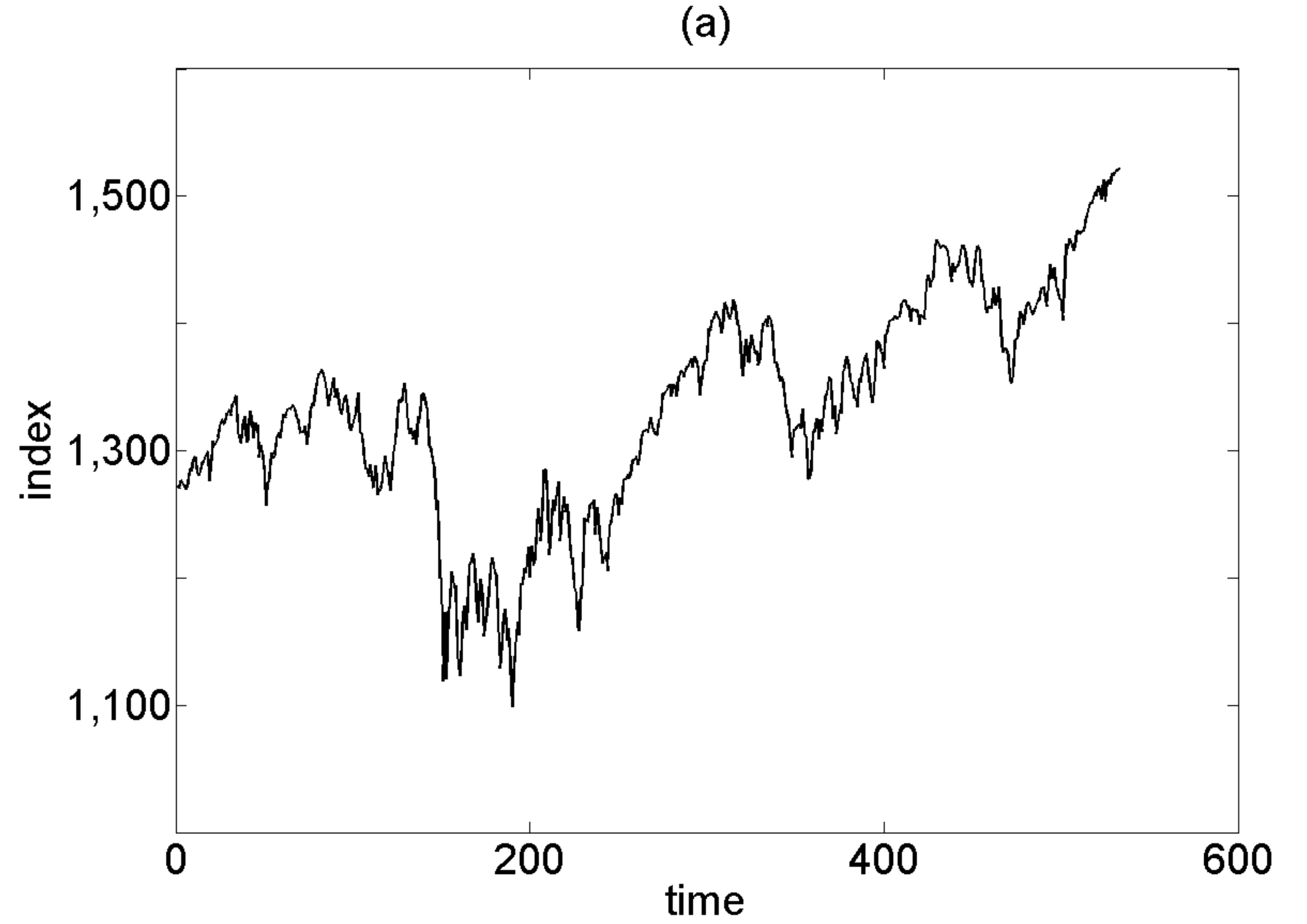}}\scalebox{0.3}{\includegraphics{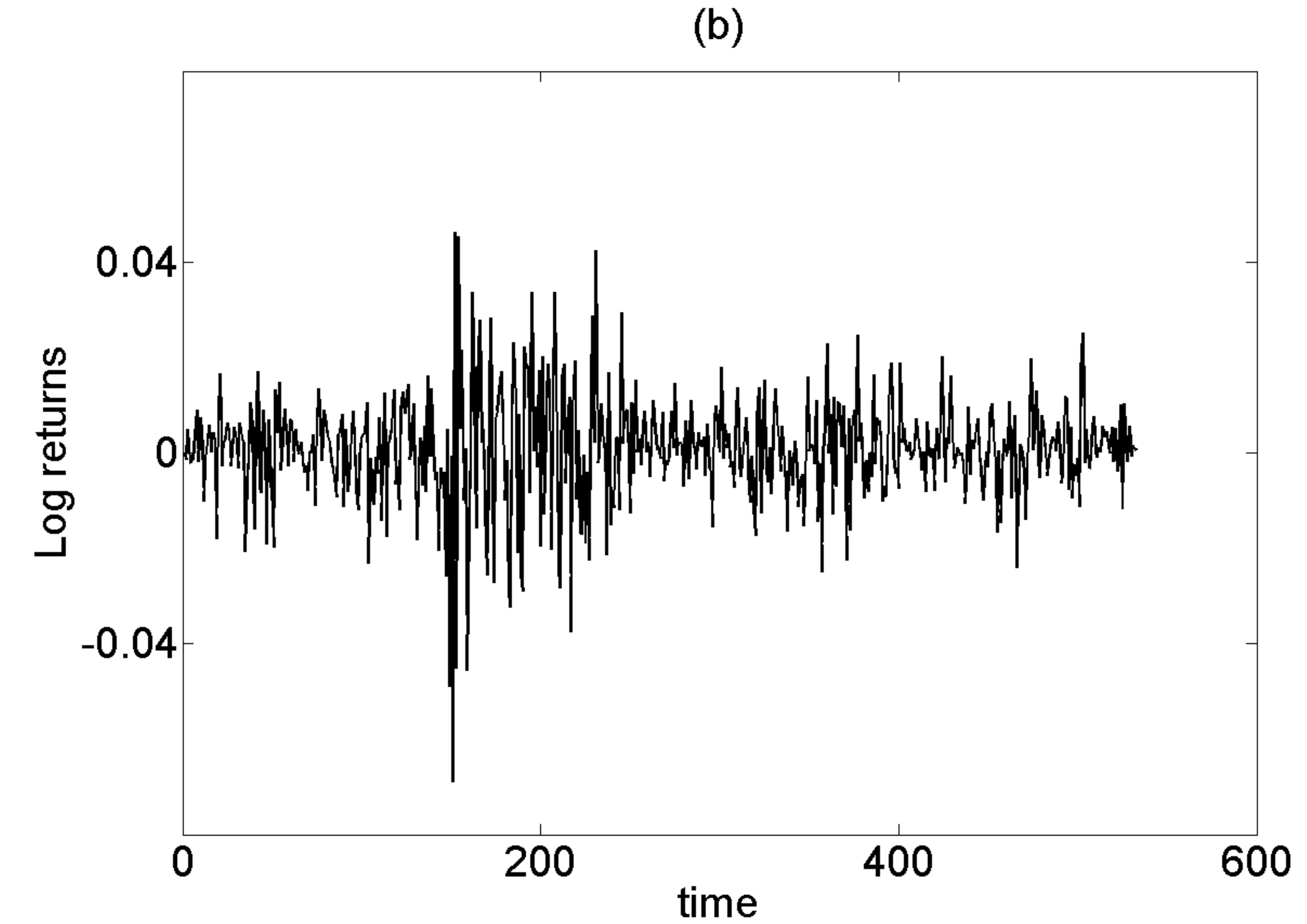}}
\caption{{\footnotesize S $\&$ P 500 (a) index data and (b) (Log) Daily return }}\label{fig:ex3sp500}
\end{center}
\end{figure}

\subsubsection{Algorithm setup}\label{sec:paset}

We consider two scenarios to compare the standard PMMH algorithm and the new one developed
above. In the first situation we set $\xi_3=1.75$ and in the second, $\xi_3=1.2$, with $\xi_1=\xi_2=1$ in both situations. In the first case, we make $\epsilon$ a suitable value as the data are not expected to jump off the same scale as the initial data. 
In the second, $\epsilon$ is significantly reduced; this is to illustrate a point about the algorithm we introduce.
Both algorithms are run for about the same computational time, such that the new PMMH algorithm has 20000 iterations. The parameters are initialized with draws
from the priors. The proposal on $\beta$ is a normal random walk
and for $(c,\phi)$ a gamma proposal centered at the current point with proposal variance scaled to obtain reasonable acceptance rates. We consider $N\in\{10,100,1000\}$ and for the new PMMH algorithm this value is lower to allow the same computational time.

\subsubsection{Results}

Our results are presented in Figures \ref{fig:traceplot_old}-\ref{fig:traceplot_col_new}.
In Figures \ref{fig:traceplot_old}-\ref{fig:traceplot_new} we can see the output in the case that $\xi_3=1.75$. 
For all cases, it appears that both algorithms perform very well; the acceptance rates were around 0.25 for each case.
For the PMMH algorithm the average number of simulations of the data, per-iteration and data-point, were $(1636, 745, 365)$ for
$N\in\{1000,100,10\}$ respectively (recall we have modified $N$ to make the computational time similar to the standard PMMH).
For this scenario one would prefer the standard PMMH as the algorithmic performance is very good, with a removal of a random computation
cost per iteration.

In Figures \ref{fig:traceplot_col_old}-\ref{fig:traceplot_col_new} the output when $\xi=1.2$ is displayed. In Figure
\ref{fig:traceplot_col_old} we can see that the standard PMMH algorithm performs very badly, barely moving across the parameter space,
whereas the new PMMH algorithm has very reasonable performance (Figure \ref{fig:traceplot_col_new}). In this case, $\epsilon$ is very small,
and the standard SMC collapses very often, which leads to the undesirable performance displayed. We note that considerable effort was expended
in trying to get the standard PMMH algorithm to work in this case, but we did not manage to do so (so we do not claim that the algorithm cannot be made to work).
Note also that whilst these are just one run of the algorithms, we have seen this behaviour in many other cases and it is typical in these examples.
The results here suggest that the new PMMH kernel might be preferred in difficult sampling scenarios, but in simple cases it does not seem to be required.

\begin{figure}[H]
\begin{center}
\scalebox{0.2}{\includegraphics{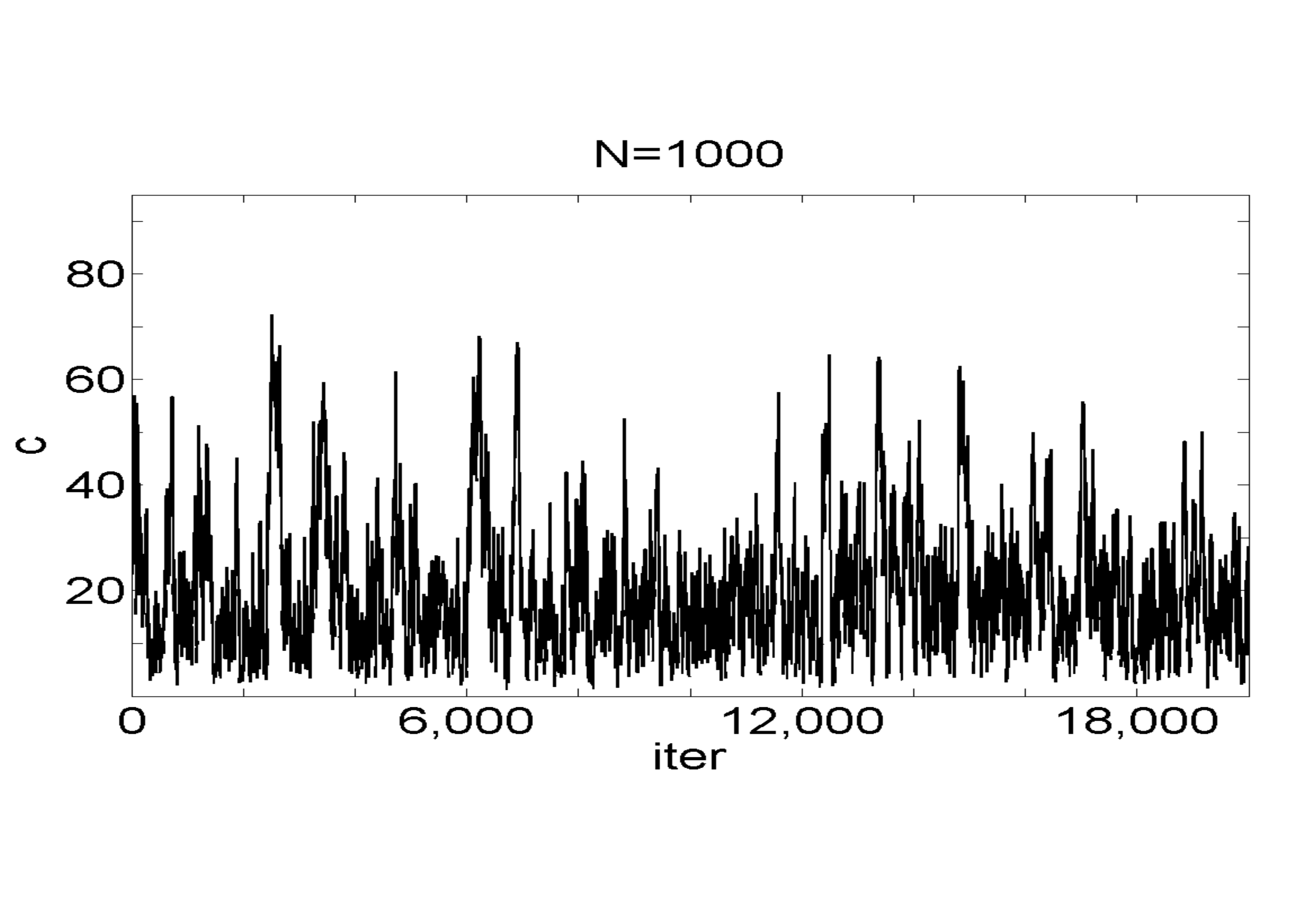}}\scalebox{0.2}{\includegraphics{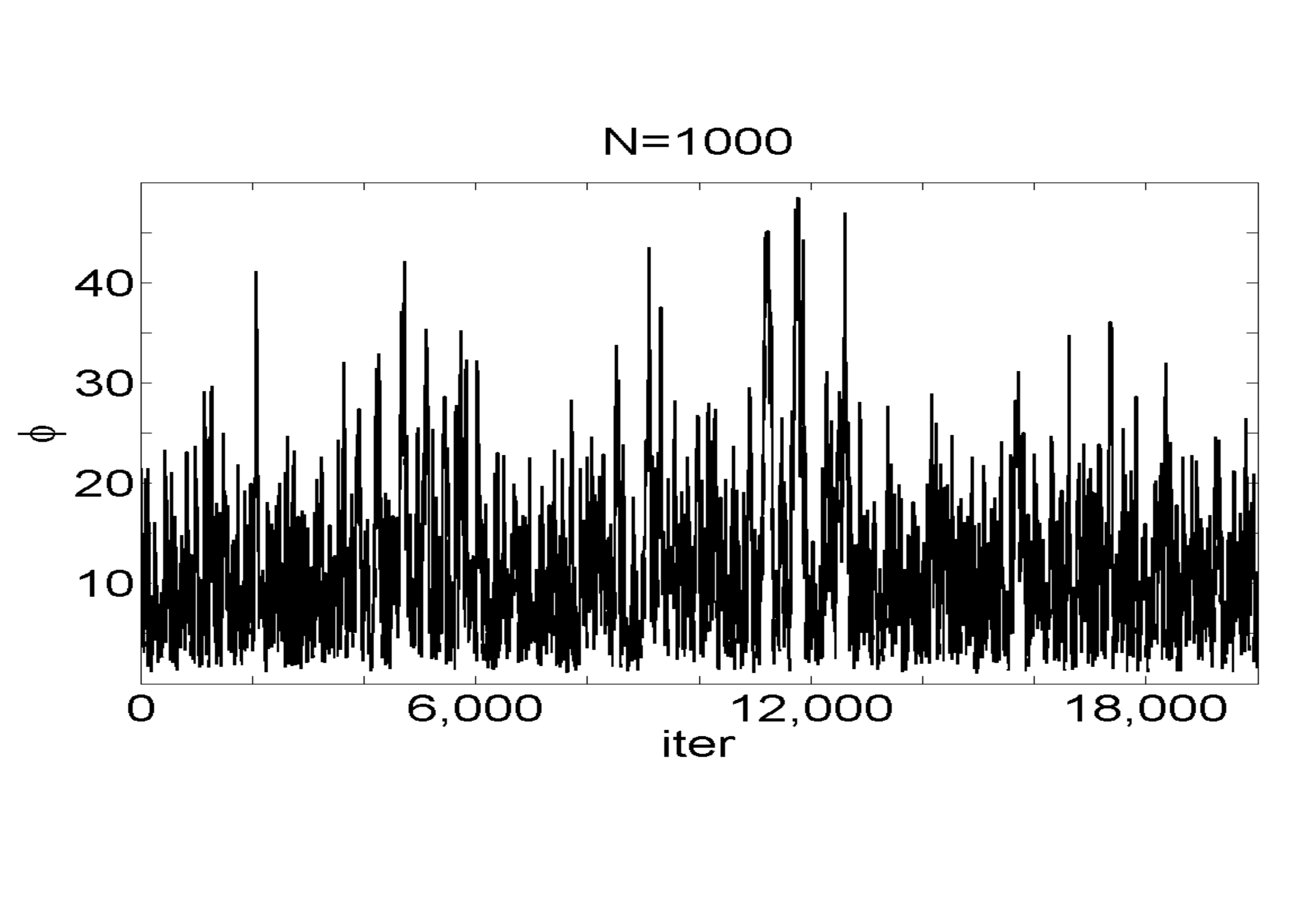}}\scalebox{0.2}{\includegraphics{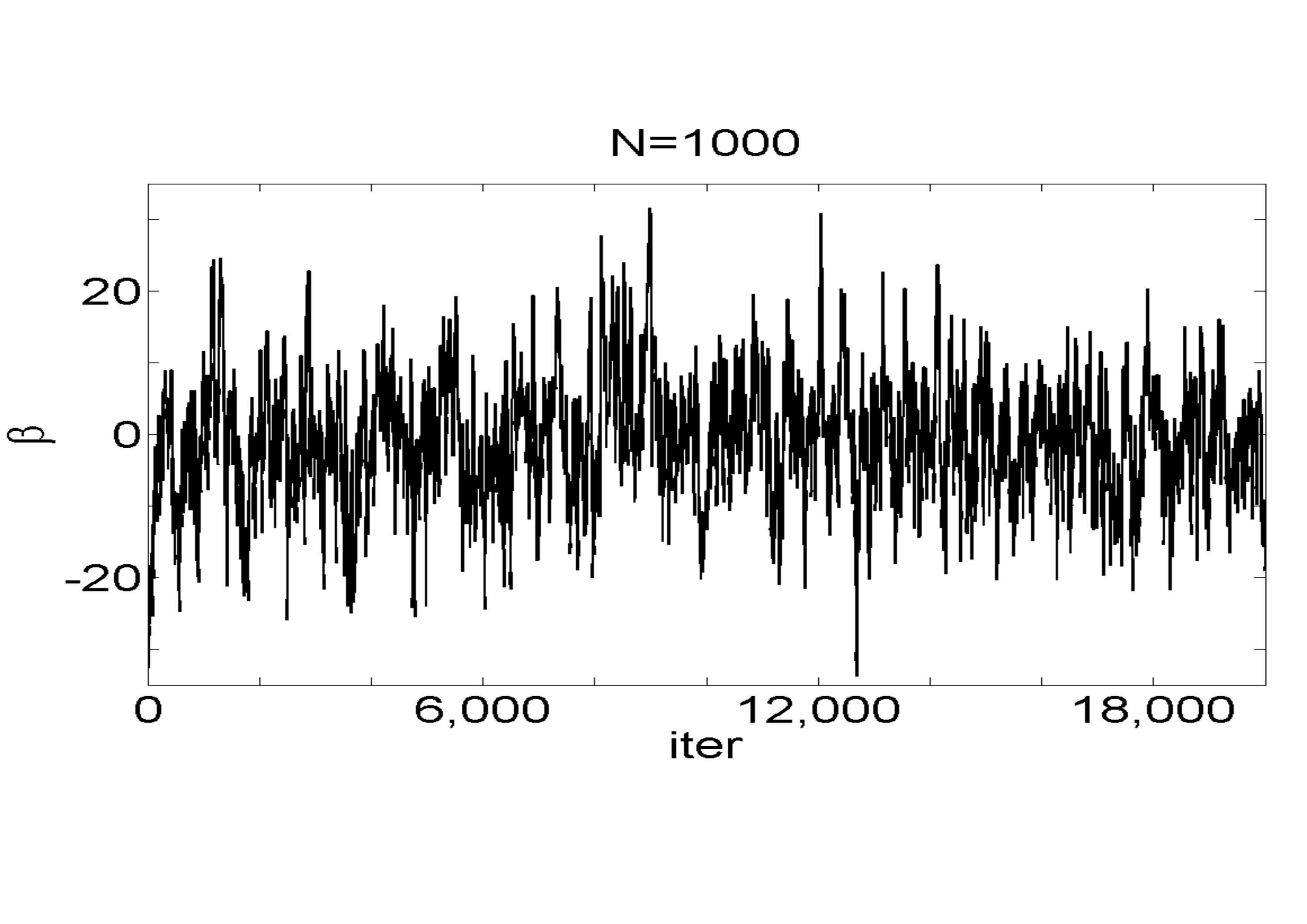}}
\end{center}
\end{figure}

\begin{figure}[H]
\begin{center}
\scalebox{0.2}{\includegraphics{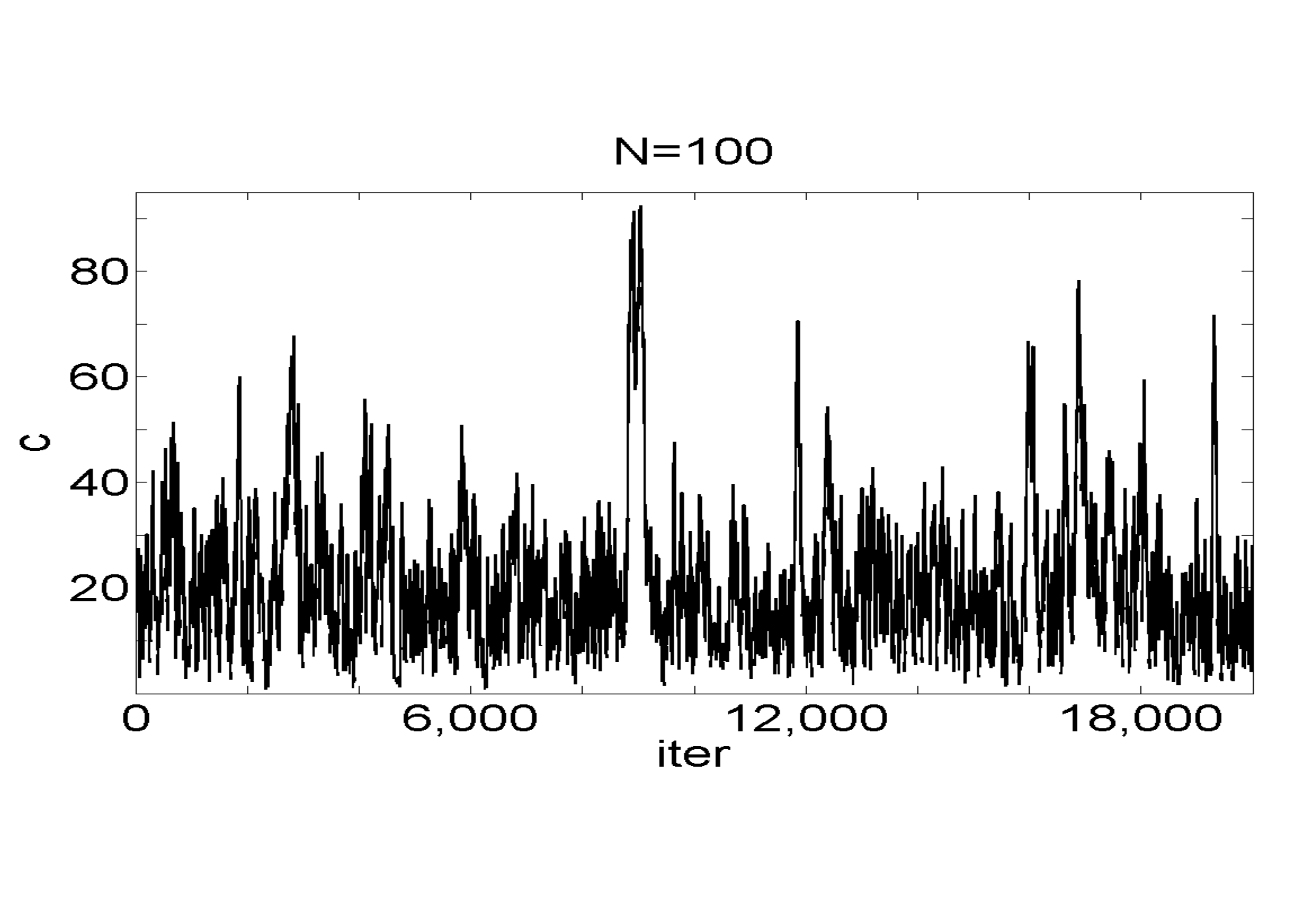}}\scalebox{0.2}{\includegraphics{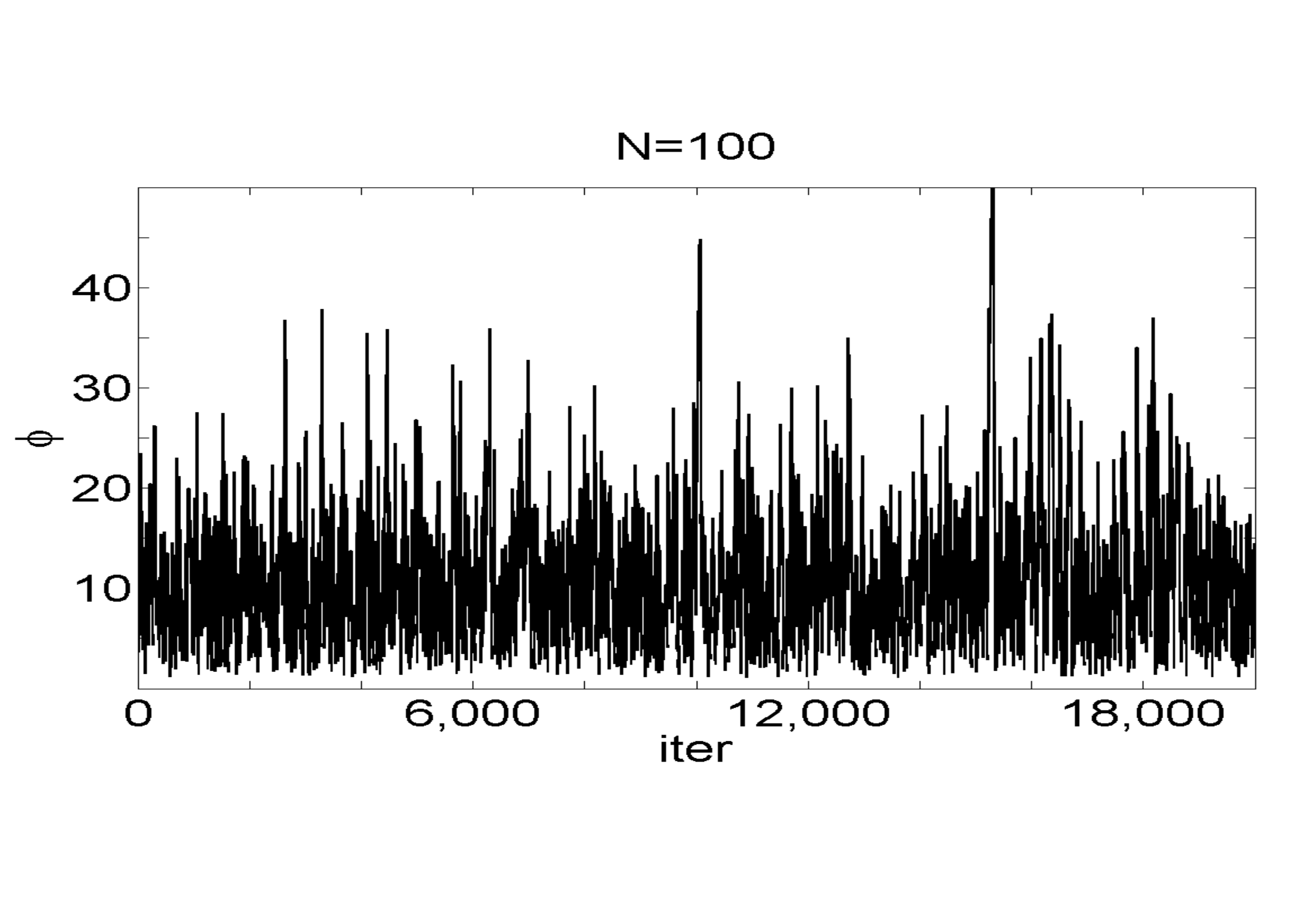}}\scalebox{0.2}{\includegraphics{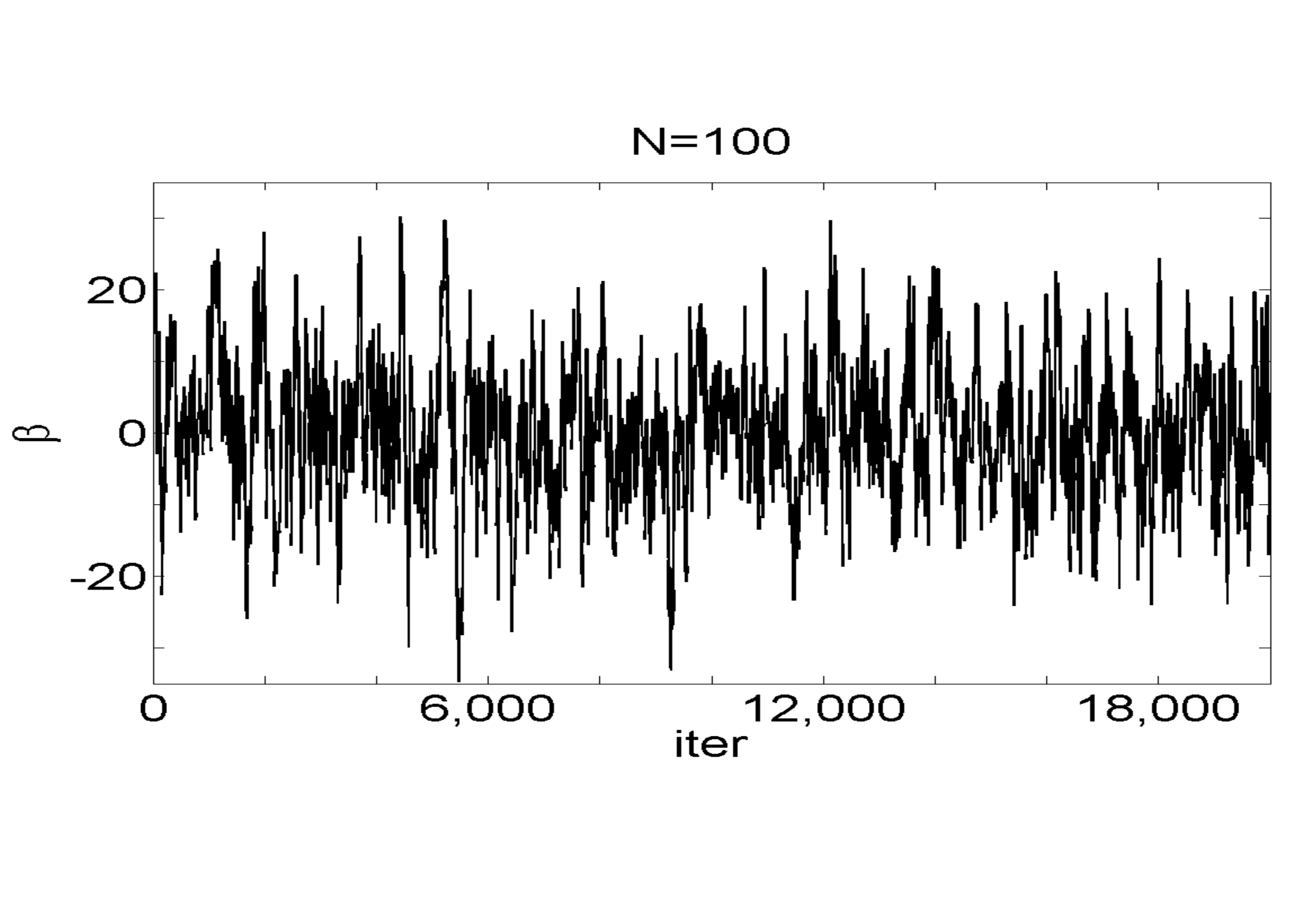}}
\end{center}
\end{figure}

\begin{figure}[H]
\begin{center}
\scalebox{0.2}{\includegraphics{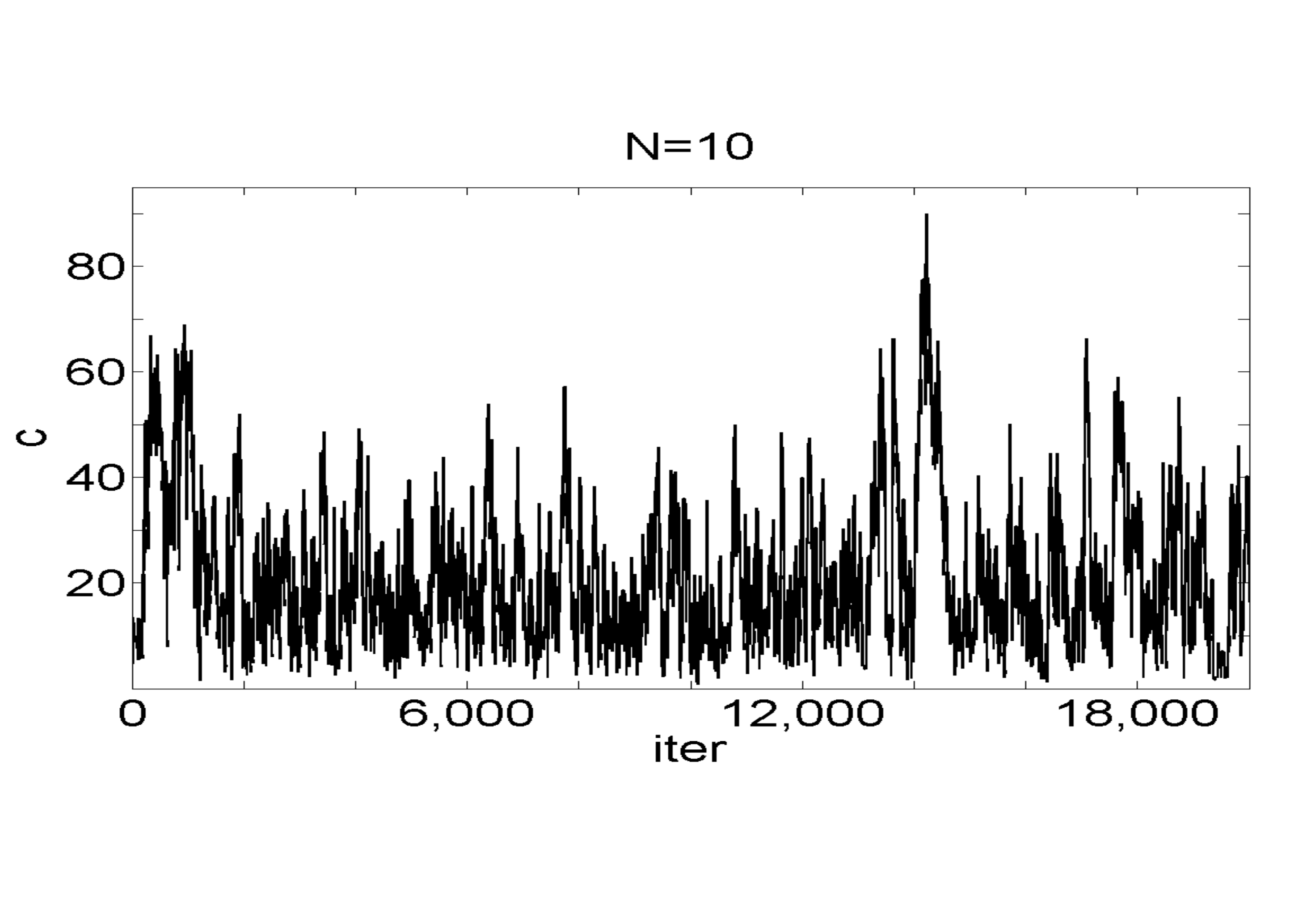}}\scalebox{0.2}{\includegraphics{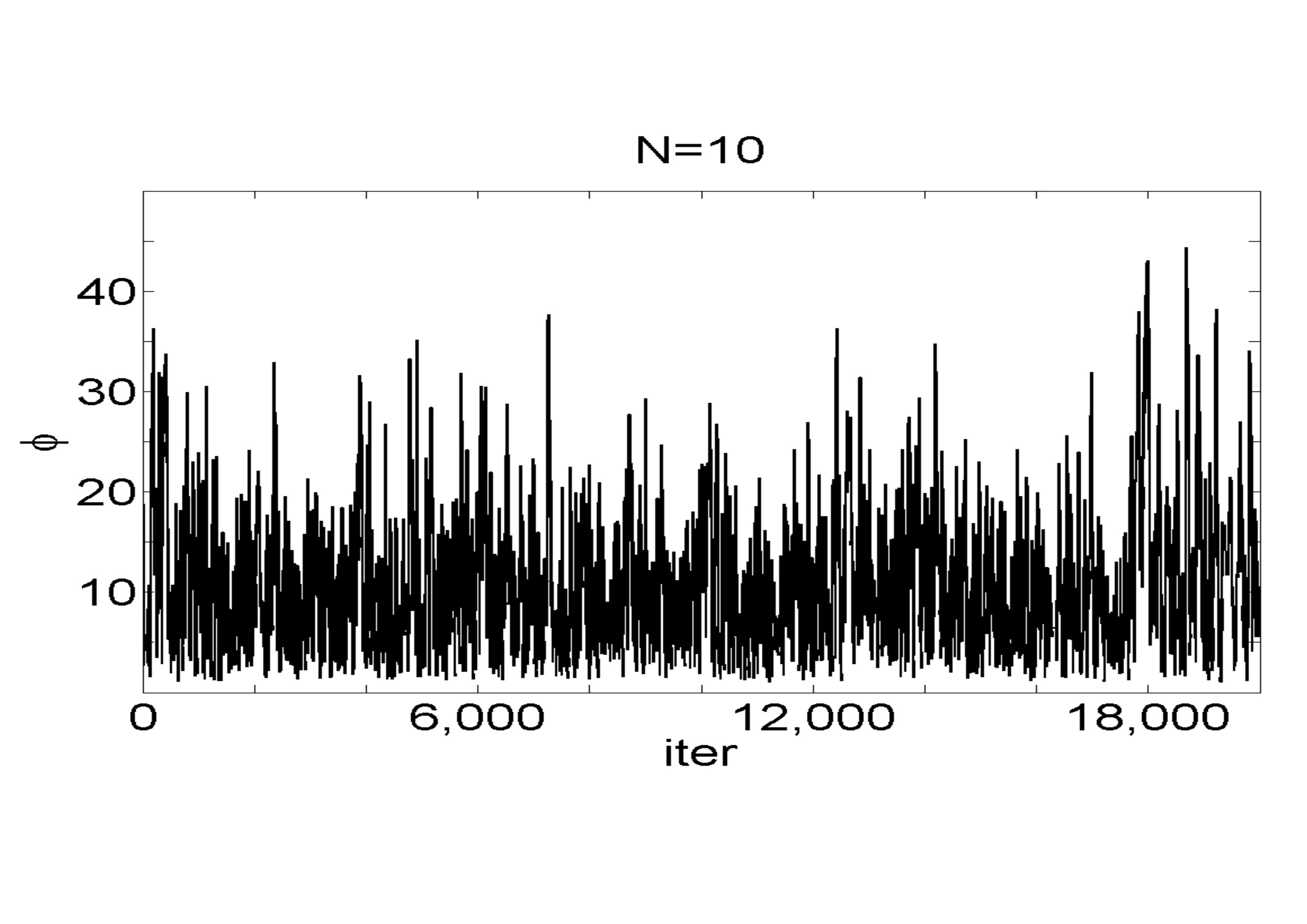}}\scalebox{0.2}{\includegraphics{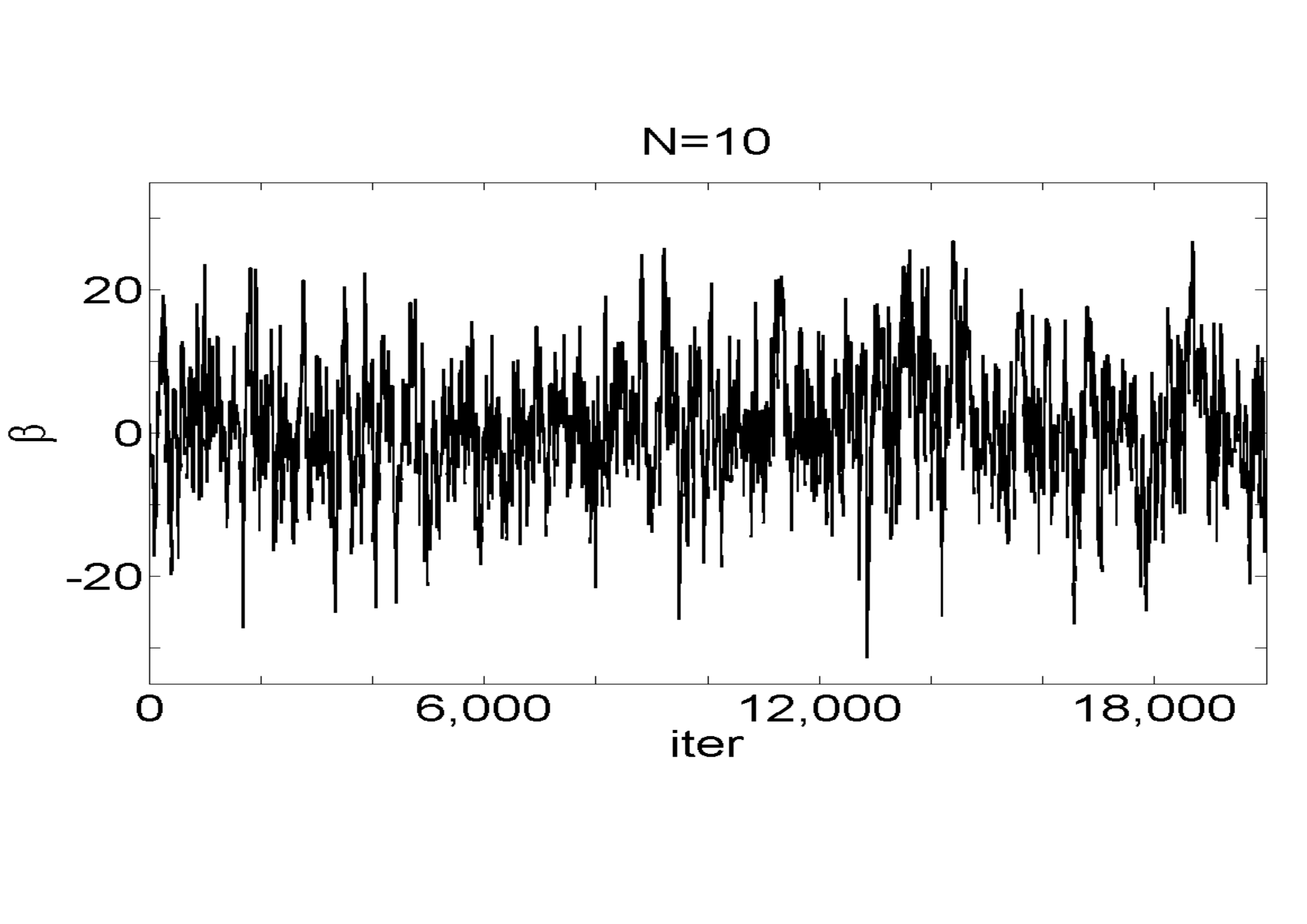}}
\caption{{\footnotesize Trace plot of each parameter across iterations for a PMMH algorithm using the SMC algorithm in Section \ref{sec:old_algo}. Each row displays the samples with different $N$}. Here $\xi_3=1.75$.}
\label{fig:traceplot_old}
\end{center}
\end{figure}

\begin{figure}[H]
\begin{center}
\scalebox{0.2}{\includegraphics{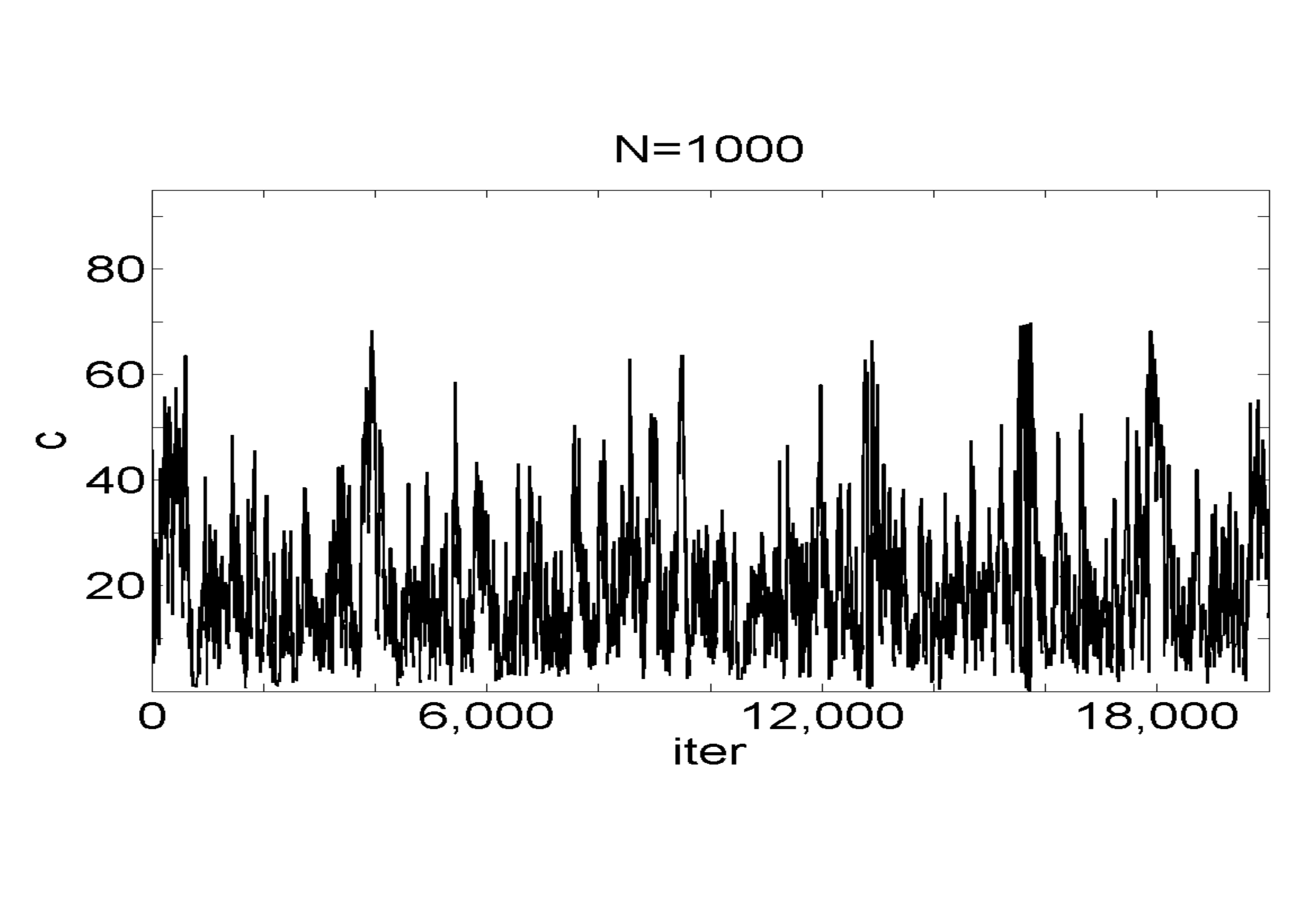}}\scalebox{0.2}{\includegraphics{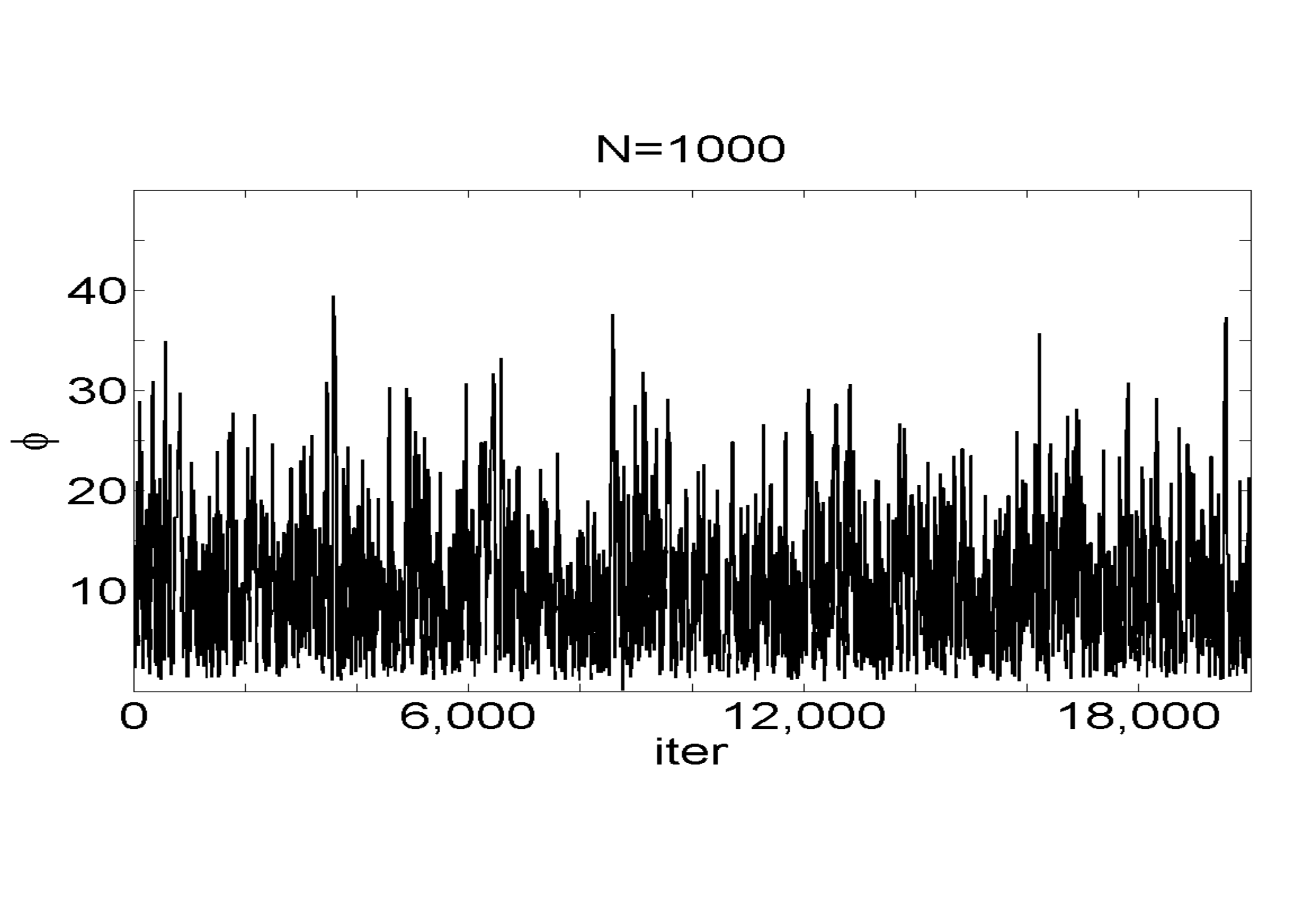}}\scalebox{0.2}{\includegraphics{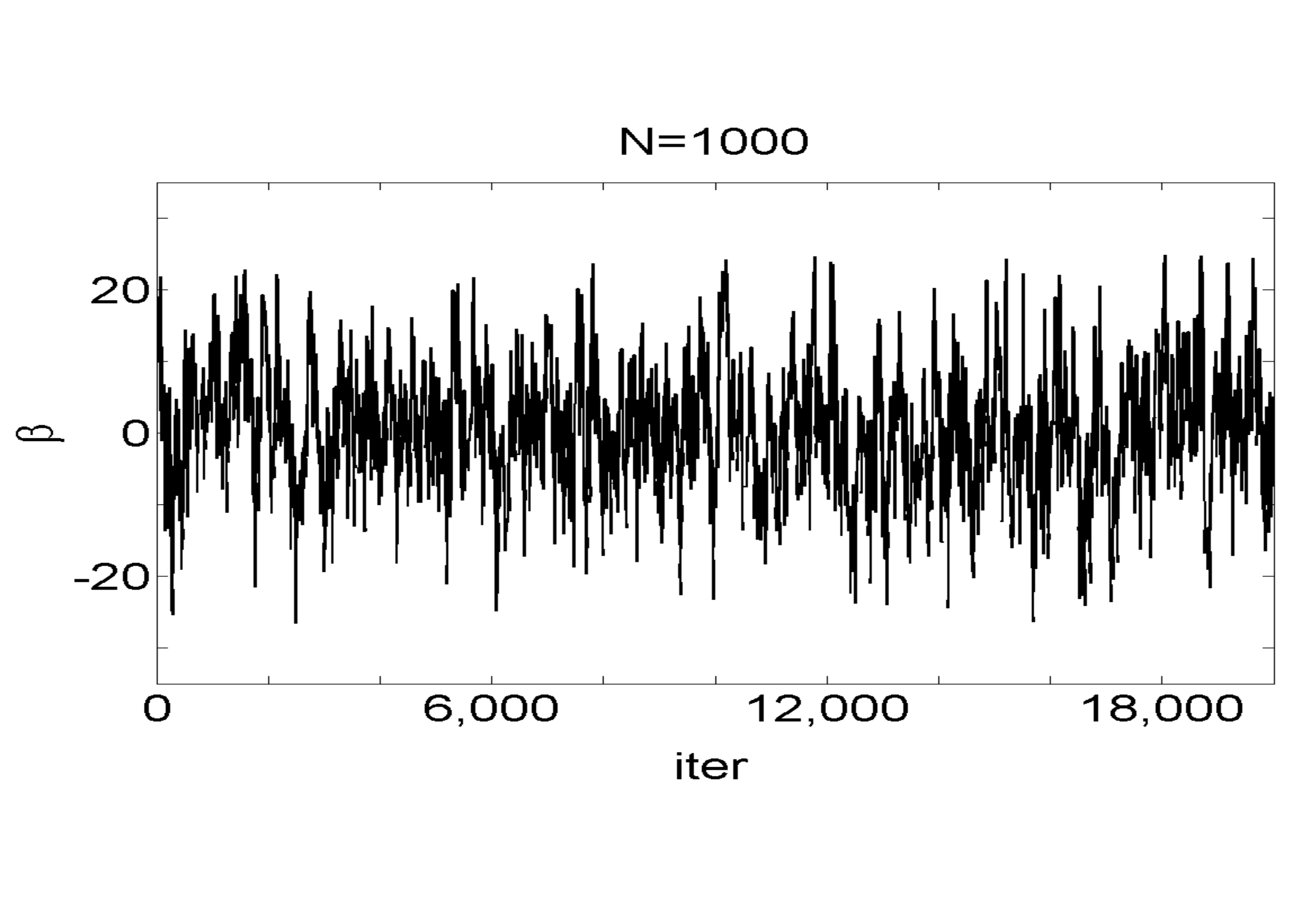}}
\end{center}
\end{figure}

\begin{figure}[H]
\begin{center}
\scalebox{0.2}{\includegraphics{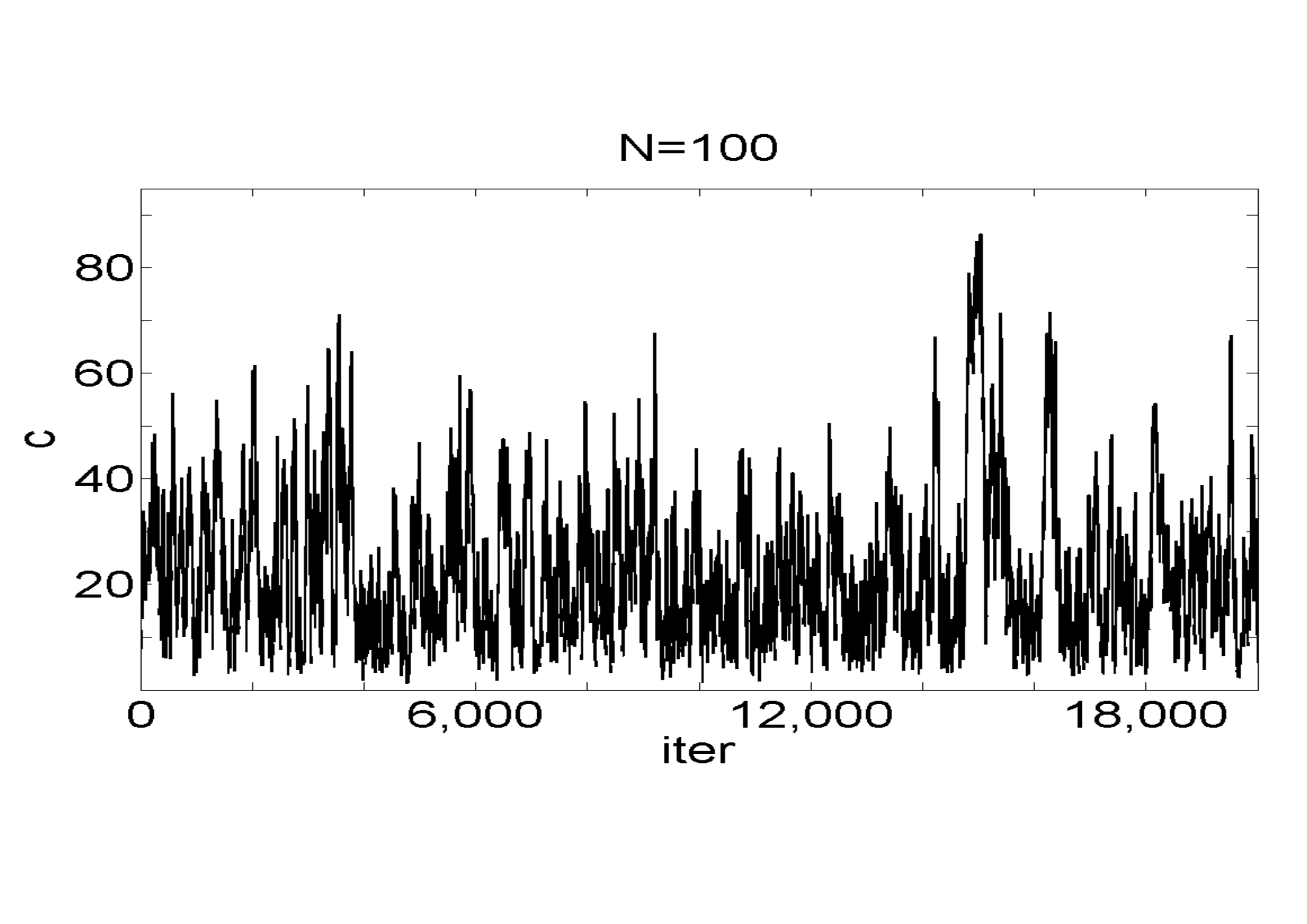}}\scalebox{0.2}{\includegraphics{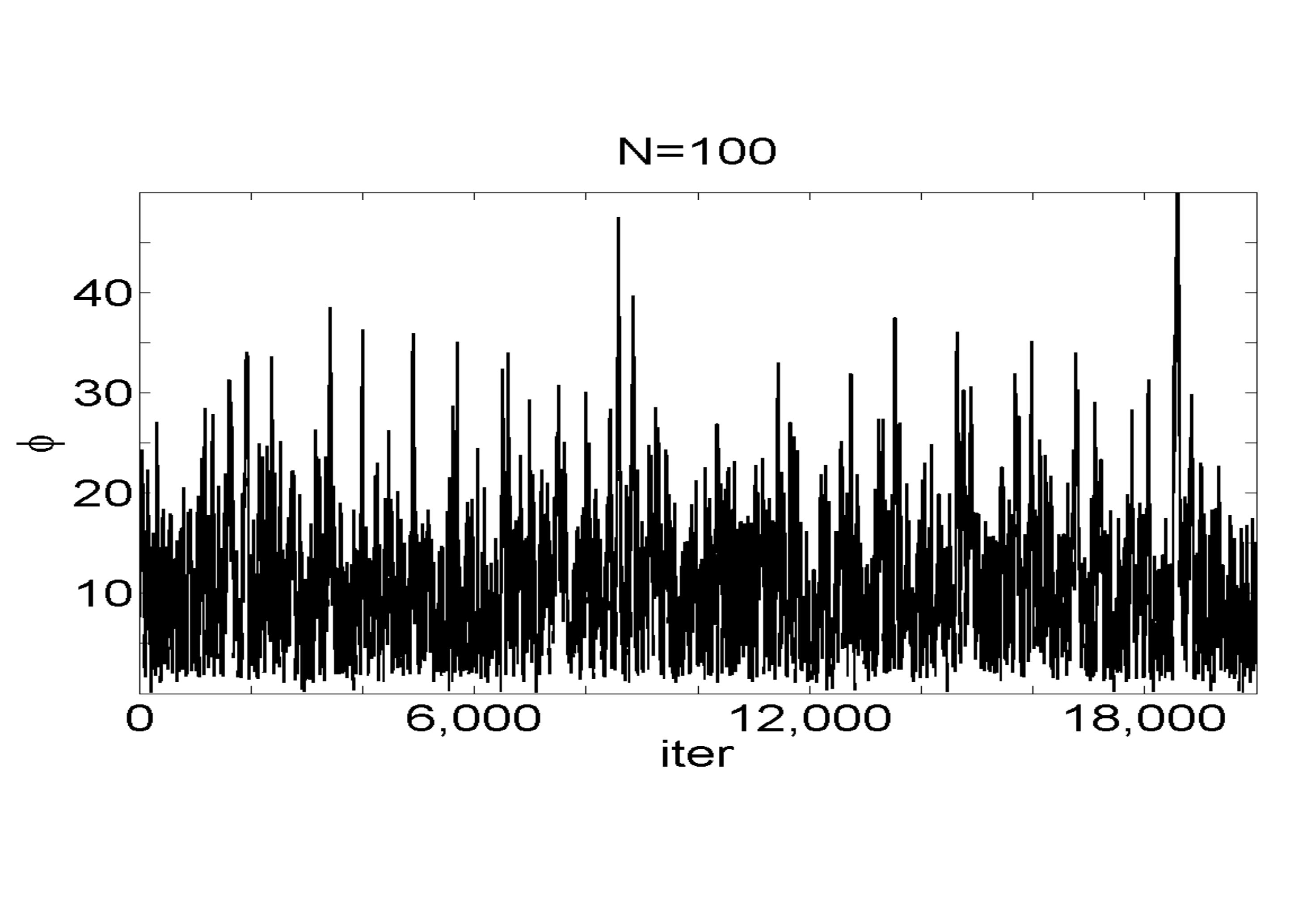}}\scalebox{0.2}{\includegraphics{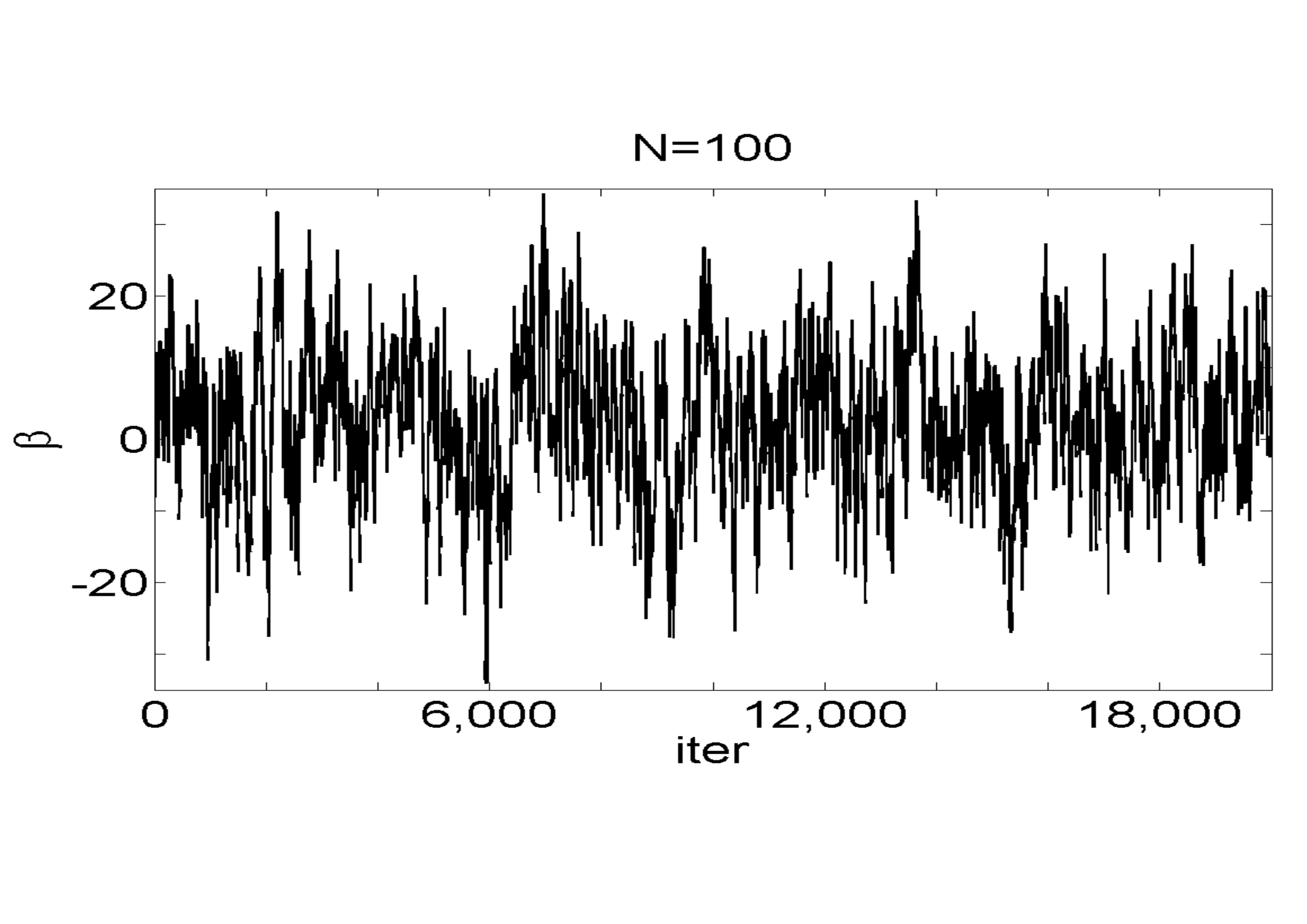}}
\end{center}
\end{figure}

\begin{figure}[H]
\begin{center}
\scalebox{0.2}{\includegraphics{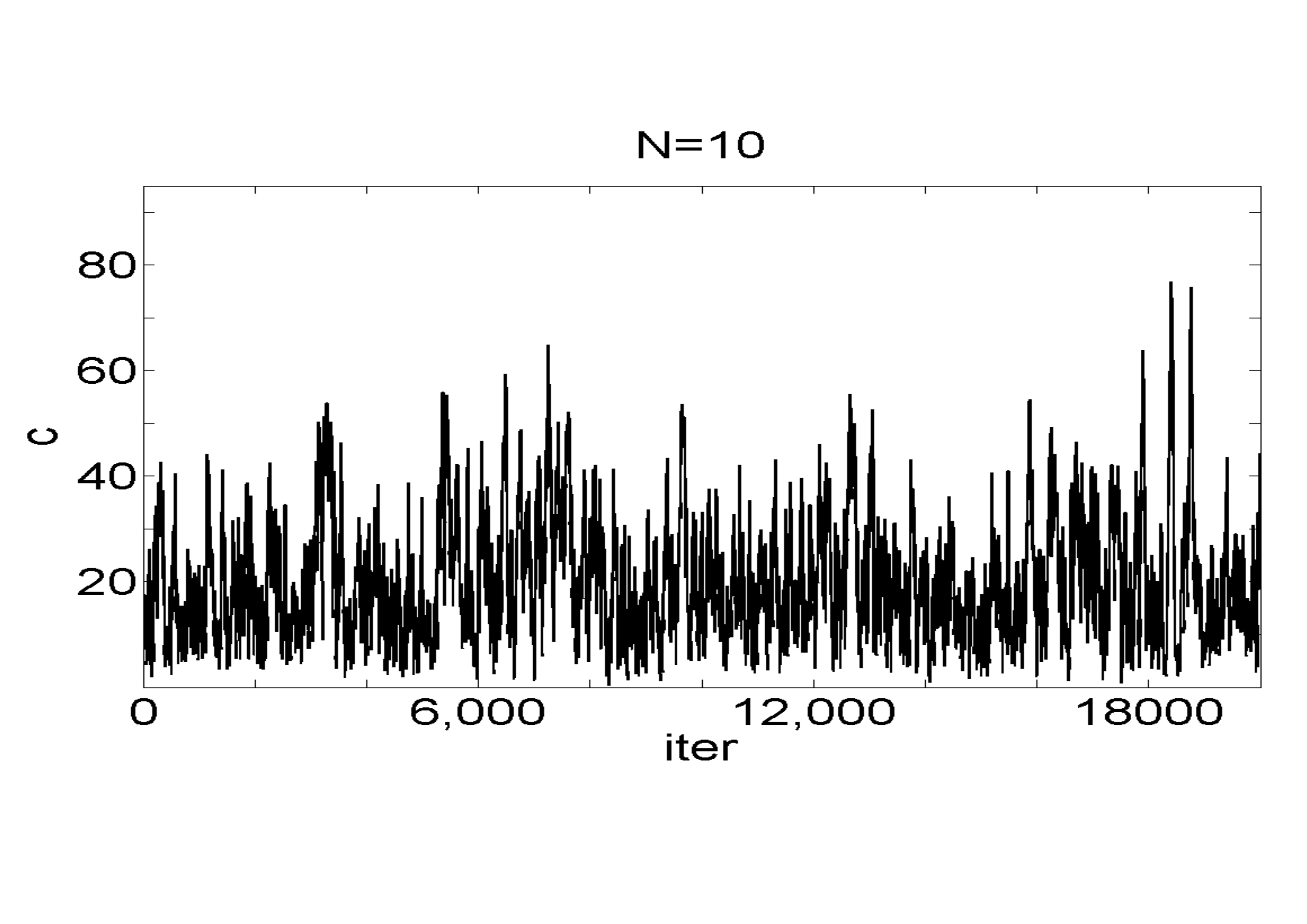}}\scalebox{0.2}{\includegraphics{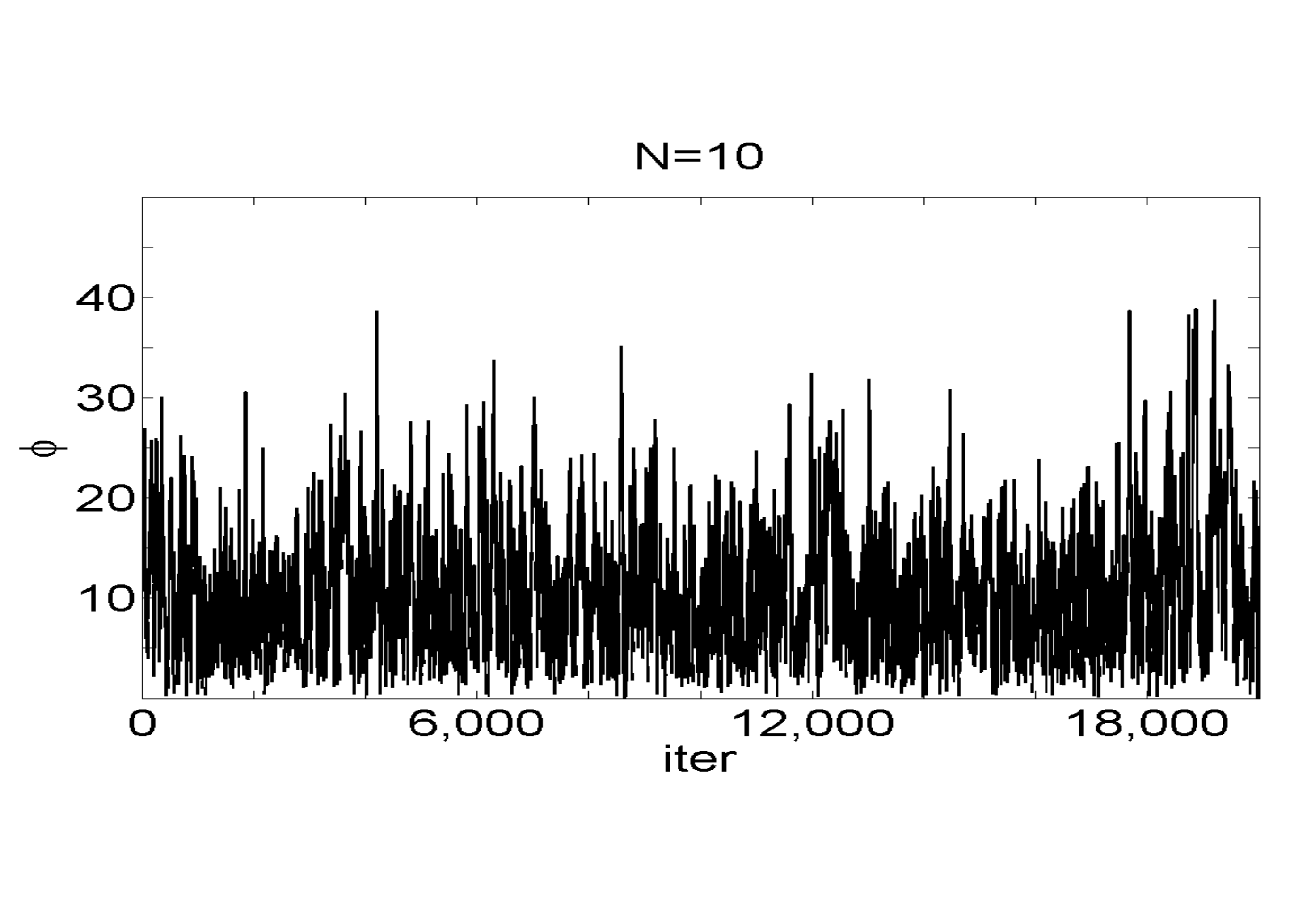}}\scalebox{0.2}{\includegraphics{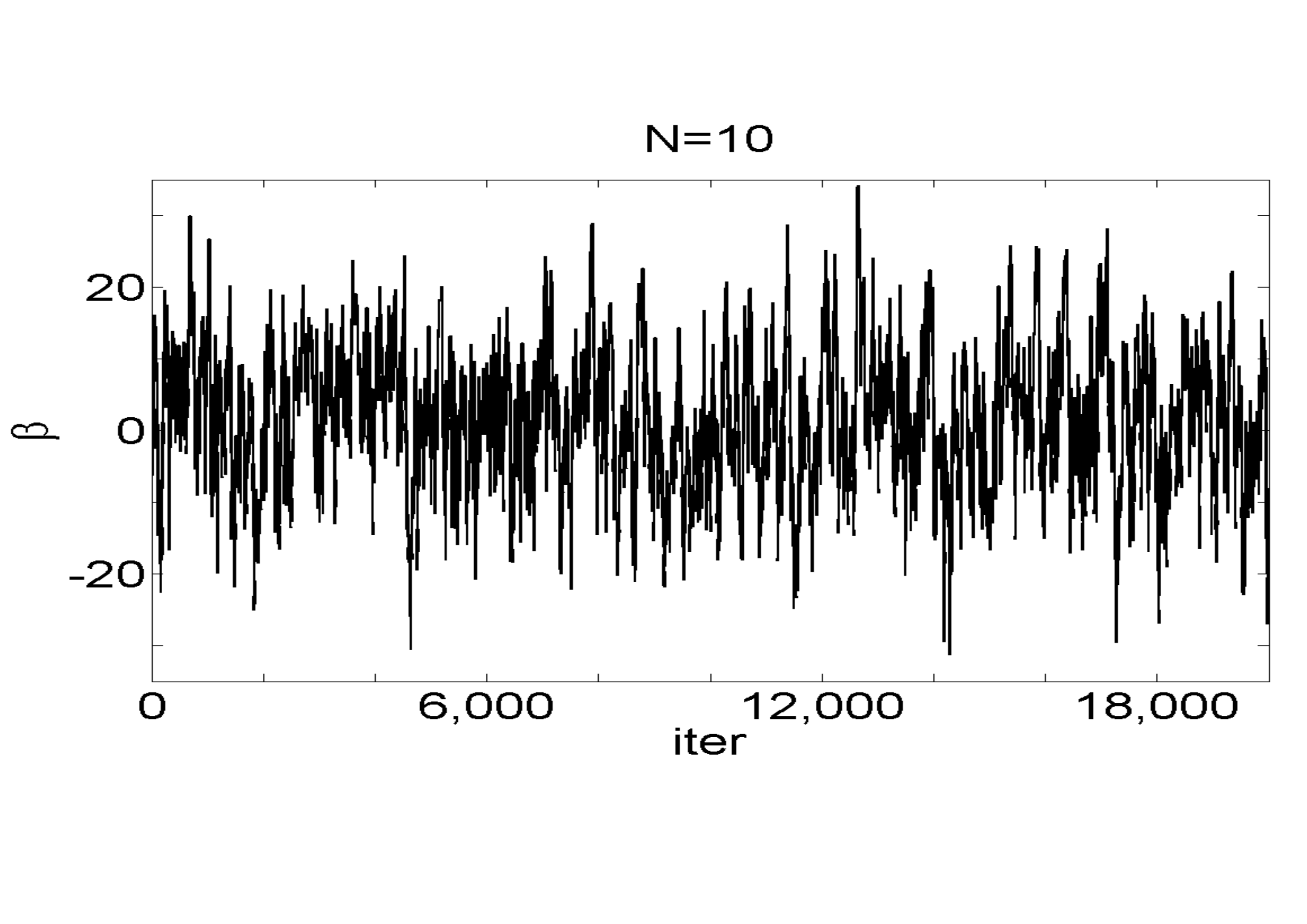}}
\caption{{\footnotesize  Trace plot of each parameter across iterations for a PMMH algorithm using the SMC algorithm in Section \ref{sec:new_smc}. Each row displays the samples with different $N$}. Here $\xi_3=1.75$.}\label{fig:traceplot_new}
\end{center}
\end{figure}

%
%

\begin{figure}[H]
\begin{center}
\scalebox{0.2}{\includegraphics{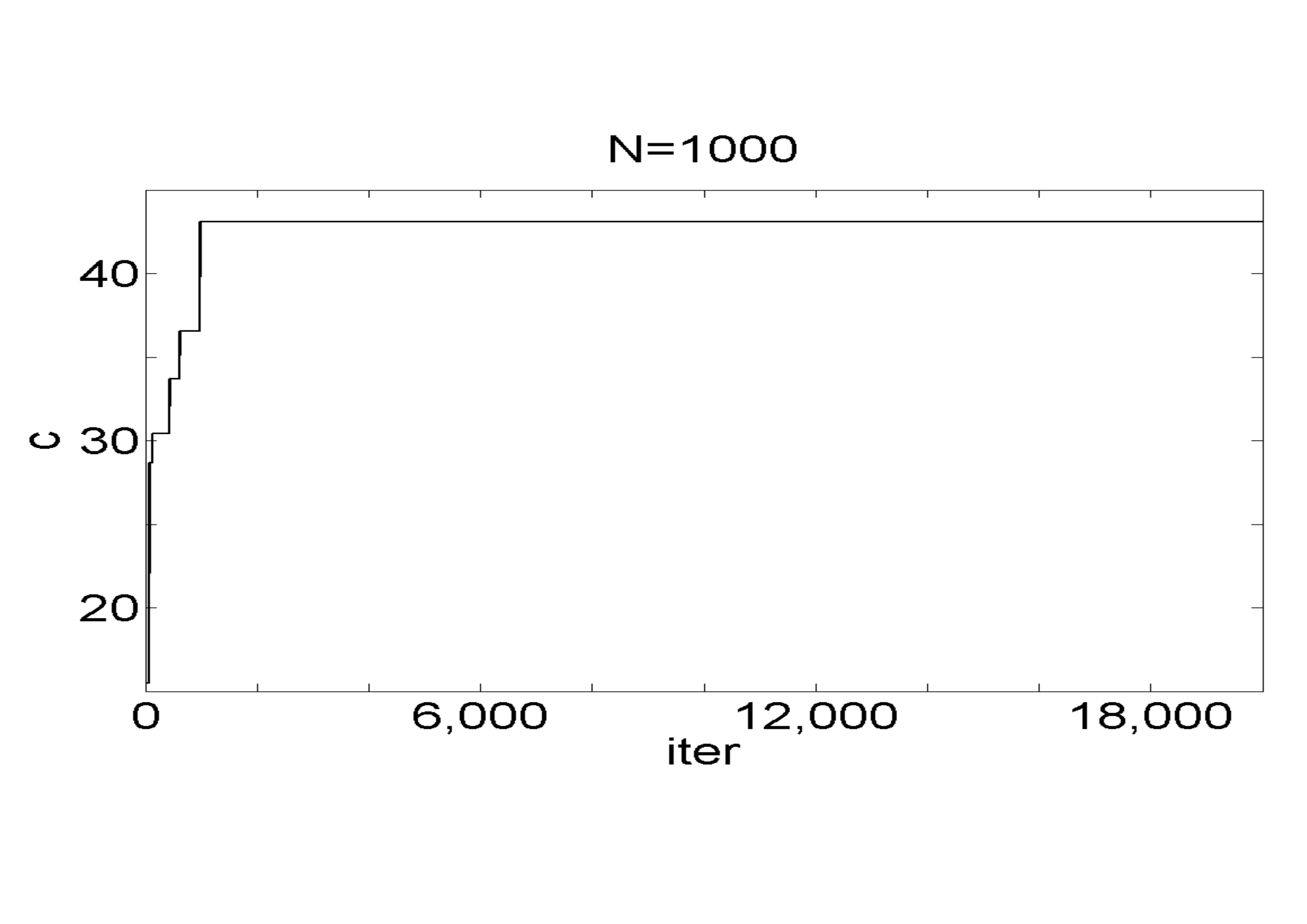}}\scalebox{0.2}{\includegraphics{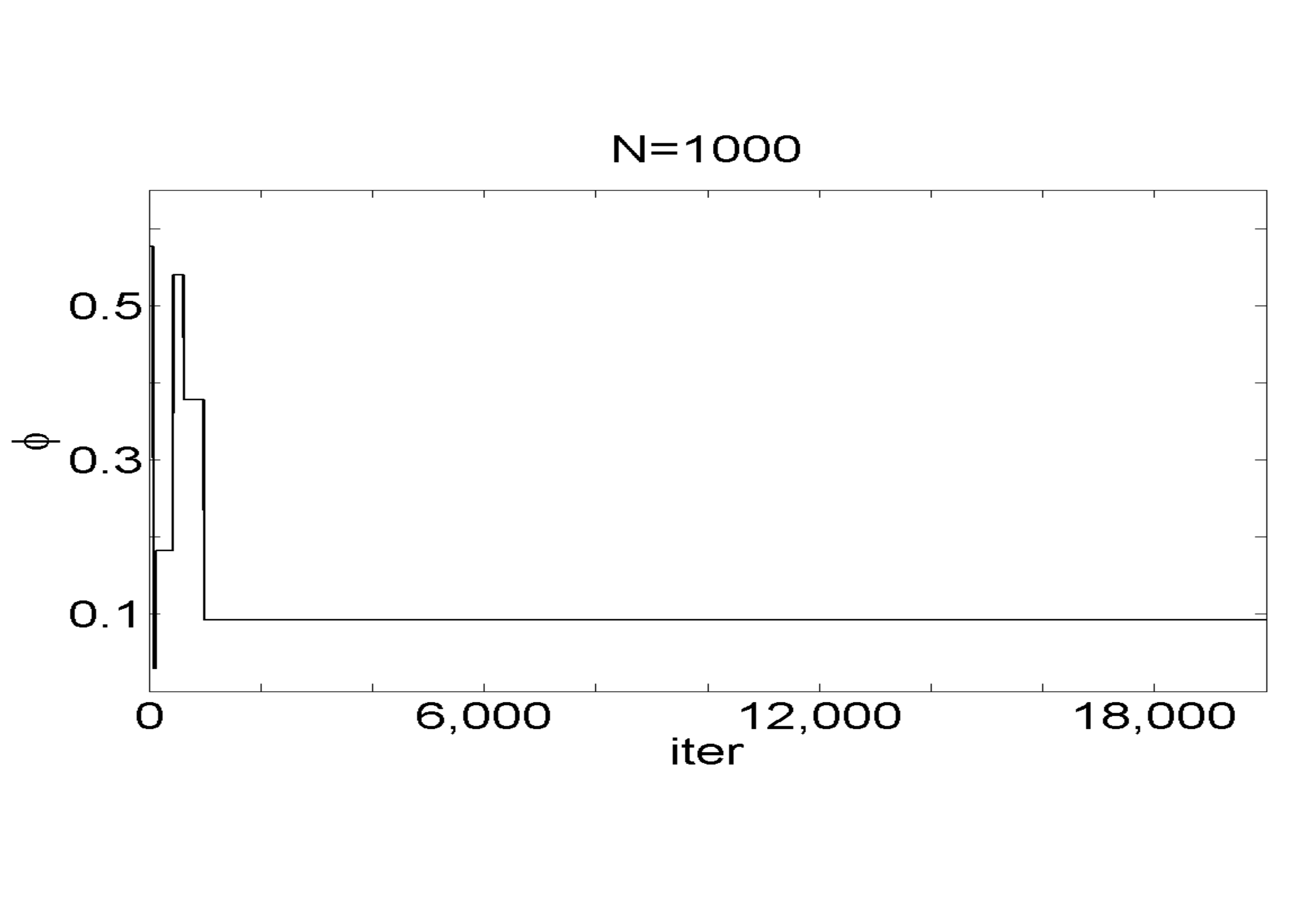}}\scalebox{0.2}{\includegraphics{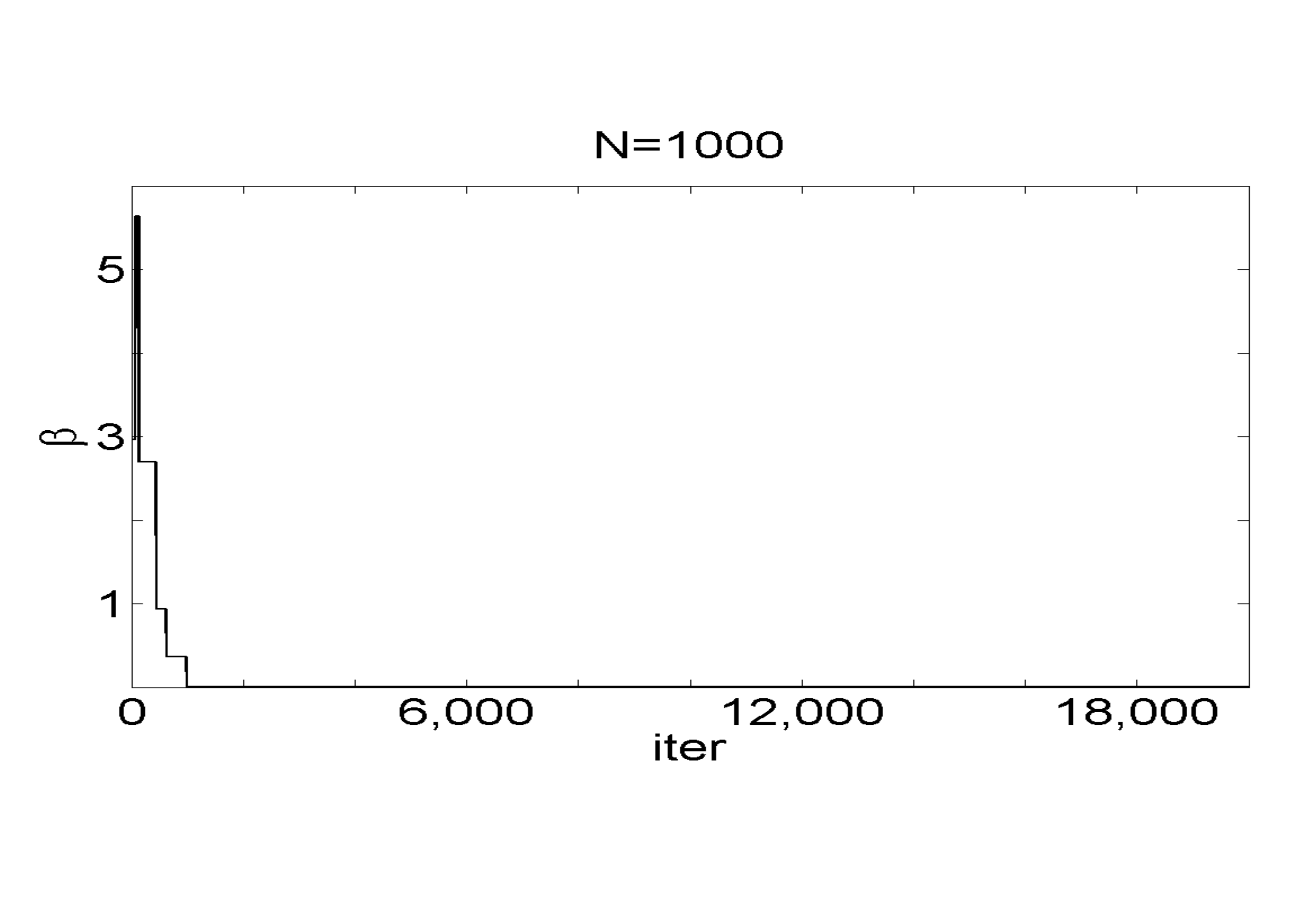}}
\end{center}
\end{figure}

\begin{figure}[H]
\begin{center}
\scalebox{0.2}{\includegraphics{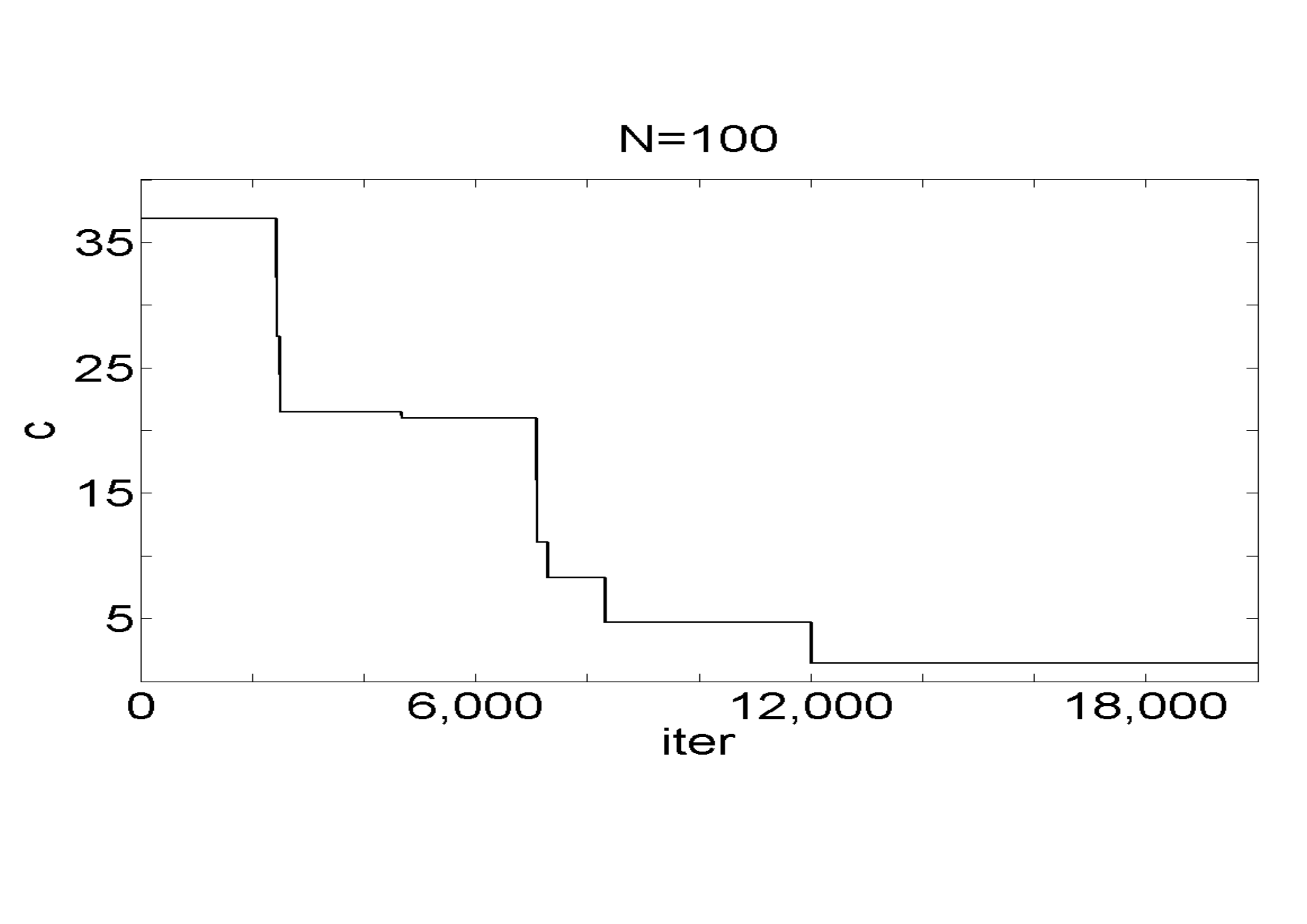}}\scalebox{0.2}{\includegraphics{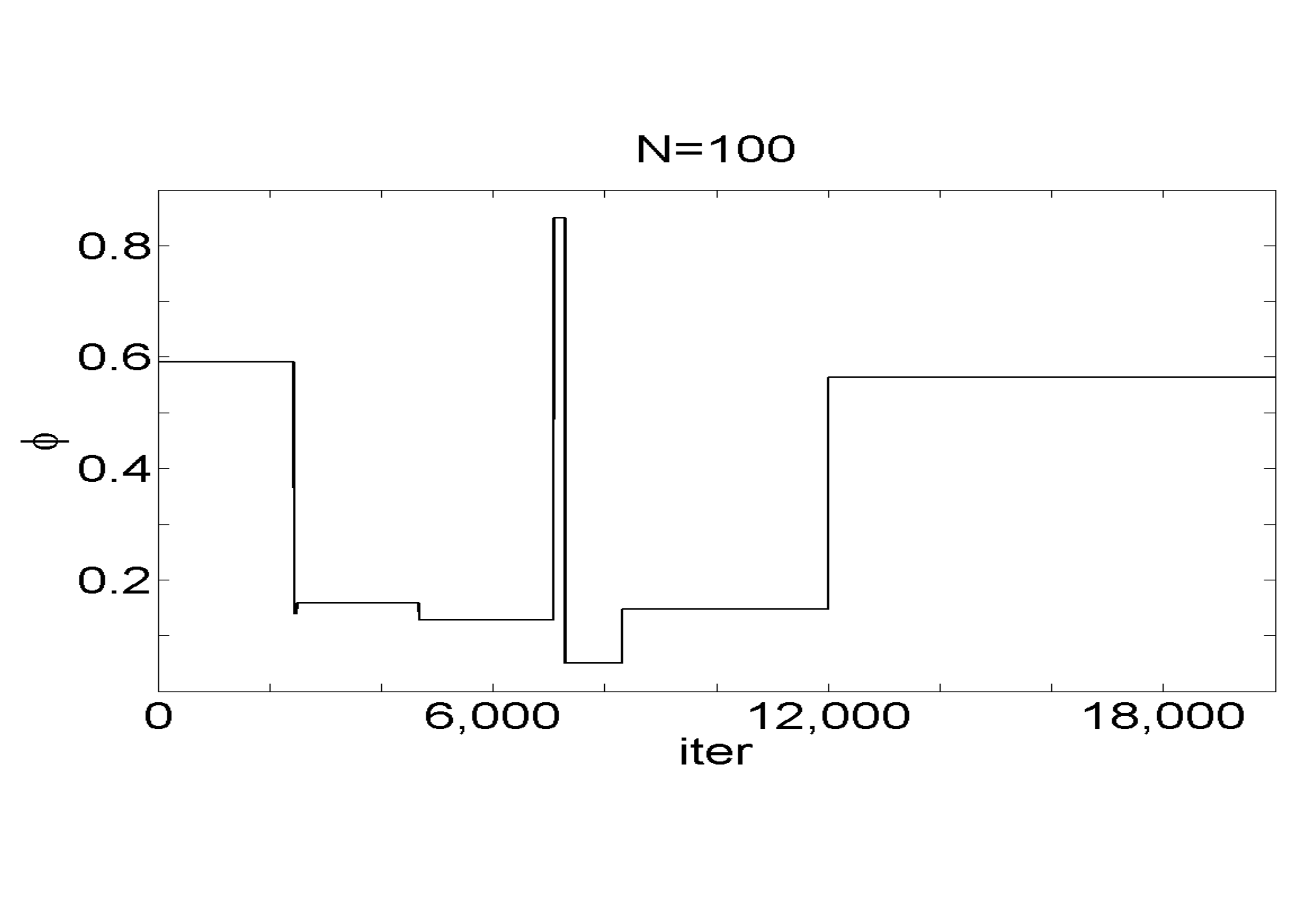}}\scalebox{0.2}{\includegraphics{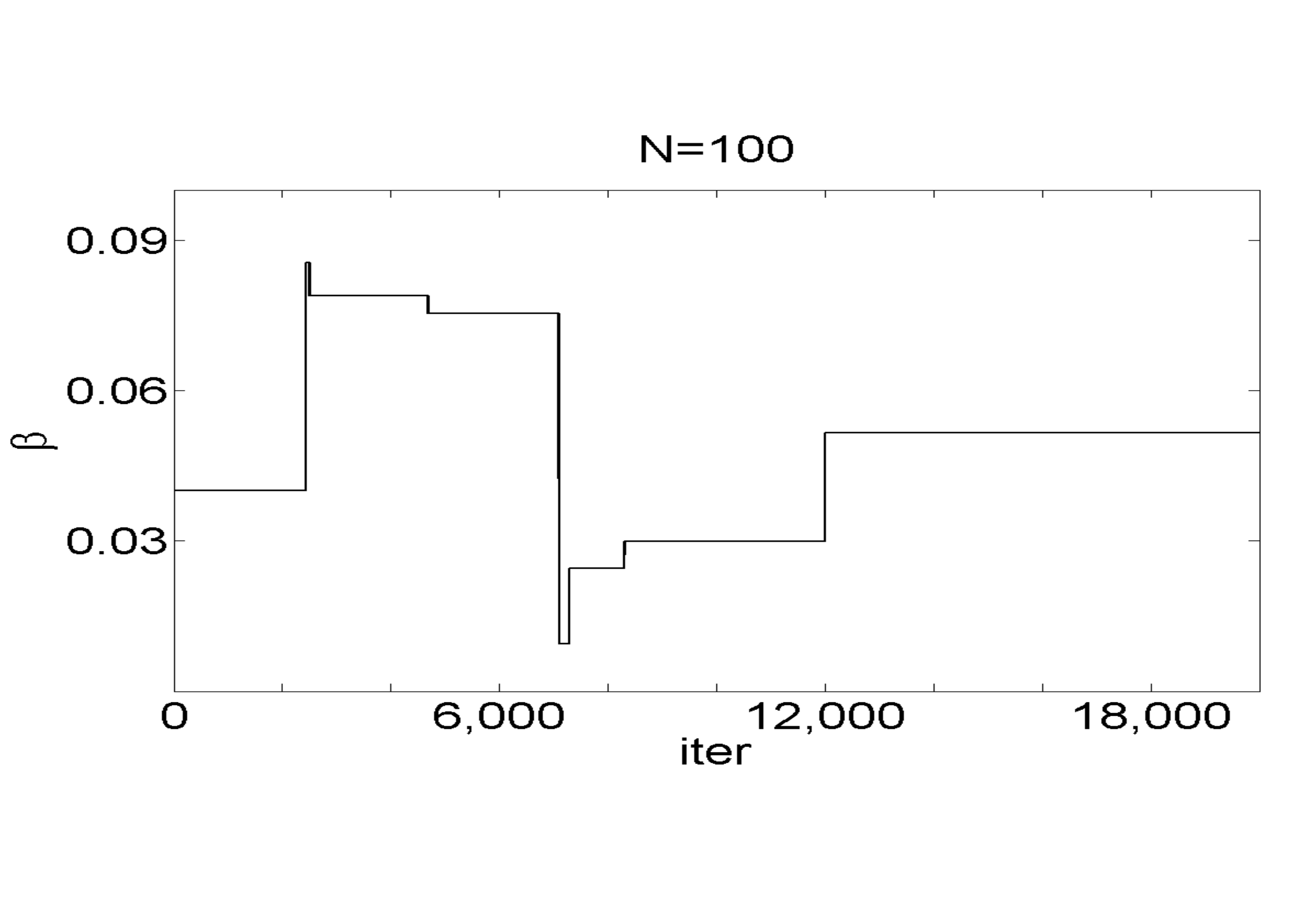}}
\end{center}
\end{figure}

\begin{figure}[H]
\begin{center}
\scalebox{0.2}{\includegraphics{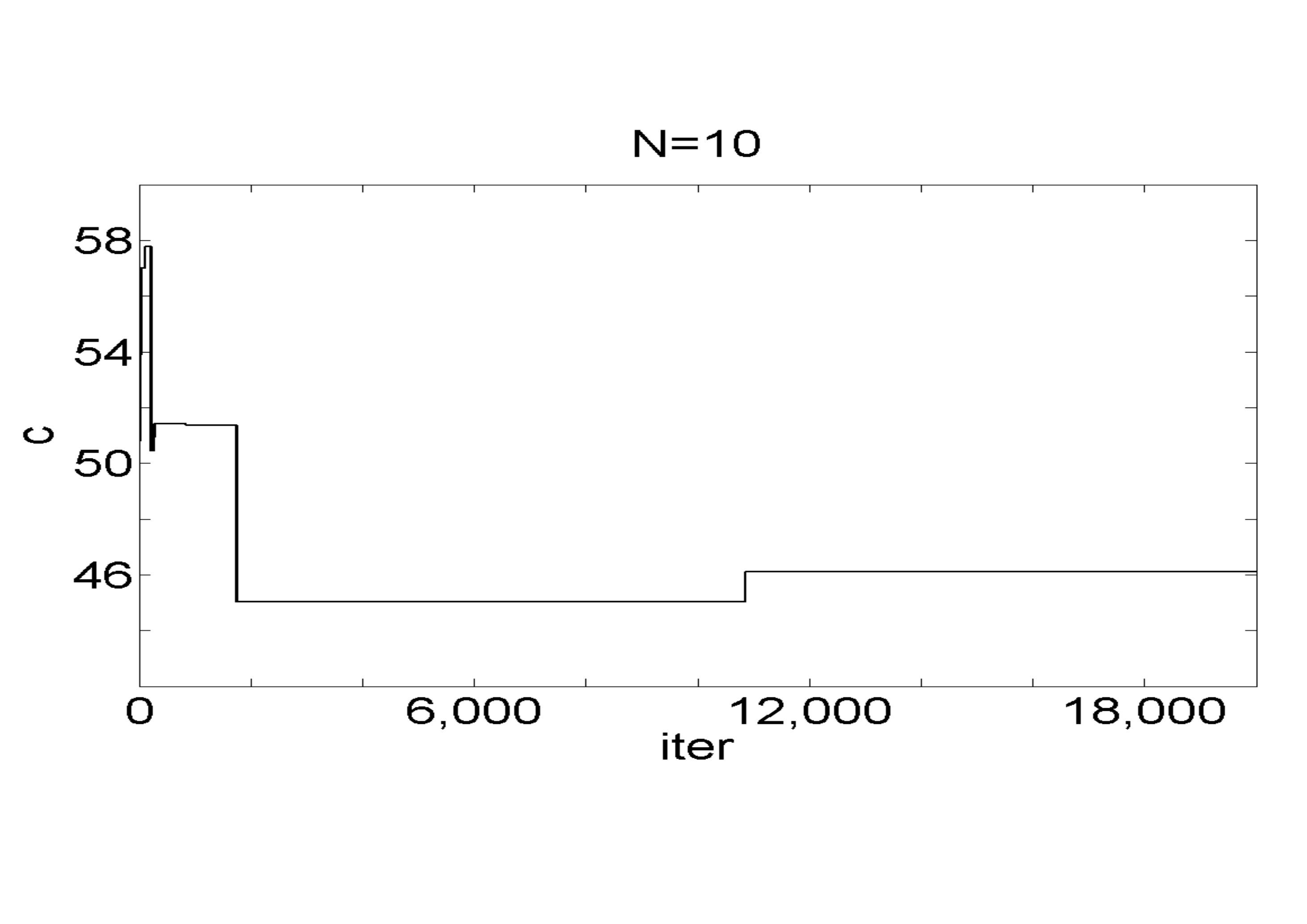}}\scalebox{0.2}{\includegraphics{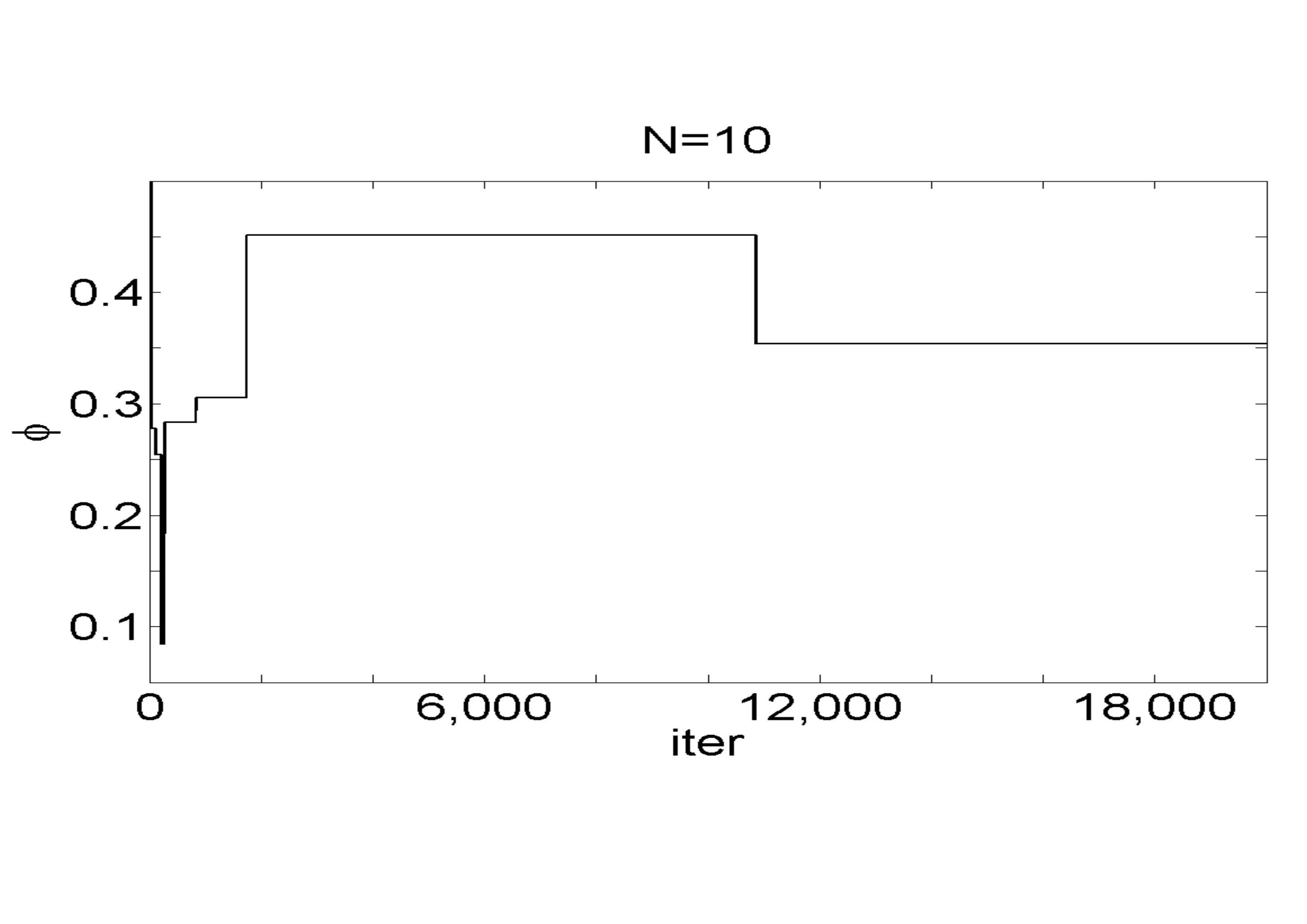}}\scalebox{0.2}{\includegraphics{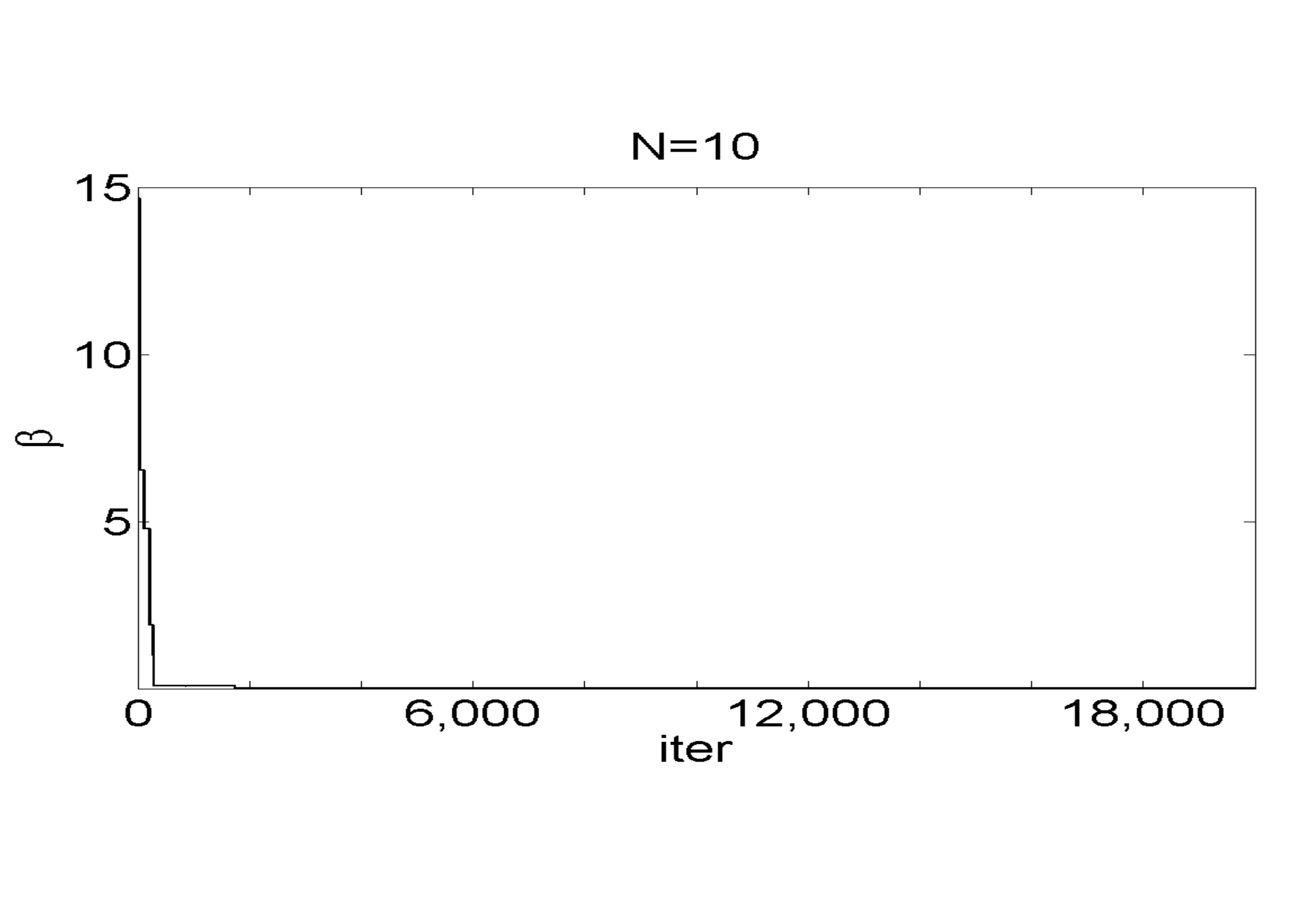}}
\caption{{\footnotesize Trace plot of each parameter across iterations for a PMMH algorithm using the SMC algorithm in Section \ref{sec:old_algo}. Each row displays the samples with different $N$}. Here $\xi_3=1.2$.}\label{fig:traceplot_col_old}
\end{center}
\end{figure}

\begin{figure}[H]
\begin{center}
\scalebox{0.2}{\includegraphics{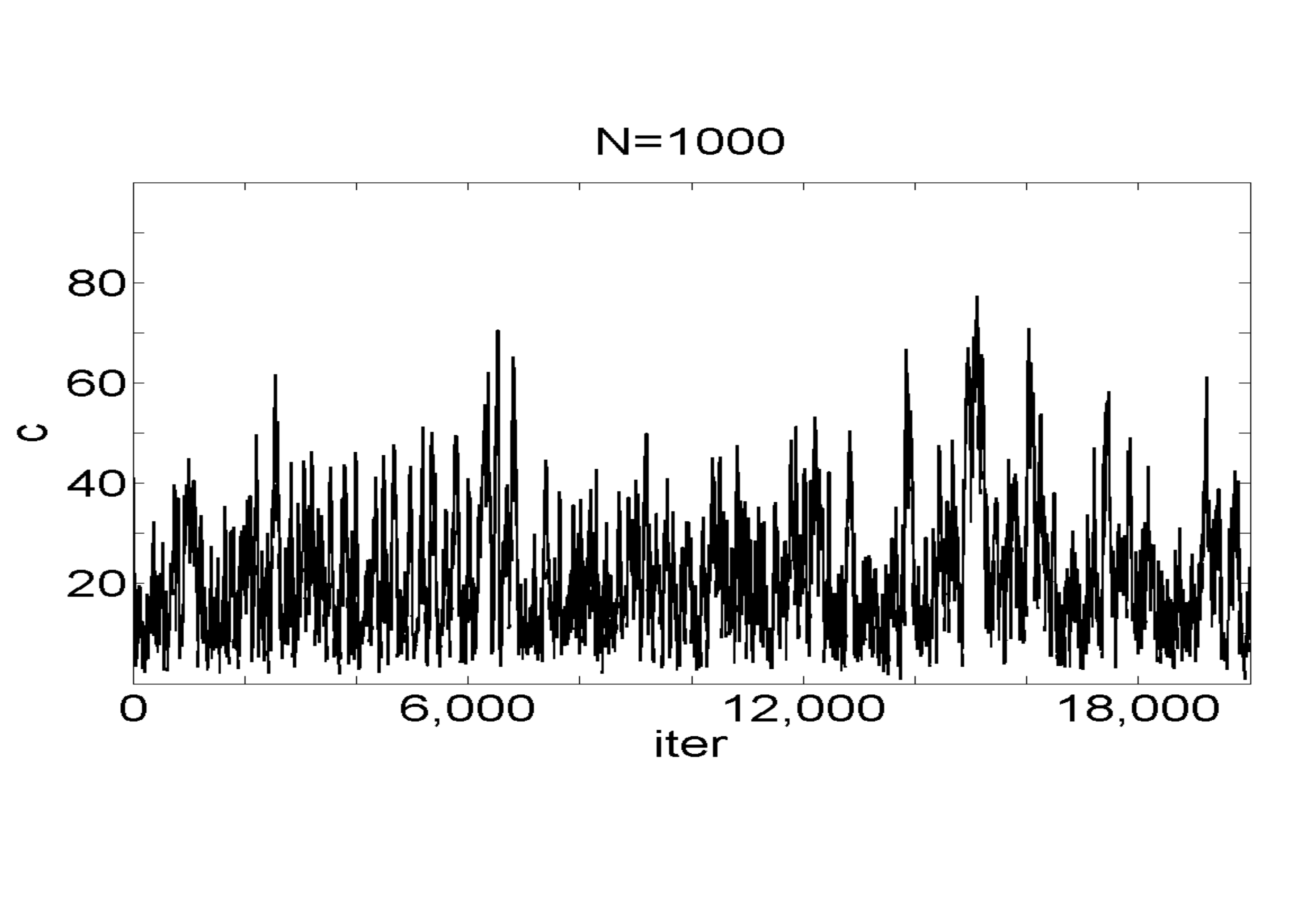}}\scalebox{0.2}{\includegraphics{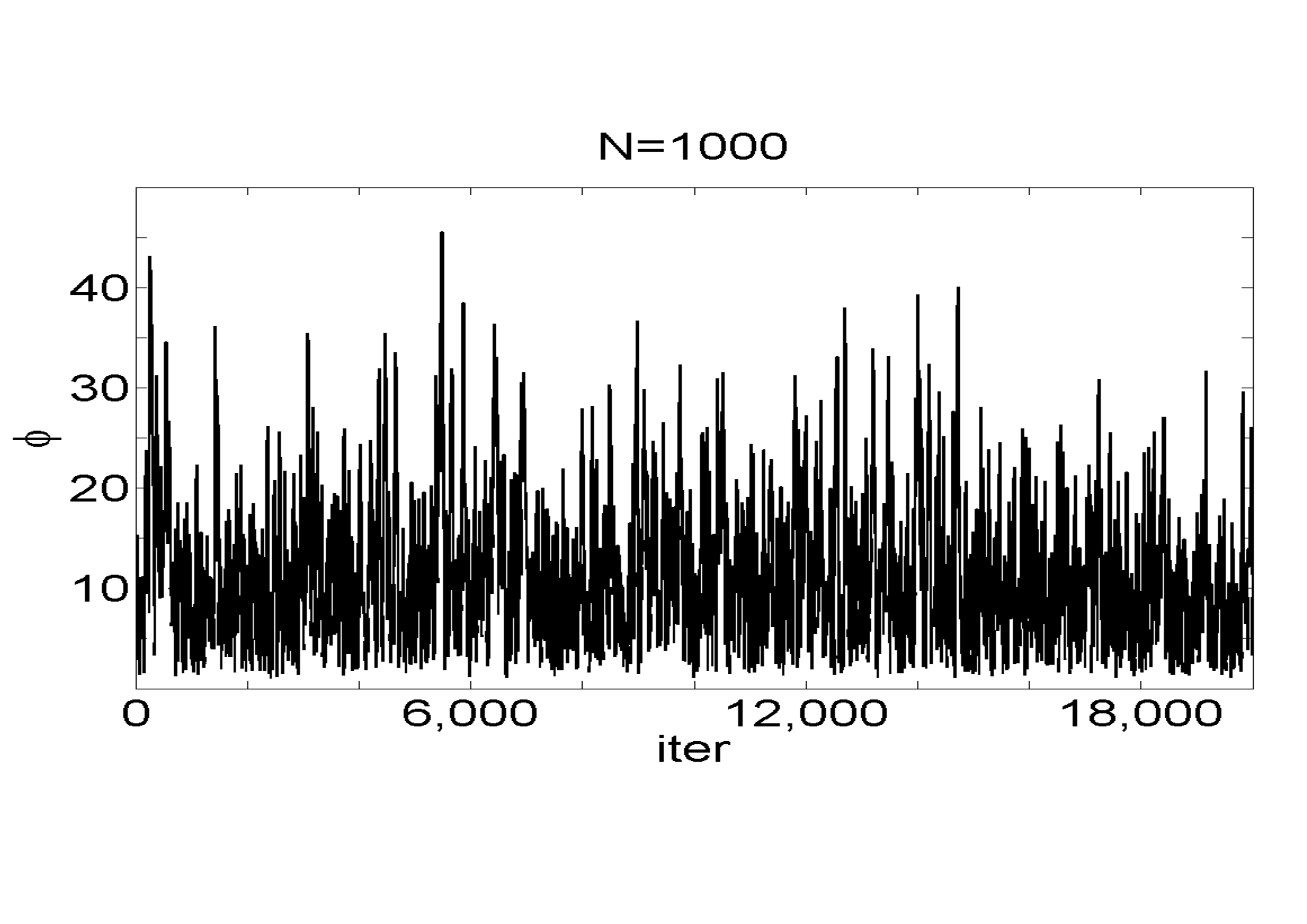}}\scalebox{0.2}{\includegraphics{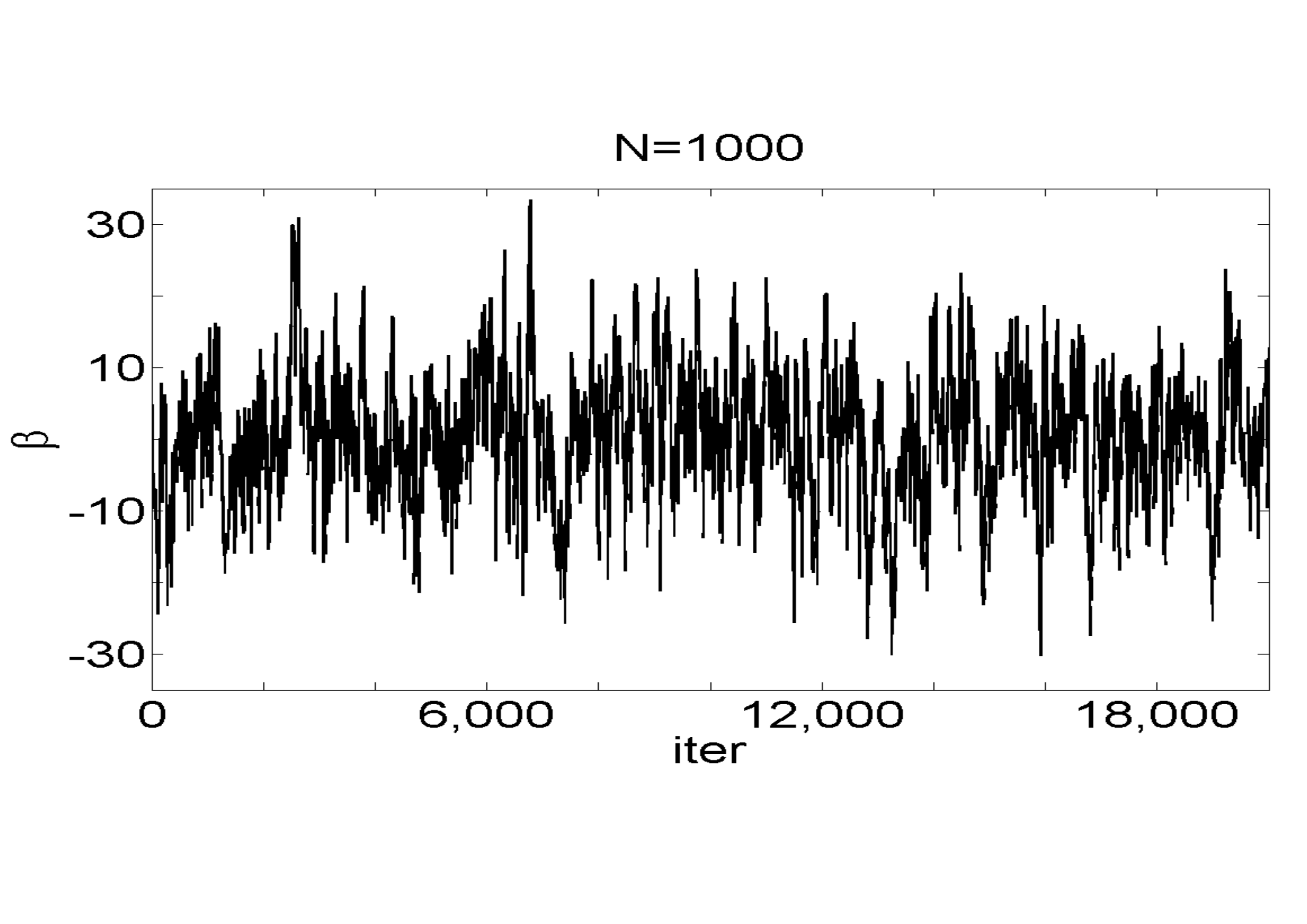}}
\end{center}
\end{figure}

\begin{figure}[H]
\begin{center}
\scalebox{0.2}{\includegraphics{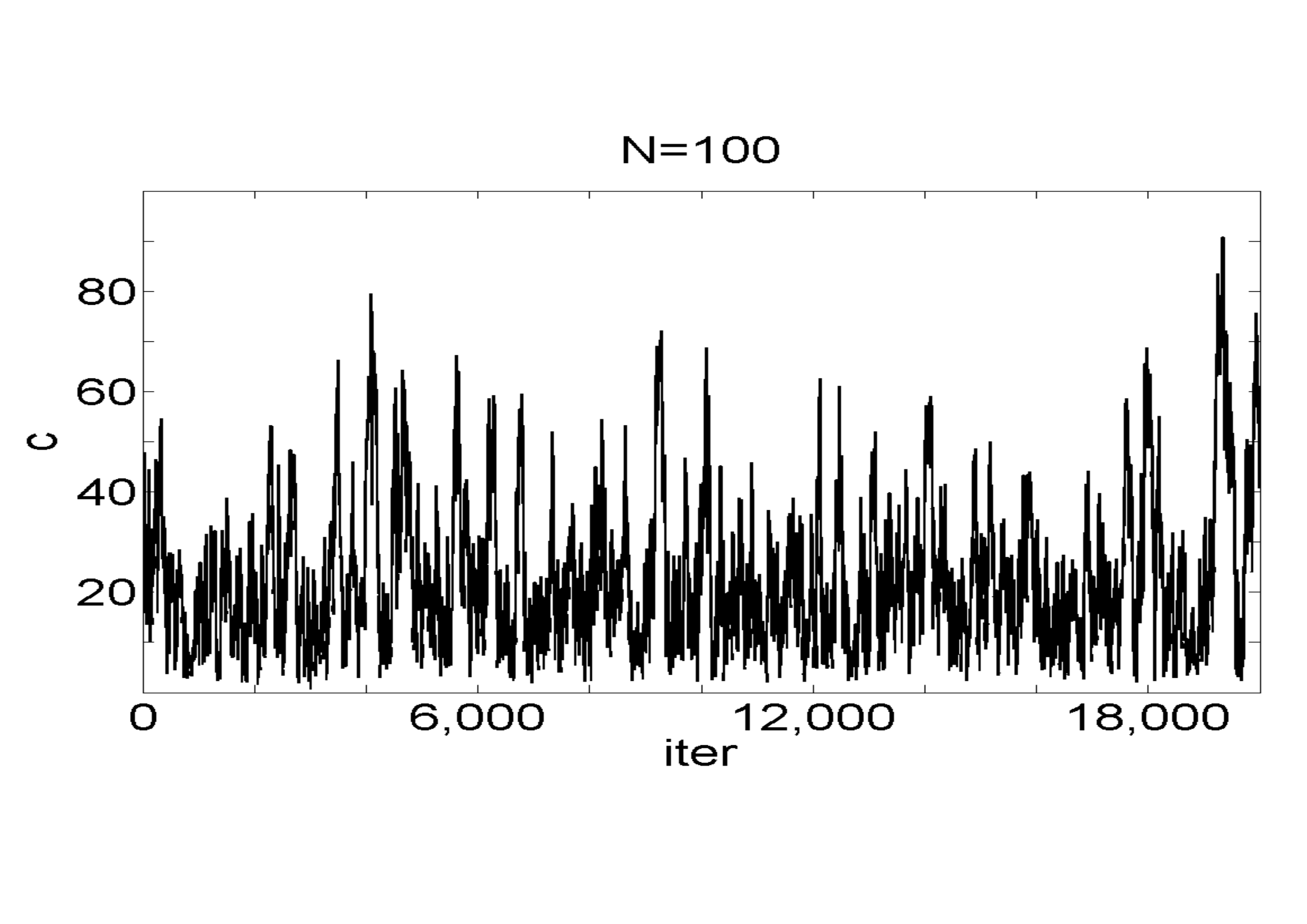}}\scalebox{0.2}{\includegraphics{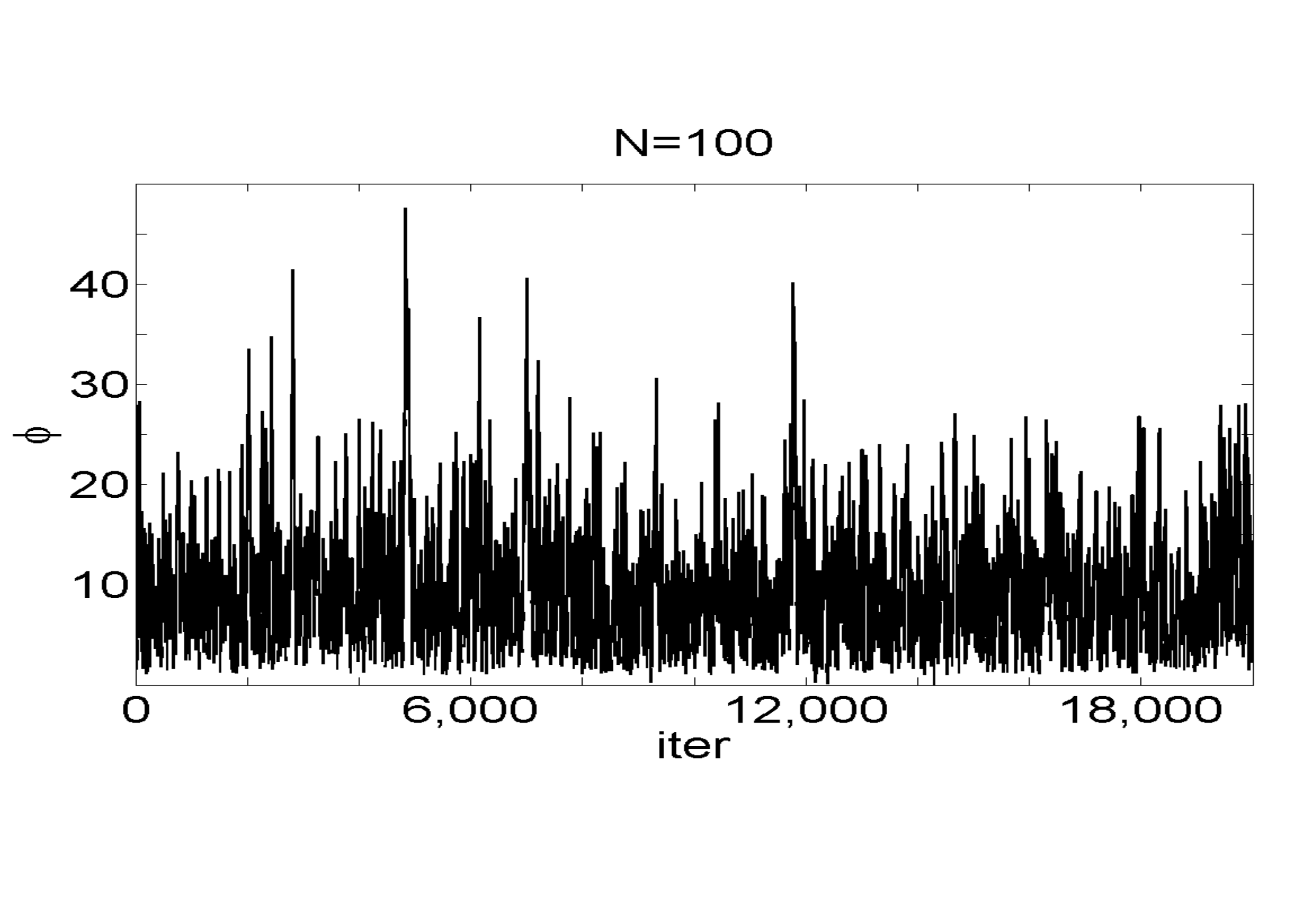}}\scalebox{0.2}{\includegraphics{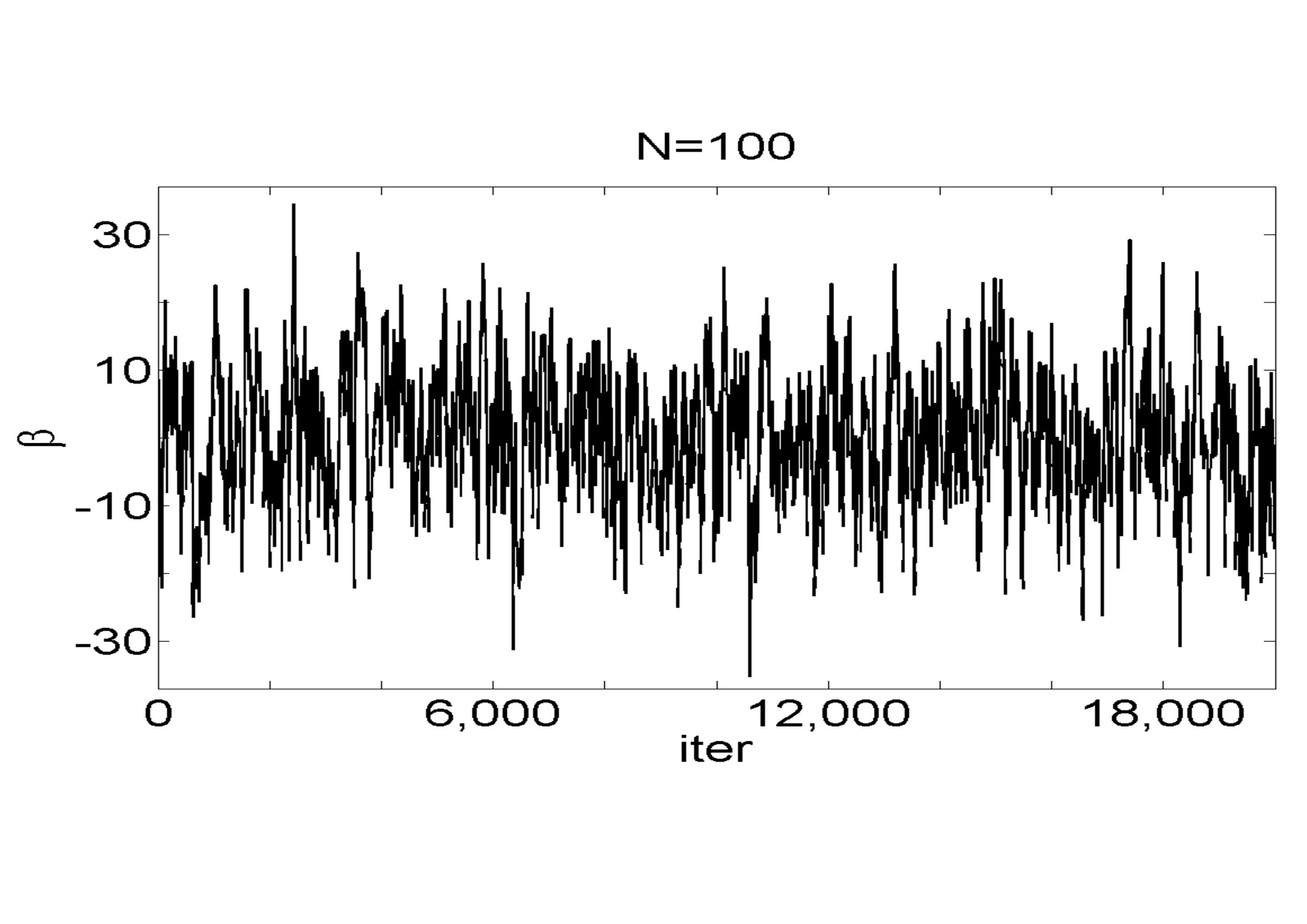}}
\end{center}
\end{figure}

\begin{figure}[H]
\begin{center}
\scalebox{0.2}{\includegraphics{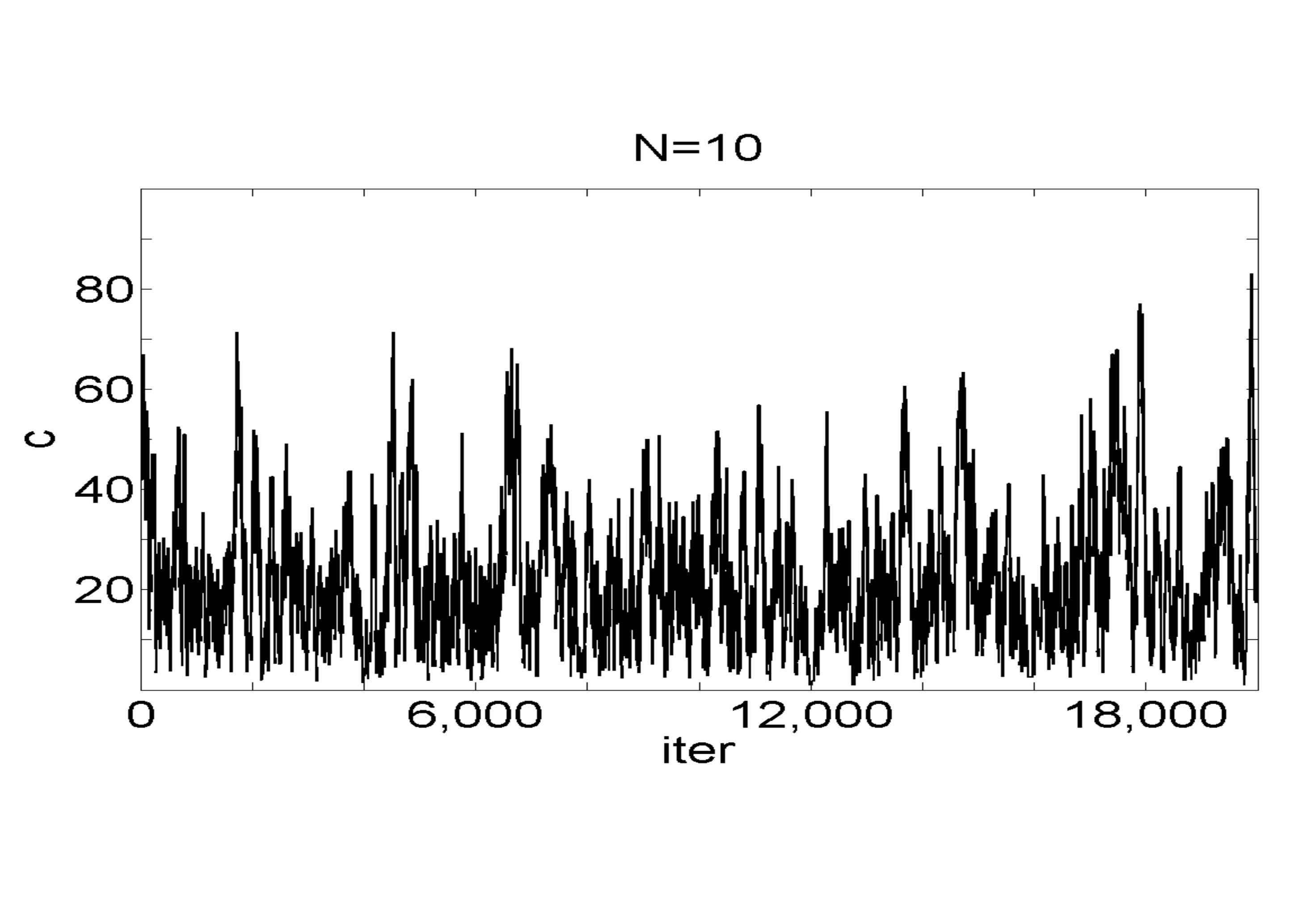}}\scalebox{0.2}{\includegraphics{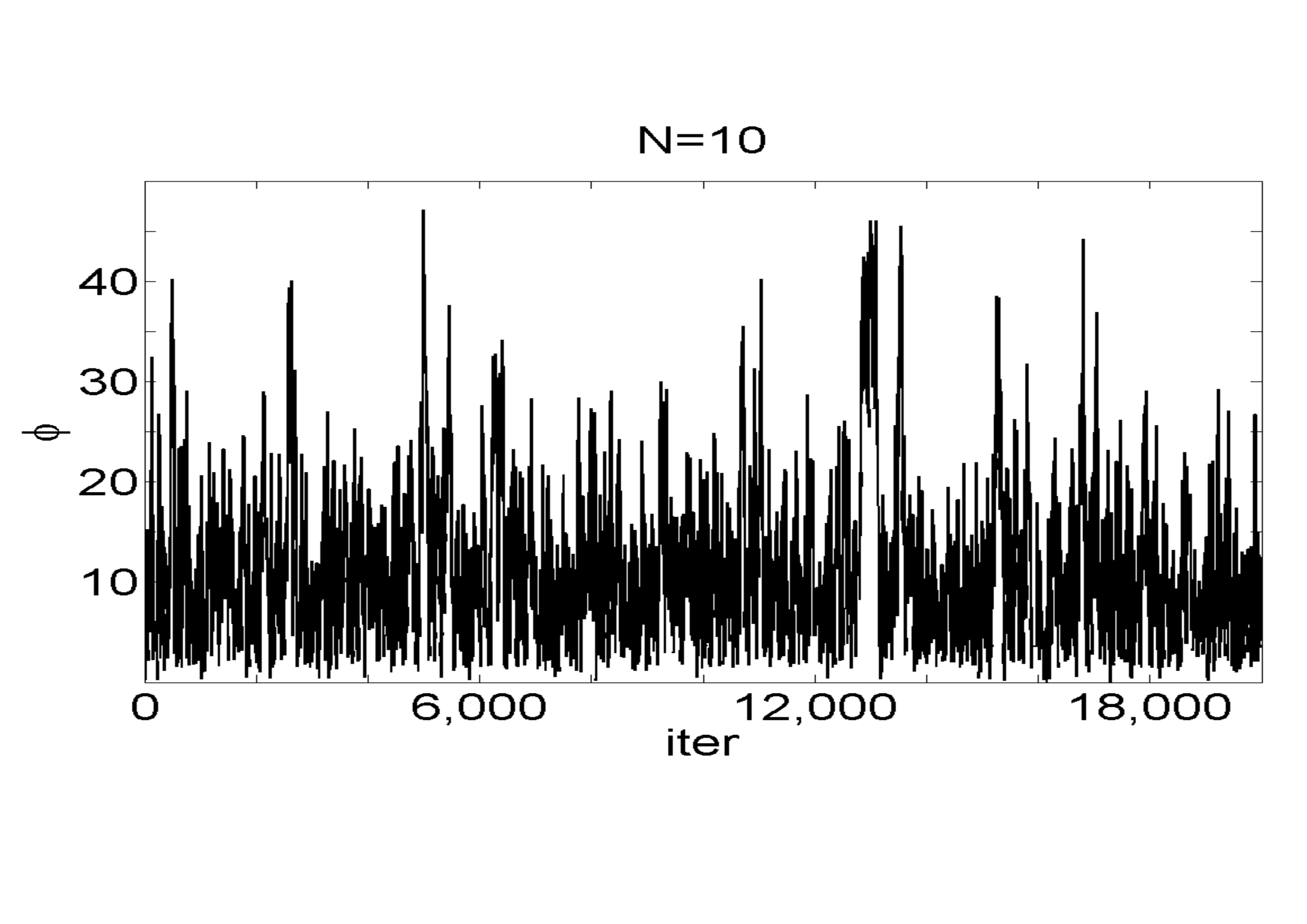}}\scalebox{0.2}{\includegraphics{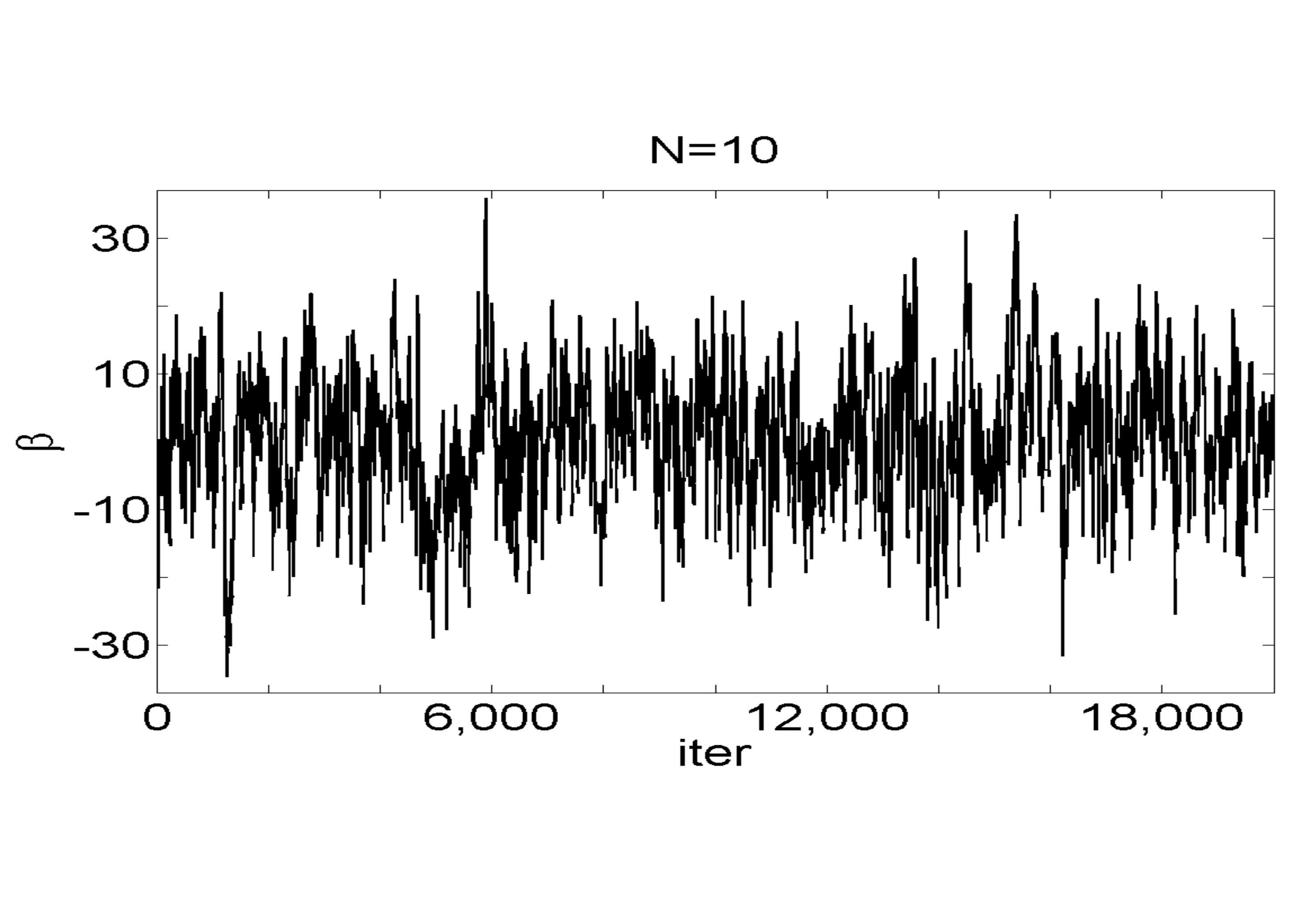}}
\caption{{\footnotesize Trace plot of each parameter across iterations for a PMMH algorithm using the SMC algorithm in Section \ref{sec:new_smc}. Each row displays the samples with different $N$}. Here $\xi_3=1.2$.}\label{fig:traceplot_col_new}
\end{center}
\end{figure}

\section{Summary}\label{sec:summ}

In this article we have investigated the alive particle filter; we developed and analyzed new particle estimates and derived new and principled MCMC algorithms.
There are several extensions to the work in this article. Firstly, we have presented and analyzed the most standard particle filter; one can investigate more
intricate filters commensurate with the current state of the art. Secondly, our theoretical results appear to hold under much weaker conditions than adopted (Section \ref{sec:linear_gaussian}); one
could extend the results in this direction. Thirdly, we have presented the most basic PMCMC algorithm; one can extend to particle Gibbs methods and beyond.
Finally, one can also use the SMC theory in this article to interact with that of MCMC theory to investigate the performance of our PMCMC procedures.

\subsection*{Acknowledgements}
The first author was supported by an MOE Singpore grant.
We thank Gareth Peters for useful conversations on this work.

\appendix

\section{Technical Results for the Predictor}\label{app:tech_pred}

The main result of this Section is below. Note we use the convention $\Phi_1(\eta_0^{T_0})(\varphi)=\eta_1(\varphi)$ and recall $\mathscr{F}_n$ is the filtration
generated by the particle system up-to time $n$.

\begin{cor}\label{cor:cond_lp}
Assume ($M_1$). Then for any $p\in[1,4]$ there exists a $C_p<\infty$ such that for any $n\geq 1$, $N\geq 2$ and $\varphi\in\mathcal{B}_b(\mathsf{E}_n)$
$$
\mathbb{E}[|[\eta_n^{T_n} - \Phi_n(\eta_{n-1}^{T_{n-1}}) ](\varphi)|^p|\mathscr{F}_{n-1}]^{1/p} \leq \frac{C_p\|\varphi\|}{\sqrt{N-1}} \quad \mathbb{P}-a.s..
$$
\end{cor}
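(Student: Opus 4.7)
The plan is to condition on $\mathscr{F}_{n-1}$ and then exploit the conditional i.i.d.~structure of the proposals. Given $\mathscr{F}_{n-1}$, the samples $x_n^1,x_n^2,\ldots$ generated in step 2 of Algorithm~\ref{alg:alive_pf} are i.i.d.~draws from $\Phi_n(\eta_{n-1}^{T_{n-1}})$, and $T_n$ is a stopping time with respect to the natural filtration they generate. Setting $\bar\varphi := \varphi - \Phi_n(\eta_{n-1}^{T_{n-1}})(\varphi)$ (so that $\|\bar\varphi\|_\infty \leq 2\|\varphi\|_\infty$), the partial sums $S_m := \sum_{i=1}^m \bar\varphi(x_n^i)$ form, conditionally on $\mathscr{F}_{n-1}$, a bounded-increment martingale with mean zero, and the object to bound is precisely $S_{T_n-1}/(T_n-1)$.

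Next, since $T_n - 1 \geq N-1$ almost surely, I would factor out the deterministic denominator,
$$
\mathbb{E}\bigl[|S_{T_n-1}/(T_n-1)|^p\,|\,\mathscr{F}_{n-1}\bigr]^{1/p} \;\leq\; \frac{1}{N-1}\,\mathbb{E}\bigl[|S_{T_n-1}|^p\,|\,\mathscr{F}_{n-1}\bigr]^{1/p},
$$
and then pass from $T_n-1$ (which is \emph{not} itself a stopping time) to $T_n$ via the trivial bound $|S_{T_n-1}|\leq |S_{T_n}| + \|\bar\varphi\|_\infty$. Applying the Burkholder-Davis-Gundy / Marcinkiewicz-Zygmund inequality to the martingale stopped at $T_n$ then yields
$$
\mathbb{E}\bigl[|S_{T_n}|^p\,|\,\mathscr{F}_{n-1}\bigr]^{1/p} \;\leq\; C_p\,\|\bar\varphi\|_\infty\,\mathbb{E}\bigl[T_n^{p/2}\,|\,\mathscr{F}_{n-1}\bigr]^{1/p},
$$
since the predictable quadratic variation is bounded by $T_n\|\bar\varphi\|_\infty^2$.

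To control the moments of $T_n$, I would invoke ($M_1$): the uniform lower bound $M_n(\mathsf{B}_n)(x) \geq c$ propagates to $p_n := \Phi_n(\eta_{n-1}^{T_{n-1}})(G_n) \geq c$ almost surely, so conditionally on $\mathscr{F}_{n-1}$ the law of $T_n$ is negative-binomial with $N$ successes and success probability $p_n \geq c$. Hence $\mathbb{E}[T_n\,|\,\mathscr{F}_{n-1}] \leq N/c$ and $\mathrm{Var}(T_n\,|\,\mathscr{F}_{n-1}) \leq N/c^2$, and for any $p \in [1,4]$ Lyapunov's inequality gives $\mathbb{E}[T_n^{p/2}\,|\,\mathscr{F}_{n-1}]^{1/p} \leq \mathbb{E}[T_n^2\,|\,\mathscr{F}_{n-1}]^{1/4} = O(\sqrt{N})$. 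Putting everything together,
$$
\mathbb{E}\bigl[|S_{T_n-1}|^p\,|\,\mathscr{F}_{n-1}\bigr]^{1/p} \;\leq\; C_p\,\|\varphi\|_\infty\sqrt{N},
$$
and dividing by $N-1$ produces the required bound $C_p\|\varphi\|_\infty/\sqrt{N-1}$.

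The main obstacle is the random, data-dependent number of summands $T_n-1$: the textbook Marcinkiewicz-Zygmund bound governs fixed-size i.i.d.~sums, whereas here both the centring quantity $\Phi_n(\eta_{n-1}^{T_{n-1}})(\varphi)$ and the summation limit depend on the samples themselves. The remedy is that $T_n-1$ is sandwiched between the stopping time $T_n$, at which the martingale theory applies cleanly, and the deterministic lower bound $N-1$, which enables the final normalisation; the discrepancy between $S_{T_n-1}$ and $S_{T_n}$ is absorbed into a single bounded increment, and the negative-binomial moment bound forces no dependence on $n$ beyond the constant $c$ from ($M_1$).
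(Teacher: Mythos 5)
Your argument is correct, but it reaches the bound by a genuinely different route from the paper. You never condition on $T_n$: you treat $S_m=\sum_{i\le m}\bar\varphi(x_n^i)$ as a conditionally centred, bounded-increment martingale, absorb the discrepancy between $S_{T_n-1}$ and the stopped sum $S_{T_n}$ into one increment, apply Burkholder--Davis--Gundy at the stopping time $T_n$, and then close the estimate with the negative-binomial moment bounds $\mathbb{E}[T_n^{p/2}\,|\,\mathscr{F}_{n-1}]^{1/p}=O(\sqrt N)$, which follow from $\Phi_n(\eta_{n-1}^{T_{n-1}})(G_n)\ge c$ under ($M_1$). The paper instead reduces to an abstract i.i.d.\ lemma (Lemma \ref{lem:tech_lem_neg1}) and splits the error via Minkowski into two pieces: a term that is centred \emph{conditionally on} $T$ (exploiting that, given $T$, exactly $N-1$ of the first $T-1$ samples lie in $\mathsf{B}_n$ with law $\nu(\cdot\,|\,B)$ and $T-N$ lie in $\mathsf{B}_n^c$ with law $\nu(\cdot\,|\,B^c)$, so the fixed-size Marcinkiewicz--Zygmund inequality applies), plus a ``bias'' term quantifying the fluctuation of $T$ itself, controlled through the representation of $T$ as a sum of $N$ i.i.d.\ geometric variables (Lemma \ref{lem:tech_lem_neg2}). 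Your stopped-martingale argument is shorter and sidesteps the computation of the conditional law of the samples given $T_n$ entirely, at the cost of invoking BDG; the paper's conditioning-on-$T$ decomposition is heavier here but is the same structural device that underpins the exact identities $\mathbb{E}[\eta_n^{T_n}(\varphi)\,|\,\mathscr{F}_{n-1}]=\Phi_n(\eta_{n-1}^{T_{n-1}})(\varphi)$ and the unbiasedness of $\gamma_n^{T_n}$, so it earns its keep elsewhere in the paper. Note also that both routes produce the slightly weaker numerator $\sqrt N/(N-1)$, which is bounded by $\sqrt2/\sqrt{N-1}$ for $N\ge2$, so the stated rate is recovered; and both rely on the same deterministic lower bound $\Phi_n(\eta_{n-1}^{T_{n-1}})(\mathsf{B}_n)\wedge\Phi_n(\eta_{n-1}^{T_{n-1}})(\mathsf{B}_n^c)\ge c$ to make the conditional estimate hold $\mathbb{P}$-a.s.
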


\begin{proof}
The case $n=1$ follows directly from Lemma \ref{lem:tech_lem_neg1} and ($M_1$), so we consider $n\geq 2$.
For $n\geq 2$,  by Lemma \ref{lem:tech_res} $\mathbb{E}[\eta_n^{T_n}(\varphi)|\mathscr{F}_{n-1}] = \Phi_n(\eta_{n-1}^{T_{n-1}})(\varphi)$, so conditional
upon $\mathscr{F}_{n-1}$ we are in the setting of Lemma \ref{lem:tech_lem_neg1}. By ($M_1$) we can verify that
$\Phi_n(\eta_{n-1}^{T_{n-1}})(\mathsf{B}_{n})\wedge\Phi_n(\eta_{n-1}^{T_{n-1}})(\mathsf{B}_{n}^c)\geq c$ for
some deterministic constant $1>c>0$ and hence application of Lemma \ref{lem:tech_lem_neg1} proves the result.
\end{proof}

\subsection{Additional Technical Results}

In the following section let $(\mathsf{F},\mathscr{F})$ be a measurable space and $X^1,X^2,\dots$ be i.i.d.~random variables on $\mathsf{F}$ associated to $\nu\in\mathcal{P}(\mathsf{F})$.
Let $B\in\mathscr{F}$ be such that
$$
\nu(B) \wedge \nu(B^c) \geq c
$$
for some $1>c>0$. Let $N\geq 2$ and define
$$
T:= \inf\{n\geq 1: \sum_{i=1}^n \mathbb{I}_B(X^i) \geq N\}.
$$
Note that $T$ is a negative Binomial random variable, with parameters $N$ and success probability $\nu(B)$.
We will consider some $\mathbb{L}_p-$properties of 
$$
\frac{1}{T-1}\sum_{i=1}^{T-1} \varphi(X^i)
$$
with $\varphi\in\mathcal{B}_b(\mathsf{F})$. Expectations are written as $\mathbb{E}$.
Note that one can follow the proof of Lemma \ref{lem:tech_res} to show that
$$
\mathbb{E}\bigg[\frac{1}{T-1}\sum_{i=1}^{T-1} \varphi(X^i)\bigg] = \nu(\varphi).
$$
We then have the following technical results which are used in the main text.

\begin{lem}\label{lem:tech_lem_neg1}
For any $p\in[1,4]$  there exist a $C_p<\infty$ such that for any, $N\geq 2$, and $\varphi\in\mathcal{B}_b(\mathsf{F})$ 
$$
\mathbb{E}\bigg[\Big| 
\frac{1}{T-1}\sum_{i=1}^{T-1} [\varphi(X^i)-\nu(\varphi)]\Big|^p\bigg]^{1/p} \leq \frac{C_p\|\varphi\|_{\infty}}{\sqrt{N-1}}.
$$
\end{lem}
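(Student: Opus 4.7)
The plan is to condition on the stopping time $T$ and exploit the conditional structure of the sample. Given $T=t$, exactly $N-1$ of $X^1,\dots,X^{t-1}$ lie in $B$ and the remaining $t-N$ lie in $B^c$, since $T$ is the first time $N$ successes have occurred (so $X^T \in B$ automatically). Moreover, conditional on the identities of these indices, the $B$-variables are i.i.d.\ from $\nu_B:=\nu(\cdot\,|\,B)$, the $B^c$-variables are i.i.d.\ from $\nu_{B^c}:=\nu(\cdot\,|\,B^c)$, and the two groups are independent. Writing $p=\nu(B)\in[c,1-c]$ and using $\nu(\varphi)=p\,\nu_B(\varphi)+(1-p)\,\nu_{B^c}(\varphi)$, I would decompose
\begin{align*}
\frac{1}{T-1}\sum_{i=1}^{T-1}\bigl[\varphi(X^i)-\nu(\varphi)\bigr]
&= \frac{1}{T-1}\sum_{i=1}^{T-1}\mathbb{I}_B(X^i)\bigl[\varphi(X^i)-\nu_B(\varphi)\bigr] \\
&\quad+\frac{1}{T-1}\sum_{i=1}^{T-1}\mathbb{I}_{B^c}(X^i)\bigl[\varphi(X^i)-\nu_{B^c}(\varphi)\bigr] \\
&\quad+\Bigl(\frac{N-1}{T-1}-p\Bigr)\bigl(\nu_B(\varphi)-\nu_{B^c}(\varphi)\bigr) \;=:\; U_1+U_2+U_3,
\end{align*}
using the deterministic counts $\sum_{i=1}^{T-1}\mathbb{I}_B(X^i)=N-1$ and $\sum_{i=1}^{T-1}\mathbb{I}_{B^c}(X^i)=T-N$.

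The two stochastic terms $U_1,U_2$ are controlled by a conditional Marcinkiewicz--Zygmund (M--Z) argument. Conditionally on $T$, $U_1$ is distributed as $(T-1)^{-1}\sum_{j=1}^{N-1}[\varphi(W_j^B)-\nu_B(\varphi)]$ with $W_j^B$ i.i.d.\ $\nu_B$, each summand being bounded by $2\|\varphi\|_\infty$ and mean zero. M--Z gives $\mathbb{E}[|U_1|^p\mid T]^{1/p}\leq C_p\|\varphi\|_\infty\sqrt{N-1}/(T-1)$, and the deterministic bound $T-1\geq N-1$ then yields $\|U_1\|_p\leq C_p\|\varphi\|_\infty/\sqrt{N-1}$. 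The argument for $U_2$ is identical, except the (random) number of summands is $T-N$, so the conditional bound becomes $C_p\|\varphi\|_\infty\sqrt{T-N}/(T-1)$; combining $T-1\geq N-1$ with the a priori estimate $\mathbb{E}[(T-N)^{p/2}]^{1/p}=O(\sqrt{N})$ still produces the required $\mathcal{O}(1/\sqrt{N-1})$ rate. That a priori estimate follows from writing $T=\sum_{j=1}^N G_j$ with $G_j$ i.i.d.\ $\mathcal{G}eo(p)$, so that $T-N=\sum_j(G_j-1)$ is a sum of $N$ nonnegative variables whose moments are uniformly bounded in $p\in[c,1-c]$.

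The hard part will be $U_3$, the deterministic-given-$T$ bias produced because $(N-1)/(T-1)$ does not equal its target $p$. I would handle it by writing $(N-1)/(T-1)-p=-(pT-N+1-p)/(T-1)$ and observing that $pT-N=\sum_{j=1}^N(pG_j-1)$ is a sum of $N$ i.i.d.\ mean-zero, bounded-moment random variables (with moment bounds uniform in $p\in[c,1-c]$). One more application of M--Z gives $\mathbb{E}[|pT-N|^p]^{1/p}\leq C_p\sqrt{N}$, and dividing by $T-1\geq N-1$ yields $\|U_3\|_p\leq C_p\|\varphi\|_\infty/\sqrt{N-1}$. Summing the three $\mathbb{L}_p$ bounds by Minkowski completes the argument, with $C_p$ depending only on $p$ and $c$. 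Note that the restriction $p\in[1,4]$ plays no role in this strategy; it is inherited only from the way the lemma is invoked inside the proof of Theorem \ref{theo:time_uniform}.
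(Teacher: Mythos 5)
Your proposal is correct and follows essentially the same route as the paper: the decomposition $U_1+U_2+U_3$ is algebraically identical to the paper's Minkowski split (your $U_1+U_2$ is the first term in \eqref{eq:minkow_app_neg1}, handled by conditioning on $T$ and the Marcinkiewicz--Zygmund inequality, and your $U_3$ is exactly the quantity $\zeta(N,T)$ controlled in Lemma \ref{lem:tech_lem_neg2} via the representation of $T$ as a sum of $N$ i.i.d.\ geometric variables and another M--Z application). The only differences are cosmetic bookkeeping (you bound $U_1$ and $U_2$ separately rather than jointly), and your closing observation that the restriction $p\in[1,4]$ is inessential to the argument is consistent with the paper's use of explicit low-order geometric moments.
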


\begin{proof}
Throughout $C_p$ is a finite positive constant (that only depends upon $p$) whose value may change from line to line.
WLOG we will assume that $\nu(\varphi)=0$. By the Minkowski inequality
\begin{eqnarray}
\mathbb{E}\bigg[\Big| 
\frac{1}{T-1}\sum_{i=1}^{T-1} \varphi(X^i)\Big|^p\bigg]^{1/p} & \leq &
\mathbb{E}\bigg[\Big| 
\frac{1}{T-1}\sum_{i=1}^{T-1} \varphi(X^i)-\frac{N-1}{T-1}\frac{\nu(\mathbb{I}_B\varphi)}{\nu(B)} - \frac{T-N}{T-1}\frac{\nu(\mathbb{I}_{B^c}\varphi)}{\nu(B^c)}
\Big|^p\bigg]^{1/p} + \nonumber\\ & &
\mathbb{E}\bigg[\Big|\frac{N-1}{T-1}\frac{\nu(\mathbb{I}_B\varphi)}{\nu(B)} + \frac{T-N}{T-1}\frac{\nu(\mathbb{I}_{B^c}\varphi)}{\nu(B^c)}\Big|^p\bigg]^{1/p}
\label{eq:minkow_app_neg1}.
\end{eqnarray}
Lemma \ref{lem:tech_lem_neg2} will control the second term on the R.H.S.~so we focus on the first term on the R.H.S.. 

We have
$$
\mathbb{E}\bigg[\Big|\frac{1}{T-1}\sum_{i=1}^{T-1} \varphi(X^i)-\frac{N-1}{T-1}\frac{\nu(\mathbb{I}_B\varphi)}{\nu(B)} + \frac{T-N}{T-1}\frac{\nu(\mathbb{I}_{B^c}\varphi)}{\nu(B^c)}
\Big|^p\bigg]^{1/p} = 
$$
$$
\mathbb{E}\bigg[\Big|\frac{1}{T-1}\sum_{i=1}^{T-1} \mathbb{I}_B(X^i)\Big[\varphi(X^i)-\frac{\nu(\mathbb{I}_B\varphi)}{\nu(B)}\Big] + 
\frac{1}{T-1}\sum_{i=1}^{T-1} \mathbb{I}_{B^c}(X^i)\Big[\varphi(X^i)-\frac{\nu(\mathbb{I}_{B^c}\varphi)}{\nu(B^c)}\Big]
\Big|^p\bigg]^{1/p}.
$$
Now conditioning upon $T$ (so that $N-1$ samples lie in $B$ and $T-N$ lie in $B^c$ and we subtract the conditional expectations of the (conditionally) independent random variables)
 and applying an appropriately modified version of the M-Z inequality (e.g.~Chapter 7 of Del Moral (2004)) we have
$$
\mathbb{E}\bigg[\Big|\frac{1}{T-1}\sum_{i=1}^{T-1} \varphi(X^i)-\frac{N-1}{T-1}\frac{\nu(\mathbb{I}_B\varphi)}{\nu(B)} + \frac{T-N}{T-1}\frac{\nu(\mathbb{I}_{B^c}\varphi)}{\nu(B^c)}
\Big|^p\bigg]^{1/p} \leq
$$
$$
C_p \mathbb{E}\bigg[
\frac{1}{(T-1)^{p/2+1}}\Big(
(T-1)(\|\varphi\|_{\infty}(1+\frac{1}{\nu(B)}))^p +
(T-N) (\|\varphi\|_{\infty}(1+\frac{1}{\nu(B^c)}))^p
\Big)
\bigg]^{1/p}
$$
where we are using the conditional distribution of $X^1,\dots,X^{T-1}$, given $T$. Setting $\bar{c} = 
(\|\varphi\|_{\infty}(1+\frac{1}{\nu(B)}))^p\vee (\|\varphi\|_{\infty}(1+\frac{1}{\nu(B^c)}))^p$ we
have that
$$
\mathbb{E}\bigg[\Big|\frac{1}{T-1}\sum_{i=1}^{T-1} \varphi(X^i)-\frac{N-1}{T-1}\frac{\nu(\mathbb{I}_B\varphi)}{\nu(B)} + \frac{T-N}{T-1}\frac{\nu(\mathbb{I}_{B^c}\varphi)}{\nu(B^c)}
\Big|^p\bigg]^{1/p} \leq
C_p \bar{c}^{1/p}\mathbb{E}\bigg[\frac{1}{(T-1)^{p/2}}\Big(1+\frac{T-N}{T-1}\Big)\bigg]^{1/p}
$$
and then noting  $1/(T-1)\leq 1/(N-1)$, $(1+\frac{T-N}{T-1})\leq 2$ and using  $\nu(B) \wedge \nu(B^c) \geq c$ for dealing with $\bar{c}$ we have proved that
$$
\mathbb{E}\bigg[\Big|\frac{1}{T-1}\sum_{i=1}^{T-1} \varphi(X^i)-\frac{N-1}{T-1}\frac{\nu(\mathbb{I}_B\varphi)}{\nu(B)} + \frac{T-N}{T-1}\frac{\nu(\mathbb{I}_{B^c}\varphi)}{\nu(B^c)}
\Big|^p\bigg]^{1/p} \leq \frac{C_p\|\varphi\|_{\infty}}{\sqrt{N-1}}.
$$
Returning to \eqref{eq:minkow_app_neg1} and using Lemma \ref{lem:tech_lem_neg2}, along with above result allows us to complete the proof.
\end{proof}

\begin{lem}\label{lem:tech_lem_neg2}
For any $p\in[1,4]$  there exist a $C_p<\infty$ such that for any, $N\geq 2$, and $\varphi\in\mathcal{B}_b(\mathsf{F})$ 
$$
\mathbb{E}\bigg[\Big|\frac{N-1}{T-1}\frac{\nu(\mathbb{I}_B\varphi)}{\nu(B)} + \frac{T-N}{T-1}\frac{\nu(\mathbb{I}_{B^c}\varphi)}{\nu(B^c)} -\nu(\varphi)\Big|^p\bigg]^{1/p} \leq 
\frac{C_p\|\varphi\|_{\infty}}{\sqrt{N}}.
$$
\end{lem}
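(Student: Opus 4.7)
The plan is to reduce the two-term expression to a single concentration estimate for $(N-1)/(T-1)$ about its mean $\nu(B)$. Set $a = \nu(\mathbb{I}_B\varphi)/\nu(B)$ and $b = \nu(\mathbb{I}_{B^c}\varphi)/\nu(B^c)$, so that $\nu(\varphi) = \nu(B)a + \nu(B^c)b$ and $|a|,|b|\le\|\varphi\|_\infty$. Using the identities $(T-N)/(T-1) = 1 - (N-1)/(T-1)$ and $\nu(B^c) = 1 - \nu(B)$, the expression inside the absolute value collapses algebraically to
\[
(a-b)\Bigl(\tfrac{N-1}{T-1} - \nu(B)\Bigr).
\]
Since $|a-b|\le 2\|\varphi\|_\infty$, it suffices to prove $\mathbb{E}[|(N-1)/(T-1) - \nu(B)|^p]^{1/p} \le C_p/\sqrt{N}$ for a constant depending only on $p$ and the lower bound $c$ for $\nu(B)\wedge\nu(B^c)$.

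Next, exploit the fact that $T$ is negative binomial: write $T = \sum_{j=1}^N G_j$ with $G_j$ i.i.d.\ $\mathcal{G}eo(\nu(B))$, each with moments of every order controlled by a function of $c$ alone, and with mean $1/\nu(B)$. A direct manipulation gives
\[
\Bigl|\tfrac{N-1}{T-1} - \nu(B)\Bigr| = \frac{|(N-1) - \nu(B)(T-1)|}{T-1} \le \frac{\nu(B)\,|T - N/\nu(B)| + (1-\nu(B))}{N-1},
\]
where the denominator bound $T-1\ge N-1$ is used. Applying the Marcinkiewicz--Zygmund inequality to the centered i.i.d.\ sum $T - N/\nu(B) = \sum_{j=1}^N (G_j - 1/\nu(B))$ yields $\mathbb{E}[|T - N/\nu(B)|^p]^{1/p} \le C_p\sqrt{N}$. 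Combining these estimates gives $\mathbb{E}[|(N-1)/(T-1)-\nu(B)|^p]^{1/p}\le C_p/\sqrt{N}$, and multiplying by $|a-b|\le 2\|\varphi\|_\infty$ completes the proof.

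The main obstacle is spotting the algebraic collapse in the first step; a brute-force triangle inequality splitting the two ratios would still work but leads to much more fragile bookkeeping (and would not obviously yield the correct $1/\sqrt{N}$ rate). Once the expression is recognized as $(a-b)(\tfrac{N-1}{T-1}-\nu(B))$, the remainder is essentially a textbook concentration argument for the negative-binomial hitting time $T$, with the hypothesis $\nu(B)\wedge\nu(B^c)\ge c$ entering only to keep the geometric moments of $G_j$, and the lower bound on $T-1$, uniformly controlled in $N$.
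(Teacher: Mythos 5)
Your proof is correct and follows essentially the same route as the paper's: the algebraic collapse to $(a-b)\bigl(\tfrac{N-1}{T-1}-\nu(B)\bigr)$ is exactly the paper's $\zeta(N,T)$ identity (derived there under the normalization $\nu(\varphi)=0$), and the concentration step via the representation of $T$ as a sum of $N$ i.i.d.\ geometric variables plus the Marcinkiewicz--Zygmund inequality is the same. Your version is marginally cleaner in that bounding $|a|,|b|\leq\|\varphi\|_{\infty}$ directly avoids the explicit $1/(\nu(B)(1-\nu(B)))$ prefactor, but this is cosmetic.
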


\begin{proof}
Throughout $C_p$ is a finite positive constant (that only depends upon $p$) whose value may change from line to line.
WLOG we will assume that $\nu(\varphi)=0$, so that $\nu(\mathbb{I}_B\varphi) = -\nu(\mathbb{I}_{B^c}\varphi)$.
Then we have that
\begin{eqnarray*}
\zeta(N,T) & := & \frac{N-1}{T-1}\frac{\nu(\mathbb{I}_B\varphi)}{\nu(B)} + \frac{T-N}{T-1}\frac{\nu(\mathbb{I}_{B^c}\varphi)}{\nu(B^c)}\\
& = & \frac{\nu(\mathbb{I}_B\varphi)N}{\nu(B)(1-\nu(B))}\Big[\frac{N-1 + \nu(B) - T\nu(B)}{N(T-1)}\Big].
\end{eqnarray*}
Now, via Minkowski
\begin{eqnarray}
\mathbb{E}[|\zeta(N,T) |^p]^{1/p} & \leq & \frac{|\nu(\mathbb{I}_B\varphi)|N}{\nu(B)(1-\nu(B))}
\Big\{\mathbb{E}\bigg[\Big(\frac{1}{(T-1)^p}\Big)\Big|1-\frac{T\nu(B)}{N}\Big|^p\bigg]^{1/p} +
\mathbb{E}\bigg[\Big|\frac{\nu(B)-1}{N(T-1)}\Big|^p\bigg]^{1/p}
\Big\}
\label{eq:minkow_app_neg}
\end{eqnarray}
As $1/(T-1)\leq 1/(N-1)$, we will focus on controlling the term
$$
\mathbb{E}\bigg[\Big|1-\frac{T\nu(B)}{N}\Big|^p\bigg]^{1/p}.
$$
If $Y^1,Y^2,\dots$ are independent $\mathcal{G}eo(\nu(B))$ random variables then
$$
\mathbb{E}\bigg[\Big|1-\frac{T\nu(B)}{N}\Big|^p\bigg]^{1/p} = \nu(B)\mathbb{E}\bigg[\Big|\frac{1}{N}\sum_{i=1}^N Y^i - \frac{1}{\nu(B)}\Big|^p\bigg]^{1/p}
$$
and applying an appropriately modified version of the M-Z inequality (e.g.~Chapter 7 of Del Moral (2004)) we have
$$
\mathbb{E}\bigg[\Big|1-\frac{T\nu(B)}{N}\Big|^p\bigg]^{1/p} \leq \frac{C_p}{\sqrt{N}} \Big[\frac{(1-\nu(B)(1-\nu(B) + \nu(B)^2))}{\nu(B)^4}\Big]^{1/p}
$$
where we have used the fourth central moment of a Geometric random variable and $C_p$ is a constant that only depends upon $p$ (that is independent of $\nu$ or $B$).
Returning to \eqref{eq:minkow_app_neg} and noting $\nu(B) \wedge \nu(B^c) \geq c$, we have shown that
$$
\mathbb{E}[|\zeta(N,T) |^p]^{1/p} \leq C\|\varphi\|_{\infty}\frac{N}{N-1}[\frac{C_p}{\sqrt{N}} + \frac{1}{N}]
$$
from which we can easily conclude.
\end{proof}

\section{Technical Results for the CLT}\label{app:tech_clt}

Recall convergence in probability is written $\rightarrow_{\mathbb{P}}$ and $N$ is going to $\infty$. In addition,
that the convention $\Phi_1(\eta_0^{T_0})(\varphi)=\eta_1(\varphi)$ is used and again recall $\mathscr{F}_n$ is the filtration
generated by the particle system up-to time $n$.

\begin{lem}\label{lem:clt_main_random_to_deterministic}
Assume ($M_1$). Then for any $n\geq 1$, $\varphi\in\mathcal{B}_b(\mathsf{E}_n)$ we have:
$$
\sqrt{T_n-1} [\eta_{n}^{T_n} -\Phi_n(\eta_{n-1}^{T_{n-1}}) ](\varphi) -  
\sqrt{(N-1)\eta_n(G_n)} [\eta_{n}^{N-1} -\Phi_n(\eta_{n-1}^{T_{n-1}}) ](\varphi) \rightarrow_{\mathbb{P}} 0.
$$
\end{lem}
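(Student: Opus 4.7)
The plan is to read this as an Anscombe-type statement: conditional on $\mathscr{F}_{n-1}$, the samples $X_n^1,X_n^2,\ldots$ are i.i.d.\ from $\mu_N:=\Phi_n(\eta_{n-1}^{T_{n-1}})$, and $T_n$ is a stopping time for their natural filtration with a parameter that concentrates on a deterministic value. So the first task is to show $(N-1)/(T_n-1)\to_\mathbb{P}\eta_n(G_n)$. Conditionally on $\mathscr{F}_{n-1}$, $T_n-1$ is a negative-binomial waiting time for $N$ successes in a Bernoulli($p_N$) sequence with $p_N=\mu_N(G_n)$, so the strong law for renewals gives $(N-1)/(T_n-1)\to p_N$ almost surely, and Theorem \ref{theo:time_uniform} gives $p_N\to_\mathbb{P}\eta_n(G_n)$.

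Next, write $S_k:=\sum_{i=1}^k[\varphi(X_n^i)-\mu_N(\varphi)]$ so that the displayed difference becomes
\[
\frac{S_{T_n-1}}{\sqrt{T_n-1}}-\sqrt{\eta_n(G_n)}\,\frac{S_{N-1}}{\sqrt{N-1}}
=\underbrace{\left(\sqrt{\tfrac{N-1}{T_n-1}}-\sqrt{\eta_n(G_n)}\right)\!\frac{S_{N-1}}{\sqrt{N-1}}}_{(\mathrm{I})}
+\underbrace{\frac{S_{T_n-1}-S_{N-1}}{\sqrt{T_n-1}}}_{(\mathrm{II})}.
\]
Term $(\mathrm{I})$ vanishes in probability by Slutsky: the bracketed factor tends to zero in probability by Step~1 and continuous mapping, while $S_{N-1}/\sqrt{N-1}$ is $O_\mathbb{P}(1)$ by a conditional Marcinkiewicz-Zygmund bound (giving $\mathbb{E}[(S_{N-1}/\sqrt{N-1})^2\mid\mathscr{F}_{n-1}]\leq C\|\varphi\|_\infty^2$) together with a dominated-convergence argument to remove the conditioning.

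Term $(\mathrm{II})$ is the main obstacle. The strategy is to reduce it to a maximal inequality on the random walk $(S_k)$ evaluated over a window that contains both $N-1$ and $T_n-1$. Conditional on $\mathscr{F}_{n-1}$, by the Kolmogorov / Hajek-Renyi inequality applied to the centered bounded i.i.d.\ increments,
\[
\mathbb{P}\!\left(\max_{k\in[a,b]}|S_k-S_a|>\varepsilon\sqrt{T_n-1}\,\Big|\,\mathscr{F}_{n-1}\right)
\leq \frac{C\|\varphi\|_\infty^2\,(b-a)}{\varepsilon^2(T_n-1)},
\]
applied with $a=(N-1)\wedge(T_n-1)$ and $b=(N-1)\vee(T_n-1)$, so that the bound on the right-hand side is controlled by $(|T_n-N|+1)/(T_n-1)$. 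Concentration of the negative-binomial stopping time then completes the argument; the delicate point is in quantifying this concentration relative to the scaling $\sqrt{T_n-1}$, which requires combining the fourth-moment bound for negative-binomial variables (as exploited in Lemma \ref{lem:tech_lem_neg2}) with the stopping-time version of the maximal inequality to avoid the dependence between $T_n$ and the partial sums $(S_k)$. Once $(\mathrm{II})\to_\mathbb{P} 0$ is established, combining with the Slutsky step for $(\mathrm{I})$ yields the lemma.
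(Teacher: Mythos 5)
Your decomposition into (I) and (II) and the Slutsky treatment of (I) track the paper's own strategy closely: the paper likewise peels off the mismatch between the random normalization $\sqrt{T_n-1}$ and the deterministic one (its Lemma \ref{lem:clt_first_conv}, done in $\mathbb{L}_1$ via H\"older and negative-binomial moments rather than your $O_{\mathbb{P}}(1)$ argument, but to the same effect), and then runs an Anscombe-type maximal-inequality argument following Billingsley for the remaining piece. The problem is term (II), and it is not a delicacy that a sharper stopping-time maximal inequality or a fourth-moment bound can repair. The window $[a,b]$ with $a=(N-1)\wedge(T_n-1)$ and $b=(N-1)\vee(T_n-1)$ has length $|T_n-N|\approx (N-1)(1/\eta_n(G_n)-1)$, which is of the \emph{same order} as $T_n-1\approx (N-1)/\eta_n(G_n)$. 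Your bound $C\|\varphi\|_\infty^2(b-a)/(\varepsilon^2(T_n-1))$ therefore tends to $C\|\varphi\|_\infty^2(1-\eta_n(G_n))/\varepsilon^2$, a non-vanishing constant, and concentration of $T_n$ cannot help because $T_n-1$ concentrates around $(N-1)/\eta_n(G_n)$, which is macroscopically far from $N-1$. Put differently, $S_{T_n-1}-S_{N-1}$ is (conditionally) a sum of roughly $(N-1)(1/\eta_n(G_n)-1)$ centered i.i.d.\ bounded terms, hence of exact order $\sqrt{N}$, so $(\mathrm{II})$ has a non-degenerate limiting variance. The genuine Anscombe statement compares $S_{T_n-1}$ with $S_{c_N}$ for the \emph{deterministic equivalent} $c_N\approx (N-1)/\eta_n(G_n)$ of the random index, not with $S_{N-1}$.

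You should also know that this is precisely the step at which the paper's own proof is fragile: there, the quantity $\max_{|k-(N-1)/\eta_n(G_n)|\leq\varepsilon^3(N-1)}|S_k-S_{N-1}|$ is bounded by $2\max_{1\leq k\leq\varepsilon^3(N-1)}|S_k|$, which would require $N-1$ to lie within $\varepsilon^3(N-1)$ of the window centred at $(N-1)/\eta_n(G_n)$; it does not. A sanity check at $n=1$ with $\varphi=G_1$ makes the obstruction concrete: $\eta_1^{T_1}(G_1)=(N-1)/(T_1-1)$ exactly, and a direct negative-binomial computation gives $\sqrt{T_1-1}\,[\eta_1^{T_1}-\eta_1](G_1)\Rightarrow\mathcal{N}\big(0,\eta_1(G_1)(1-\eta_1(G_1))\big)$, whereas $\sqrt{(N-1)\eta_1(G_1)}\,[\eta_1^{N-1}-\eta_1](G_1)\Rightarrow\mathcal{N}\big(0,\eta_1(G_1)^2(1-\eta_1(G_1))\big)$; two sequences with different non-zero limiting variances cannot differ by $o_{\mathbb{P}}(1)$. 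So your instinct that (II) is ``the main obstacle'' is correct, but the obstacle is structural rather than technical: as stated, the comparison with $\eta_n^{N-1}$ at scale $\sqrt{(N-1)\eta_n(G_n)}$ cannot be established by this route, and the statement would need to be reformulated (e.g.\ comparing with the empirical measure of the first $\lfloor(N-1)/\eta_n(G_n)\rfloor$ particles under the matching normalization) before a maximal-inequality argument of this type can close it.
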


\begin{proof}
We give the proof for any $n\geq 2$; the case $n=1$ follows a similar proof with only notational modifications.
Throughout the proof $0<C<\infty$ is a deterministic constant independent of $n$ and $N$ whose value may change from line to line.
Our proof follows a similar construction to that found in pp.~369 of Billingsley (1995). To that end, we have
$$
\sqrt{T_n-1} [\eta_{n}^{T_n} -\Phi_n(\eta_{n-1}^{T_{n-1}}) ](\varphi) -  
\sqrt{(N-1)\eta_n(G_n)} [\eta_{n}^{N-1} -\Phi_n(\eta_{n-1}^{T_{n-1}}) ](\varphi) = 
$$
\begin{equation}
\sqrt{T_n-1} [\eta_{n}^{T_n} -\Phi_n(\eta_{n-1}^{T_{n-1}}) ](\varphi)
- (T_n-1)\sqrt{\frac{\eta_n(G_n)}{(N-1)}}
[\eta_{n}^{T_n} -\Phi_n(\eta_{n-1}^{T_{n-1}}) ](\varphi)
+
\label{eq:main_tech_clt_lem1}
\end{equation}
\begin{equation}
(T_n-1)\sqrt{\frac{\eta_n(G_n)}{(N-1)}}
[\eta_{n}^{T_n} -\Phi_n(\eta_{n-1}^{T_{n-1}}) ](\varphi) - 
\sqrt{(N-1)\eta_n(G_n)} [\eta_{n}^{N-1} -\Phi_n(\eta_{n-1}^{T_{n-1}}) ](\varphi).
\label{eq:main_tech_clt_lem2}
\end{equation}
In Lemma \ref{lem:clt_first_conv}, we have shown that \eqref{eq:main_tech_clt_lem1} converges in probability to zero; hence, we focus upon
\eqref{eq:main_tech_clt_lem2}.

To shorten the subsequent notations, we set
\begin{eqnarray*}
S_n^{T_n}(\varphi) & = & (T_n-1)[\eta_{n}^{T_n} -\Phi_n(\eta_{n-1}^{T_{n-1}}) ](\varphi)\\
S_n^{N-1}(\varphi) & = & (N-1)[\eta_{n}^{N-1} -\Phi_n(\eta_{n-1}^{T_{n-1}}) ](\varphi).
\end{eqnarray*}
Then let $1>\varepsilon>0$ be given, and consider:
$$
\mathbb{P}\bigg(\Big|S_n^{T_n}(\varphi)-S_n^{N-1}(\varphi)\Big|\geq \varepsilon\sqrt{\frac{N-1}{\eta_n(G_n)}}\bigg)
$$ 
$$
\leq
\mathbb{P}\bigg(\Big|T_n-1-\frac{N-1}{\eta_n(G_n)}\Big|\geq \varepsilon^3(N-1)\bigg) + 
\mathbb{P}\bigg(\max_{|k-\frac{N-1}{\eta_n(G_n)}|\leq\varepsilon^3(N-1)}\Big|S_n^{k}(\varphi)-S_n^{N-1}(\varphi)\Big|\geq \varepsilon(N-1)^{1/2}\bigg)
$$
$$
\leq \mathbb{P}\bigg(\Big|T_n-1-\frac{N-1}{\eta_n(G_n)}\Big|\geq \varepsilon^3(N-1)\bigg) + 
2\mathbb{P}\Big(\max_{1\leq k\leq\varepsilon^3(N-1)}|S_n^{k}(\varphi)|\geq\varepsilon(N-1)^{1/2}\Big).
$$
Now for the latter probability, one can condition upon $\mathscr{F}_{n-1}$ and apply Kolmogorov's inequality noting that the conditional
variance of $\varphi(x) - \Phi_n(\eta_{n-1}^{T_{n-1}})(\varphi)$ is deterministically upper-bounded by $\textrm{Osc}(\varphi)^2$. Hence, we have that 
$$
\mathbb{P}\bigg(\Big|S_n^{T_n}(\varphi)-S_n^{N-1}(\varphi)\Big|\geq \varepsilon\sqrt{\frac{N-1}{\eta_n(G_n)}}\bigg)
$$
$$
\leq 
\mathbb{P}\bigg(\Big|T_n-1-\frac{N-1}{\eta_n(G_n)}\Big|\geq \varepsilon^3(N-1)\bigg) + 
2C\varepsilon.
$$
Noting Lemma \ref{lem:conv_t_n} and that we can make $\varepsilon$ arbitrarily small, the proof is completed.
\end{proof}

\begin{lem}\label{lem:clt_first_conv}
Assume ($M_1$). Then for any $n\geq 1$, $\varphi\in\mathcal{B}_b(\mathsf{E}_n)$ we have:
$$
\sqrt{T_n-1} [\eta_{n}^{T_n} -\Phi_n(\eta_{n-1}^{T_{n-1}}) ](\varphi)
- (T_n-1)\sqrt{\frac{\eta_n(G_n)}{(N-1)}}
[\eta_{n}^{T_n} -\Phi_n(\eta_{n-1}^{T_{n-1}}) ](\varphi)
\rightarrow_{\mathbb{P}} 0.
$$
\end{lem}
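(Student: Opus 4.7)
The plan is to factor the expression as a product of two random quantities, then show the first is bounded in probability while the second converges to zero in probability, and conclude by Slutsky's theorem. Specifically, observe the algebraic identity
$$
\sqrt{T_n-1}\,[\eta_n^{T_n} - \Phi_n(\eta_{n-1}^{T_{n-1}})](\varphi) - (T_n-1)\sqrt{\frac{\eta_n(G_n)}{N-1}}\,[\eta_n^{T_n} - \Phi_n(\eta_{n-1}^{T_{n-1}})](\varphi) = \sqrt{T_n-1}\,[\eta_n^{T_n} - \Phi_n(\eta_{n-1}^{T_{n-1}})](\varphi)\left(1 - \sqrt{\frac{(T_n-1)\eta_n(G_n)}{N-1}}\right).
$$

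For the second factor, Lemma \ref{lem:conv_t_n} (as used in the proof of Lemma \ref{lem:clt_main_random_to_deterministic}) gives that $|T_n - 1 - (N-1)/\eta_n(G_n)|/(N-1) \to 0$ in probability as $N \to \infty$. Dividing through by the positive constant $1/\eta_n(G_n)$ yields $(T_n-1)\eta_n(G_n)/(N-1) \to_{\mathbb{P}} 1$, and the continuous mapping theorem applied to $x \mapsto \sqrt{x}$ then gives $1 - \sqrt{(T_n-1)\eta_n(G_n)/(N-1)} \to_{\mathbb{P}} 0$.

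For the first factor, I would rewrite it as
$$
\sqrt{T_n-1}\,[\eta_n^{T_n} - \Phi_n(\eta_{n-1}^{T_{n-1}})](\varphi) = \sqrt{\frac{T_n-1}{N-1}}\cdot\sqrt{N-1}\,[\eta_n^{T_n} - \Phi_n(\eta_{n-1}^{T_{n-1}})](\varphi).
$$
By Lemma \ref{lem:conv_t_n}, the first factor converges in probability to $1/\sqrt{\eta_n(G_n)}$. By Corollary \ref{cor:cond_lp} with $p=2$, taking outer expectation of the conditional bound gives $\mathbb{E}[|\sqrt{N-1}\,[\eta_n^{T_n} - \Phi_n(\eta_{n-1}^{T_{n-1}})](\varphi)|^2] \leq C_2^2\|\varphi\|_\infty^2$, so the second factor is uniformly bounded in $\mathbb{L}_2$, hence tight. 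Slutsky then guarantees that the product is bounded in probability.

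A final application of Slutsky's theorem (a tight sequence multiplied by a sequence converging to zero in probability converges to zero in probability) closes the argument; the case $n=1$ follows identically under the convention $\Phi_1(\eta_0^{T_0})(\varphi) = \eta_1(\varphi)$. The only genuinely delicate point is justifying that Lemma \ref{lem:conv_t_n} indeed yields the ratio convergence $(T_n-1)\eta_n(G_n)/(N-1) \to_{\mathbb{P}} 1$ (rather than merely an additive concentration bound); this is a routine rescaling but must be stated explicitly since both of our factors depend on it.
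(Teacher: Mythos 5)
Your argument is correct, and it starts from the same algebraic factorization as the paper's proof (your identity is exactly the paper's $A(N)=[\eta_{n}^{T_n}-\Phi_n(\eta_{n-1}^{T_{n-1}})](\varphi)\cdot\frac{T_n-1}{\sqrt{N-1}}\sqrt{\eta_n(G_n)}\cdot B(N)$ with the square roots regrouped), and it leans on the same two ingredients, Lemma \ref{lem:conv_t_n} and Corollary \ref{cor:cond_lp}. Where you genuinely diverge is in the mode of convergence: the paper proves the stronger statement that $A(N)\rightarrow 0$ in $\mathbb{L}_1$, which forces it through a H\"older inequality with exponents $3/2$ and $3$, a third-moment bound on $(T_n-1)/\sqrt{N-1}$ via raw moments of the negative binomial distribution, and a uniform-integrability argument for $B(N)$ requiring $\sup_N\mathbb{E}[|B(N)|^4]<\infty$. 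You instead prove only convergence in probability by writing the expression as $O_{\mathbb{P}}(1)\cdot o_{\mathbb{P}}(1)$: tightness of $\sqrt{N-1}\,[\eta_n^{T_n}-\Phi_n(\eta_{n-1}^{T_{n-1}})](\varphi)$ from the $p=2$ case of Corollary \ref{cor:cond_lp} (after taking outer expectation and applying Chebyshev), combined with $\sqrt{(T_n-1)/(N-1)}\rightarrow_{\mathbb{P}}1/\sqrt{\eta_n(G_n)}$ and $1-\sqrt{(T_n-1)\eta_n(G_n)/(N-1)}\rightarrow_{\mathbb{P}}0$ from Lemma \ref{lem:conv_t_n} plus continuous mapping. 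Since the lemma only asserts convergence in probability, and the downstream use in Lemma \ref{lem:clt_main_random_to_deterministic} only invokes convergence in probability, your weaker conclusion suffices and your route is shorter and more elementary; the paper's approach buys the $\mathbb{L}_1$ statement, which is not needed elsewhere. Your one flagged delicate point, upgrading the additive concentration of Lemma \ref{lem:conv_t_n} to the multiplicative limit $(T_n-1)\eta_n(G_n)/(N-1)\rightarrow_{\mathbb{P}}1$, is indeed just the rescaling $(T_n-1)/(N-1)=(T_n/N)(N/(N-1))-1/(N-1)$ together with the fact that $\eta_n(G_n)$ is bounded below under ($M_1$), so no gap remains.
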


\begin{proof}
We give the proof for any $n\geq 2$; the case $n=1$ follows a similar proof with only notational modifications.
Throughout the proof $0<C<\infty$ is a deterministic constant independent of $n$ and $N$ whose value may change from line to line.
We will prove that 
$$
\sqrt{T_n-1} [\eta_{n}^{T_n} -\Phi_n(\eta_{n-1}^{T_{n-1}}) ](\varphi)
- (T_n-1)\sqrt{\frac{\eta_n(G_n)}{(N-1)}}
[\eta_{n}^{T_n} -\Phi_n(\eta_{n-1}^{T_{n-1}}) ](\varphi)
$$
 will go to zero in $\mathbb{L}_1$. To that end, we rewrite this expression as
$$
A(N) := [\eta_{n}^{T_n} -\Phi_n(\eta_{n-1}^{T_{n-1}}) ](\varphi) \bigg[\frac{T_n-1}{\sqrt{N-1}}\sqrt{\eta_n(G_n)}\bigg] \bigg[\sqrt{\frac{(N-1)}{(T_n-1)\eta_n(G_n)}}-1\bigg].
$$
To simplify the subsequent notations, we define
$$
B(N) := \bigg[\sqrt{\frac{(N-1)}{(T_n-1)\eta_n(G_n)}}-1\bigg].
$$
Then a simple application of H\"older's inequality gives
$$
\mathbb{E}[|A(N)|] \leq \mathbb{E}\bigg[\Big|[\eta_{n}^{T_n} -\Phi_n(\eta_{n-1}^{T_{n-1}}) ](\varphi) \bigg[\frac{T_n-1}{\sqrt{N-1}}\sqrt{\eta_n(G_n)}\bigg]\Big|^{3/2}\bigg]^{2/3}\mathbb{E}[|B(N)|^3]^{1/3}.
$$
We will show that:
\begin{enumerate}
\item{$\mathbb{E}\bigg[\Big|[\eta_{n}^{T_n} -\Phi_n(\eta_{n-1}^{T_{n-1}}) ](\varphi) \bigg[\frac{T_n-1}{\sqrt{N-1}}\sqrt{\eta_n(G_n)}\bigg]\Big|^{3/2}\bigg]^{2/3}$ is upper-bounded
by a finite deterministic constant $C$ that is independent of $N$.}
\item{$\lim_{N\rightarrow\infty}\mathbb{E}[|B(N)|^3]^{1/3}=0$.}
\end{enumerate}
This will conclude the proof.

\textbf{Proof of 1}.  We have, by another application of H\"older, that 
$$
\mathbb{E}\bigg[\Big|[\eta_{n}^{T_n} -\Phi_n(\eta_{n-1}^{T_{n-1}}) ](\varphi) \bigg[\frac{T_n-1}{\sqrt{N-1}}\sqrt{\eta_n(G_n)}\bigg]\Big|^{3/2}\bigg]^{2/3} \leq
\mathbb{E}[|[\eta_{n}^{T_n} -\Phi_n(\eta_{n-1}^{T_{n-1}}) ](\varphi)|^{3}]^{1/3}\mathbb{E}\bigg[\Big|\bigg[\frac{T_n-1}{\sqrt{N-1}}\sqrt{\eta_n(G_n)}\Big|^{3}\bigg]^{1/3}
$$
Application of Corollary \ref{cor:cond_lp} gives
\begin{equation}
\mathbb{E}\bigg[\Big|[\eta_{n}^{T_n} -\Phi_n(\eta_{n-1}^{T_{n-1}}) ](\varphi) \bigg[\frac{T_n-1}{\sqrt{N-1}}\sqrt{\eta_n(G_n)}\bigg]\Big|^{3/2}\bigg]^{2/3} \leq
\frac{C_3\|\varphi\|_{\infty}}{\sqrt{N-1}}\mathbb{E}\bigg[\Big|\bigg[\frac{T_n-1}{\sqrt{N-1}}\sqrt{\eta_n(G_n)}\Big|^{3}\bigg]^{1/3}.
\label{eq:tech_lem_clt_st_det1}
\end{equation}
Now turning to the expectation on the R.H.S.~of the inequality, we have
$$
\mathbb{E}\bigg[\Big|\bigg[\frac{T_n-1}{\sqrt{N-1}}\sqrt{\eta_n(G_n)}\Big|^{3}\bigg] = 
\frac{\eta_n(G_n)^{3/2} }{(N-1)^{3/2}}\mathbb{E}[T_n^3 - 3 T_n^2 + 3T_n -1]
$$
Using the fact that, via ($M_1$) 
\begin{equation}
\Phi_n(\eta_n^{T_{n-1}})(\mathsf{B}_{n})\wedge \Phi_n(\eta_n^{T_{n-1}})(\mathsf{B}_{n}^c)\geq c\label{eq:prob_cont}
\end{equation}
for deterministic $c$
and standard properties on raw moments of negative binomial random variables, it follows that 
$$
\mathbb{E}[T_n^3 - 3 T_n^2 + 3T_n -1] \leq C N^3.
$$
Thus, we can show that
$$
\mathbb{E}\bigg[\Big|\bigg[\frac{T_n-1}{\sqrt{N-1}}\sqrt{\eta_n(G_n)}\Big|^{3}\bigg]^{1/3} \leq C\eta_n(G_n)^{1/2}N^{1/2}.
$$
Returning to \eqref{eq:tech_lem_clt_st_det1}, we have shown that 
$$
\mathbb{E}\bigg[\Big|[\eta_{n}^{T_n} -\Phi_n(\eta_{n-1}^{T_{n-1}}) ](\varphi) \bigg[\frac{T_n-1}{\sqrt{N-1}}\sqrt{\eta_n(G_n)}\bigg]\Big|^{3/2}\bigg]^{2/3} \leq
C\eta_n(G_n)^{1/2}
$$
which completes the proof of 1.

\textbf{Proof of 2}. By Lemma \ref{lem:conv_t_n} and the continuous mapping theorem, we have that $B(N)\rightarrow_{\mathbb{P}} 0$. Thus if we can show that for some $\delta>0$,
$\sup_{N\geq 1}\mathbb{E}[|B(N)|^{3(1+\delta)}] <+\infty$ this will allow us to conclude. For simplicity of calculation, we set $\delta = 1/3$. Then, using the fact that $1/(T_n-1)\leq 1/(N-1)$ we have
$$
\mathbb{E}[|B(N)|^4] \leq \mathbb{E}\bigg[\Big|1-\sqrt{\frac{\eta_n(G_n))(T_n-1)}{N-1}}\Big|^4\bigg].
$$
On expanding the brackets and removing the negative terms, the expectation on the R.H.S. is upper-bounded by
$$
1 + \frac{6\eta_n(G_n)}{N-1}\mathbb{E}[T_n-1] + \frac{\eta_n(G_n)^2}{(N-1)^2}\mathbb{E}[T_n^2 -2T_n +1].
$$
Using the conditional negative binomial property of $T_n$ this expression is equal to
$$
1 + \frac{6\eta_n(G_n)}{N-1} \Big\{\mathbb{E}\Big[\frac{N}{\Phi_n(\eta_{n-1}^{T_{n-1}})(G_n)}\Big] -1\Big\} + \frac{\eta_n(G_n)^2}{(N-1)^2}
\Big\{\mathbb{E}\Big[\frac{N(1-\Phi_n(\eta_{n-1}^{T_{n-1}})(G_n))}{\Phi_n(\eta_{n-1}^{T_{n-1}})(G_n)^2}  + \frac{N^2}{\Phi_n(\eta_{n-1}^{T_{n-1}})(G_n)^2} - \frac{2N}{\Phi_n(\eta_{n-1}^{T_{n-1}})(G_n)}+1\Big]
\Big\}.
$$
Applying \eqref{eq:prob_cont} we easily show that this latter expression is, uniformly in $N$, upper-bounded by a constant $C$. That is, we have shown
that $\sup_{N\geq 1}\mathbb{E}[|B(N)|^{3(1+\delta)}] <+\infty$, which completes the proof of 2. This completes the proof.
\end{proof}

\begin{lem}\label{lem:conv_t_n}
Assume ($M_1$). Then for any $n\geq 1$, we have:
$$
\frac{T_n}{N} -\frac{1}{\eta_n(G_n)} \rightarrow_{\mathbb{P}} 0.
$$
\end{lem}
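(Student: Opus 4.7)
My plan is to exploit the conditional negative binomial structure of $T_n$ given $\mathscr{F}_{n-1}$. Conditional on $\mathscr{F}_{n-1}$, the particles $x_n^1, x_n^2, \ldots$ are i.i.d.\ from $\Phi_n(\eta_{n-1}^{T_{n-1}})$, so $T_n$ is the waiting time to the $N$-th ``success'' (i.e.\ falling in $\mathsf{B}_n$) in i.i.d.\ Bernoulli trials with success probability
$$
p_N := \Phi_n(\eta_{n-1}^{T_{n-1}})(G_n).
$$
Thus $T_n \mid \mathscr{F}_{n-1}$ is negative binomial with parameters $(N,p_N)$, giving conditional mean $N/p_N$ and conditional variance $N(1-p_N)/p_N^2$. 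The case $n=1$ is a special case with $p_N = \eta_1(G_1)$ deterministic, where a direct Chebyshev estimate suffices.

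For $n\geq 2$ I would split the difference as
$$
\frac{T_n}{N} - \frac{1}{\eta_n(G_n)} = \left(\frac{T_n}{N} - \frac{1}{p_N}\right) + \left(\frac{1}{p_N} - \frac{1}{\eta_n(G_n)}\right)
$$
and show each piece tends to $0$ in probability. For the first piece, by the conditional variance computation and a key consequence of ($M_1$)---namely that $M_n(\mathsf{B}_n)(x)\geq c$ uniformly forces $p_N \geq c$ almost surely---one has
$$
\mathbb{E}\left[\left(\frac{T_n}{N} - \frac{1}{p_N}\right)^{\!2} \bigg| \mathscr{F}_{n-1}\right] = \frac{1-p_N}{Np_N^2} \leq \frac{1}{Nc^2},
$$
so Chebyshev (after taking the unconditional expectation) yields convergence in probability.

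For the second piece, since $\eta_n(G_n) \geq c$ and $p_N \geq c$ a.s., it suffices to show $p_N - \eta_n(G_n) \to_{\mathbb{P}} 0$. Writing
$$
p_N - \eta_n(G_n) = \frac{\eta_{n-1}^{T_{n-1}}(G_{n-1}M_n(G_n))}{\eta_{n-1}^{T_{n-1}}(G_{n-1})} - \frac{\eta_{n-1}(G_{n-1}M_n(G_n))}{\eta_{n-1}(G_{n-1})},
$$
and observing that both $G_{n-1}M_n(G_n)$ and $G_{n-1}$ lie in $\mathcal{B}_b(\mathsf{E}_{n-1})$, Theorem~\ref{theo:time_uniform} gives $\eta_{n-1}^{T_{n-1}}(\varphi) \to \eta_{n-1}(\varphi)$ in $\mathbb{L}_p$ for each, hence in probability. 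Combined with the uniform lower bound $\eta_{n-1}(G_{n-1}) \geq c$ (from ($M_1$)), a Slutsky/continuous-mapping argument concludes.

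The only real obstacle is the handling of the random denominator $p_N$: one must be sure that the conditional Chebyshev bound is uniform in the random probability and that the ratio $\eta_{n-1}^{T_{n-1}}(G_{n-1}M_n(G_n))/\eta_{n-1}^{T_{n-1}}(G_{n-1})$ does not blow up. Both issues are resolved by the uniform lower bound $c>0$ provided by ($M_1$), which is precisely why this assumption is imposed.
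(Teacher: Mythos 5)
Your proposal is correct and follows essentially the same route as the paper: condition on $\mathscr{F}_{n-1}$, exploit the negative binomial law of $T_n$ to get an $\mathcal{O}(1/N)$ second-moment bound on $T_n/N - 1/\Phi_n(\eta_{n-1}^{T_{n-1}})(G_n)$ (the paper does this via the Marcinkiewicz--Zygmund inequality applied to the geometric summands, you via the conditional variance directly, which is equivalent), and then pass from $\Phi_n(\eta_{n-1}^{T_{n-1}})(G_n)$ to $\eta_n(G_n)$ using Theorem \ref{theo:time_uniform} and the lower bound $c$ from ($M_1$). Your explicit ratio argument for the convergence of $\Phi_n(\eta_{n-1}^{T_{n-1}})(G_n)$ is a slightly more detailed rendering of a step the paper simply asserts.
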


\begin{proof}
We give the proof for any $n\geq 2$; the case $n=1$ follows a similar proof with only notational modifications.
In Theorem \ref{theo:time_uniform}, we have proved that $\Phi_n(\eta_{n-1}^{T_{n-1}})(\varphi)$ converges almost surely to 
$\eta_n(\varphi)$ for $\varphi\in\mathcal{B}_b(\mathsf{E})$. Thus we consider
$$
\mathbb{E}\bigg[\Big(\frac{T_n}{N} - \frac{1}{\Phi_n(\eta_{n-1}^{T_{n-1}})(G_n)} \Big)^2\bigg].
$$
Now conditionally upon $\mathscr{F}_{n-1}$, $T_n$ is a negative binomial random variable with success probability $\Phi_n(\eta_{n-1}^{T_{n-1}})(G_n)$, so writing
$X_{i,n}(N)$ as (conditionally) independent geometric random variables with the same success probability, we have
$$
\mathbb{E}\bigg[\Big(\frac{T_n}{N} - \frac{1}{\Phi_n(\eta_{n-1}^{T_{n-1}})(G_n)} \Big)^2\bigg] = 
\mathbb{E}\bigg[\mathbb{E}\bigg[\Big(\frac{1}{N}\sum_{i=1}^N [X_{i,n}(N) - \frac{1}{\Phi_n(\eta_{n-1}^{T_{n-1}})(G_n)}] \Big)^2\bigg|\mathscr{F}_{n-1}\bigg]\bigg].
$$
Applying the conditional version of the M-Z inequality on the R.H.S.~of the inequality, we have the upper-bound:
$$
\frac{C}{N}\mathbb{E}\bigg[\frac{1-\Phi_n(\eta_{n-1}^{T_{n-1}})(G_n)}{\Phi_n(\eta_{n-1}^{T_{n-1}})(G_n)}\bigg]
$$
recalling that $\Phi_n(\eta_{n-1}^{T_{n-1}})(G_n)\geq c$ for some deterministic constant $c$ we conclude that
$$
\mathbb{E}\bigg[\Big(\frac{T_n}{N} - \frac{1}{\Phi_n(\eta_{n-1}^{T_{n-1}})(G_n)} \Big)^2\bigg] \leq \frac{C}{N}.
$$
The proof is completed on recalling that $\Phi_n(\eta_{n-1}^{T_{n-1}})(G_n)$ converges almost surely to 
$\eta_n(G_n)$.
\end{proof}

\section{Technical Results for the Normalizing Constant}\label{app:tech_nc}

\begin{lem}\label{lem:tech_res}
We have for any $n\geq 1$, $N\geq 2$ and $\varphi\in\mathbb{B}_b(\mathsf{E}_n)$, that
$$
\mathbb{E}[\eta_n^{T_n}(\varphi)|\mathscr{F}_{n-1}] = \Phi_n(\eta_{n-1}^{T_{n-1}})(\varphi)
$$
where $\Phi_1(\eta_{-1}^{T_{-1}})(\varphi)=M_1(\varphi)$.
\end{lem}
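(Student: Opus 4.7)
The plan is to exploit the conditional i.i.d.\ structure given $\mathscr{F}_{n-1}$ and then compute a single expectation of a ratio involving the negative binomial stopping time $T_n$.

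Conditional on $\mathscr{F}_{n-1}$, the samples $x_n^1, x_n^2, \dots$ are i.i.d.\ from $\nu := \Phi_n(\eta_{n-1}^{T_{n-1}})$ (for $n=1$, $\nu = M_1(x_0,\cdot)$), and $T_n$ is the hitting time at which the $N$-th sample with $G_n(x_n^{T_n})=1$ is generated. Write $p := \nu(G_n) = \nu(\mathsf{B}_n)$. By construction the first $T_n - 1$ samples contain exactly $N-1$ ``successes'' in $\mathsf{B}_n$ and $T_n - N$ ``failures'' in $\mathsf{B}_n^c$; conditional on $\mathscr{F}_{n-1}$ and $T_n$, the successes are i.i.d.\ $\nu(\cdot\mid\mathsf{B}_n)$ and, independently, the failures are i.i.d.\ $\nu(\cdot\mid\mathsf{B}_n^c)$.

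Splitting $\varphi = \mathbb{I}_{\mathsf{B}_n}\varphi + \mathbb{I}_{\mathsf{B}_n^c}\varphi$ and taking expectation given $\mathscr{F}_{n-1}$ and $T_n$, this yields
\begin{equation*}
\mathbb{E}\Big[\eta_n^{T_n}(\varphi)\,\Big|\,\mathscr{F}_{n-1}, T_n\Big] \;=\; \frac{N-1}{T_n-1}\,\frac{\nu(\mathbb{I}_{\mathsf{B}_n}\varphi)}{p} \;+\; \frac{T_n-N}{T_n-1}\,\frac{\nu(\mathbb{I}_{\mathsf{B}_n^c}\varphi)}{1-p}.
\end{equation*}
It then suffices to compute $\mathbb{E}[(N-1)/(T_n-1)\mid\mathscr{F}_{n-1}]$; the companion expectation $\mathbb{E}[(T_n-N)/(T_n-1)\mid\mathscr{F}_{n-1}]$ follows from $\tfrac{T_n-N}{T_n-1}=1-\tfrac{N-1}{T_n-1}$.

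The key computation is this expectation, and it is the only nontrivial step. Conditional on $\mathscr{F}_{n-1}$, $T_n$ is negative binomial with $\mathbb{P}(T_n=k\mid\mathscr{F}_{n-1}) = \binom{k-1}{N-1}p^N(1-p)^{k-N}$ for $k\geq N$. Using the combinatorial identity $\frac{N-1}{k-1}\binom{k-1}{N-1} = \binom{k-2}{N-2}$, one rewrites
\begin{equation*}
\mathbb{E}\!\left[\frac{N-1}{T_n-1}\,\Big|\,\mathscr{F}_{n-1}\right] \;=\; p \sum_{k=N}^\infty \binom{k-2}{N-2} p^{N-1}(1-p)^{k-N} \;=\; p,
\end{equation*}
since the sum is the total mass of a (shifted) negative binomial with $N-1$ successes. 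Plugging back gives the coefficients $p$ and $1-p$ in front of the two conditional means, which recombine as $\nu(\mathbb{I}_{\mathsf{B}_n}\varphi) + \nu(\mathbb{I}_{\mathsf{B}_n^c}\varphi) = \nu(\varphi) = \Phi_n(\eta_{n-1}^{T_{n-1}})(\varphi)$, by the tower property. The case $n=1$ is identical with $\nu = M_1(x_0,\cdot)$ and no conditioning needed beyond the trivial $\sigma$-algebra. The only ``obstacle'' is recognising the combinatorial identity above; everything else is routine bookkeeping around the alive-filter sampling rule.
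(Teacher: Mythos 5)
Your proof is correct and follows essentially the same route as the paper: condition on $\mathscr{F}_{n-1}$ and $T_n$, split $\varphi$ over $\mathsf{B}_n$ and $\mathsf{B}_n^c$ with weights $(N-1)/(T_n-1)$ and $(T_n-N)/(T_n-1)$, and reduce everything to the inverse-moment identity $\mathbb{E}[(N-1)/(T_n-1)\mid\mathscr{F}_{n-1}]=\Phi_n(\eta_{n-1}^{T_{n-1}})(\mathsf{B}_n)$. The only difference is that you derive that identity directly via $\frac{N-1}{k-1}\binom{k-1}{N-1}=\binom{k-2}{N-2}$, whereas the paper cites Neuts \& Zacks (1967) and Zacks (1980) for it; your self-contained computation is a valid (and arguably preferable) substitute.
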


\begin{proof}
We have, for any $n\geq 1$, $N\geq 2$ that $T_n|\mathscr{F}_{n-1}$ is a Negative Binomial random variable with parameters $N-1$ and success probability 
$\Phi_n(\eta_{n-1}^{T_{n-1}})(\mathsf{B}_{n})=\Phi_n(\eta_{n-1}^{T_{n-1}})(G_{n})$ and note that from Neuts \& Zacks (1967) and Zacks (1980)
\begin{equation}
\mathbb{E}\Big[\frac{N-1}{T_{n}-1}\Big|\mathscr{F}_{n-1}\Big] = \Phi_n(\eta_{n-1}^{T_{n-1}})(\mathsf{B}_{n})\label{eq:neg_binom_ineq}.
\end{equation}

Now, 
\begin{eqnarray*}
\mathbb{E}[\eta_n^{T_n}(\varphi)|\mathscr{F}_{n-1}] & = & 
\mathbb{E}\Big[\frac{1}{T_n-1}\sum_{i=1}^{T_n-1} \varphi(X_n^i)\Big|\mathscr{F}_{n-1}\Big]\\
& = & \mathbb{E}\Big[\Big(\frac{1}{T_n-1}\Big)\Big\{(N-1)\frac{\Phi_n(\eta_{n-1}^{T_{n-1}})(\varphi\mathbb{I}_{\mathsf{B}_{n}})}{\Phi_n(\eta_{n-1}^{T_{n-1}})(\mathsf{B}_{n})}
+
(T_n-N)\frac{\Phi_n(\eta_{n-1}^{T_{n-1}})(\varphi\mathbb{I}_{\mathsf{B}_{n}^c})}{\Phi_n(\eta_{n-1}^{T_{n-1}})(\mathsf{B}_{n}^c)}
\Big\}\Big|\mathscr{F}_{n-1}\Big]\\
& = & 
\mathbb{E}\Big[\frac{N-1}{T_n-1}\frac{\Phi_n(\eta_{n-1}^{T_{n-1}})(\varphi\mathbb{I}_{\mathsf{B}_{n}})}{\Phi_n(\eta_{n-1}^{T_{n-1}})(\mathsf{B}_{n})} +
\Big(1-\frac{N-1}{T_n-1}\Big)\frac{\Phi_n(\eta_{n-1}^{T_{n-1}})(\varphi\mathbb{I}_{\mathsf{B}_{n}^c})}{\Phi_n(\eta_{n-1}^{T_{n-1}})(\mathsf{B}_{n}^c)}
\Big\}\Big|\mathscr{F}_{n-1}\Big]
\end{eqnarray*}
where we have used the fact that there are $N-1$ particles that are `alive' and $T_n-N$ that will die and used the conditional distribution of the samples given $T_n$. Now by \eqref{eq:neg_binom_ineq}, it follows then that
$$
\mathbb{E}[\eta_n^{T_n}(\varphi)|\mathscr{F}_{n-1}] = 
\Phi_n(\eta_{n-1}^{T_{n-1}})(\varphi\mathbb{I}_{\mathsf{B}_{n}}) +
\Phi_n(\eta_{n-1}^{T_{n-1}})(\varphi\mathbb{I}_{\mathsf{B}_{n}^c}) = 
\Phi_n(\eta_{n-1}^{T_{n-1}})(\varphi)
$$
which concludes the proof.
\end{proof}

\begin{lem}\label{lem:tech_lem}
We have for any $n\geq 2$, $N\geq 3$, $\varphi\in\mathbb{B}_b(\mathsf{E}_n^2)$:
$$
\mathbb{E}[(\eta_n^{T_n})^{\odot 2}(\varphi)|\mathscr{F}_{n-1}] = \Phi_n(\eta_{n-1}^{T_{n-1}})^{\otimes 2}(\varphi).
$$
\end{lem}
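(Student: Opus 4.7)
The plan is to adapt the conditioning-and-averaging scheme of Lemma \ref{lem:tech_res} to pairs, which requires upgrading the single-particle negative binomial identity \eqref{eq:neg_binom_ineq} to three pair-level analogues. Writing $\mu = \Phi_n(\eta_{n-1}^{T_{n-1}})$ and $p = \mu(\mathsf{B}_n)$, I would first condition on $(T_n,\mathscr{F}_{n-1})$. Given $T_n = t$, the first $t-1$ samples $X_n^1,\dots,X_n^{t-1}$ contain exactly $N-1$ ``alive'' particles (in $\mathsf{B}_n$) and $t-N$ ``dead'' ones (in $\mathsf{B}_n^c$), with the positions of the alive particles uniformly distributed over the $\binom{t-1}{N-1}$ subsets of $\{1,\dots,t-1\}$; conditional on those positions, the alive particles are i.i.d.~from $\mu(\cdot\cap\mathsf{B}_n)/p$ and the dead ones are i.i.d.~from $\mu(\cdot\cap\mathsf{B}_n^c)/(1-p)$, independently.

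By exchangeability every pair $(i,j)$ with $i \neq j$ in $\{1,\dots,t-1\}$ contributes the same conditional expectation, so a four-case enumeration (AA, AD, DA, DD) gives
\[
\mathbb{E}[(\eta_n^{T_n})^{\odot 2}(\varphi)\mid T_n = t,\mathscr{F}_{n-1}] = \frac{(N-1)(N-2)}{(t-1)(t-2)}\mu^{AA}(\varphi) + \frac{(N-1)(t-N)}{(t-1)(t-2)}\bigl[\mu^{AD}(\varphi)+\mu^{DA}(\varphi)\bigr] + \frac{(t-N)(t-N-1)}{(t-1)(t-2)}\mu^{DD}(\varphi),
\]
where $\mu^{AA}(\varphi) := \mu^{\otimes 2}(\mathbb{I}_{\mathsf{B}_n\times\mathsf{B}_n}\varphi)/p^2$ and $\mu^{AD},\mu^{DA},\mu^{DD}$ are the analogous restricted/normalized integrals of $\varphi$ against $\mu\otimes\mu$. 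Integrating over $T_n$ then reduces the claim to the three moment identities
\[
\mathbb{E}\!\left[\frac{(N-1)(N-2)}{(T_n-1)(T_n-2)}\,\Big|\,\mathscr{F}_{n-1}\right] = p^2, \quad \mathbb{E}\!\left[\frac{(N-1)(T_n-N)}{(T_n-1)(T_n-2)}\,\Big|\,\mathscr{F}_{n-1}\right] = p(1-p), \quad \mathbb{E}\!\left[\frac{(T_n-N)(T_n-N-1)}{(T_n-1)(T_n-2)}\,\Big|\,\mathscr{F}_{n-1}\right] = (1-p)^2.
\]

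I would verify the first identity directly from the PMF $\binom{t-1}{N-1}p^N(1-p)^{t-N}$ using the series identity $\sum_{u\geq 0}\binom{u+k-1}{u}(1-p)^u = p^{-k}$, exactly in the spirit of the computation $\mathbb{E}[(N-1)/(T_n-1)\mid\mathscr{F}_{n-1}] = p$ underlying Lemma \ref{lem:tech_res}; the second follows by the same type of shift-of-summation. The third is then forced by the algebraic identity $(T_n-1)(T_n-2) = (N-1)(N-2) + 2(N-1)(T_n-N) + (T_n-N)(T_n-N-1)$, so only two independent evaluations are actually needed. Substituting back and collecting the four restricted integrals into $\mu\otimes\mu$ yields
\[
\mathbb{E}[(\eta_n^{T_n})^{\odot 2}(\varphi)\mid\mathscr{F}_{n-1}] = p^2\mu^{AA}(\varphi) + p(1-p)[\mu^{AD}(\varphi)+\mu^{DA}(\varphi)] + (1-p)^2\mu^{DD}(\varphi) = \mu^{\otimes 2}(\varphi),
\]
which is $\Phi_n(\eta_{n-1}^{T_{n-1}})^{\otimes 2}(\varphi)$ as required.

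The main obstacle is the careful bookkeeping in the exchangeability/conditioning step (the uniform placement of the $N-1$ alive indices, and independence of values given the placement) together with the generalized negative binomial moment identities; the rest is assembly. The hypothesis $N \geq 3$ is used precisely to guarantee $T_n \geq 3$ a.s., so that the denominator $(T_n-1)(T_n-2)$ in the U-statistic is nonzero and every summation shift of index $k=N-2$ or $k=N-1$ used in the moment computations converges.
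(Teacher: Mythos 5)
Your proof is correct and follows essentially the same route as the paper: condition on $T_n$ and $\mathscr{F}_{n-1}$, enumerate pairs by their alive/dead status using exchangeability and the conditional distribution of the samples given $T_n$, and evaluate the resulting negative-binomial inverse moments (the paper cites Neuts \& Zacks (1967) and Zacks (1980) for $\mathbb{E}\big[\tfrac{(N-1)(N-2)}{(T_n-1)(T_n-2)}\mid\mathscr{F}_{n-1}\big]=p^2$ rather than deriving it from the PMF as you do). Your observation that the third coefficient identity is forced by $(T_n-1)(T_n-2)=(N-1)(N-2)+2(N-1)(T_n-N)+(T_n-N)(T_n-N-1)$ is a small but genuine simplification over the paper's more laborious treatment of the $(\mathsf{B}_n^c)^2$ term via $\tfrac{1}{t-2}=\tfrac{1}{t-1}+\tfrac{1}{(t-1)(t-2)}$.
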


\begin{proof}
We have:
$$
\mathbb{E}[(\eta_n^{T_n})^{\odot 2}(\varphi)|\mathscr{F}_{n-1}] =
$$
\begin{equation}
\mathbb{E}\Big[\frac{(N-1)(N-2)}{(T_n-1)(T_n-2)}\Big|\mathscr{F}_{n-1}\Big] \frac{\Phi_n(\eta_{n-1}^{T_{n-1}})^{\otimes 2}(\mathbb{I}_{\mathsf{B}_{n}^2}\varphi)}
{\Phi_n(\eta_{n-1}^{T_{n-1}})^{\otimes 2}(\mathsf{B}_{n}^2)} +
\label{eq:two_succ}
\end{equation}
\begin{equation}
\mathbb{E}\Big[
2(N-1)\Big(\frac{1}{T_n-1} -
\frac{N-2}{(T_n-1)(T_n-2)}
\Big)
\Big|\mathscr{F}_{n-1}\Big]
\frac{\Phi_n(\eta_{n-1}^{T_{n-1}})^{\otimes 2}(\mathbb{I}_{\mathsf{B}_{n}\times \mathsf{B}_{n}^c}\varphi)}
{\Phi_n(\eta_{n-1}^{T_{n-1}})(\mathsf{B}_{n})\Phi_n(\eta_{n-1}^{T_{n-1}})(\mathsf{B}_{n}^c)} +
\label{eq:one_succ}
\end{equation}
\begin{equation}
\mathbb{E}\Big[
\Big(1-\frac{N-1}{T_{n}-1}-\frac{N-1}{T_{n}-2}+-\frac{(N-1)^2}{(T_{n}-1)(T_{n}-2)}\Big)
\Big|\mathscr{F}_{n-1}\Big]
\frac{\Phi_n(\eta_{n-1}^{T_{n-1}})^{\otimes 2}(\mathbb{I}_{(\mathsf{B}_{n}^c)^2}\varphi)}
{\Phi_n(\eta_{n-1}^{T_{n-1}})^{\otimes 2}((\mathsf{B}_{n}^c)^2)}.
\label{eq:no_succ}
\end{equation}
The three terms on the R.H.S.~arise due to the $(N-1)(N-2)$ different pairs of particles which land in $\mathsf{B}_{n}^2$ \eqref{eq:two_succ}, the
$2(N-1)(T_n-N)$ pairs of different particles which land in $\mathsf{B}_n\times \mathsf{B}_n^c$ \eqref{eq:one_succ} and the 
$(T_n-N)(T_n-N-1)$ different pairs of particles which land in $(\mathsf{B}_n^c)^2$ \eqref{eq:no_succ}; the factors of $\Phi_n(\eta_{n-1}^{T_{n-1}})$ arise
from the conditional distributions of the particles given $T_n$ (recalling that conditional on $\mathscr{F}_{n-1}$, $T_n$ is a negative binomial random variables parameters $N$
and $\Phi_n(\eta_{n-1}^{T_{n-1}})(B_{\epsilon}(y_n))$).

Now for \eqref{eq:two_succ}, we have from  Neuts \& Zacks (1967) and Zacks (1980) that
$$
\mathbb{E}\Big[\frac{(N-1)(N-2)}{(T_n-1)(T_n-2)}\Big|\mathscr{F}_{n-1}\Big] = \Phi_n(\eta_{n-1}^{T_{n-1}})^{\otimes 2}(\mathsf{B}_{n}^2).
$$ 
so that \eqref{eq:two_succ} becomes
$$
\Phi_n(\eta_{n-1}^{T_{n-1}})^{\otimes 2}(\mathbb{I}_{\mathsf{B}_{n}^2}\varphi).
$$
Recalling \eqref{eq:neg_binom_ineq} and using the above result, 
\eqref{eq:one_succ} becomes
$$
2\Phi_n(\eta_{n-1}^{T_{n-1}})^{\otimes 2}(\mathbb{I}_{\mathsf{B}_{n}\times \mathsf{B}_{n}^c}\varphi).
$$
Finally, noting that for any $t\neq 1,2$ $1/(t-2)=1/(t-1) + 1/[(t-1)(t-2)]$, and thus using the above results that
$$
\mathbb{E}\Big[\frac{N-1}{T_n-2}\Big|\mathscr{F}_{n-1}\Big] = \Phi_n(\eta_{n-1}^{T_{n-1}})^{\otimes 2}(\mathsf{B}_{n})
+\frac{\Phi_n(\eta_{n-1}^{T_{n-1}})^{\otimes 2}(\mathsf{B}_{n})^2}{N-1}
$$ 
it follows that \eqref{eq:no_succ} is equal to
$$
\Phi_n(\eta_{n-1}^{T_{n-1}})^{\otimes 2}(\mathbb{I}_{(\mathsf{B}_{n}^c)^2}\varphi).
$$
Hence we have shown
\begin{eqnarray*}
\mathbb{E}[(\eta_n^{T_n})^{\odot 2}(\varphi)|\mathscr{F}_{n-1}] & = & \Phi_n(\eta_{n-1}^{T_{n-1}})^{\otimes 2}(\mathbb{I}_{\mathsf{B}_{n}^2}\varphi)
+ 2\Phi_n(\eta_{n-1}^{T_{n-1}})^{\otimes 2}(\mathbb{I}_{\mathsf{B}_{n}\times \mathsf{B}_{n}^c}\varphi)
+ \Phi_n(\eta_{n-1}^{T_{n-1}})^{\otimes 2}(\mathbb{I}_{(\mathsf{B}_{n}^c)^2}\varphi)\\ & = &
\Phi_n(\eta_{n-1}^{T_{n-1}})^{\otimes 2}(\varphi).
\end{eqnarray*}
\end{proof}

\end{document}